\algnewcommand\algorithmicforeach{\textbf{for each}}
\DeclareMathOperator*{\argmin}{arg\,min}
\newtheorem{proposition}{Proposition}[section]
\newtheorem{theorem}{Theorem}[section]
\newtheorem{remark}{Remark}[section]
\theoremstyle{definition}
\newtheorem{definition}{Definition}[section]
\newtheorem{corollary}{Corollary}[proposition]
\newtheorem{lemma}[theorem]{Lemma}
\newcommand{\vast}{\bBigg@{4}}
\newcommand{\Vast}{\bBigg@{5}}
\begin{document}
%
\title{Guessing Cost: Bounds and Applications to Data Repair in Distributed Storage}
%
%
%

\author{Suayb~S.~Arslan,~\IEEEmembership{Senior Member,~IEEE,}
        and~Elif~Haytaoglu,~\IEEEmembership{Member,~IEEE}
\thanks{{S. S. Arslan is the corresponding author and currently affiliated with the Department
of Brain and Cognitive Sciences, Massachusetts Institute of Technology, Cambridge, MA, USA. e-mail: sarslan@mit.edu.}}
\thanks{E. Haytaoglu is  with the Department
of Computer Engineering, Pamukkale University, Denizli, Turkey.}
\thanks{This work is partially presented in {two different} IEEE International
Symposium on Information Theory (ISIT) conferences which were held in
2020 (Los Angeles, USA) { and in 2022 (Espoo, Finland)}.}
\thanks{This work is supported by
TUBITAK under grant number 119E235.}}

%
%

\markboth{Submitted to IEEE Transactions on Information Theory ,~Vol.~NA, No.~NA, March~2022}%
{Shell \MakeLowercase{\textit{et al.}}: Bare Demo of IEEEtran.cls for IEEE Journals}
%



\maketitle

\begin{abstract}
The guesswork refers to the distribution of the minimum number of trials needed to guess a realization of a random variable accurately. In this study, a non-trivial generalization of the guesswork called \textit{guessing cost} (also referred to as cost of guessing) is introduced, and an optimal strategy for finding the $\rho$-th moment of guessing cost is provided for a random variable {defined} on a finite set whereby each choice is associated with a positive finite cost value {(unit cost corresponds to the original guesswork)}. Moreover, we drive asymptotically tight upper and lower bounds on the {logarithm of guessing cost moments.}  Similar to previous studies on the guesswork, established bounds on the moments of guessing cost quantify the accumulated cost of guesses required for correctly identifying the unknown choice and are expressed in terms of Rényi's entropy. Moreover, new random variables are introduced to establish connections between the guessing cost and the guesswork, {leading to induced strategies}. Establishing this implicit connection {helped us obtain} improved bounds {for} the non-asymptotic region. As a consequence, we establish the \textit{guessing cost exponent} in terms of Rényi \textit{entropy rate} on the moments of the guessing cost using {the} optimal strategy by considering a sequence of independent random variables with different cost distributions. Finally, with slight modifications to the original problem, these results are shown to be  {applicable} for bounding the overall repair bandwidth for distributed data storage systems {backed up by base stations and protected by bipartite graph codes.} 
\end{abstract}

\begin{IEEEkeywords}
Guessing, entropy, moments, bounds, cellular networks,  sparse graph codes, {LDPC,} repair bandwidth.
\end{IEEEkeywords}

%
\IEEEpeerreviewmaketitle

\section{Introduction}
%
%
%
%
\IEEEPARstart{T}{he} typical guessing framework involves finding the
value of a realization of a random variable $X$ from a finite or countably infinite set  $\mathcal{X}$ by asking
a series of questions ``Is $X$ equal to $x \in \mathcal{X}$?'' until the answer becomes
“Yes”. {What makes guessing framework challenging} is that each answer typically {affects} the following questions and associated answers {in which} the number of questions {is not necessarily} fixed {a priori, whereas} the questions {are determined based on a fixed strategy} before the decision about the guess is finalized. 

{Given the distribution of $X$, denoted by $P_X(x)$,} the ultimate objective of  guessing {framework} is to find the distribution of the number of questions (guesses) {before identifying the right answer}. In an attempt to optimize the order of these questions, an optimal guessing strategy i.e., a bijective function from $\mathcal{X}$ to a finite or countably infinite set $[|\mathcal{X}|] \triangleq \{1,\dots,|\mathcal{X}|\}$ is adapted to typically minimize the average number of guesses, also known as the average \textit{guessing number}. In \cite{pliam99}, this problem is named as \textit{guesswork} and lower and upper bounds are investigated on the guessing number in terms of Shannon's entropy by Massey \cite{massey1} and later on by McElice and Yu \cite{mceliceyu}. A sequence of independent and identically  distributed random variables $X_1,\dots,X_n$ are considered for practical applications and asymptotically tight bounds are derived on the moments of the expected number of guesses for {the} guesswork \cite{arikan1996}. This study has related the asymptotic exponent of the best achievable guessing
moment to the Rényi's entropy. {Later, bounds on the moments of optimal guessing are improved in \cite{Bozdas1997} and subsequently in \cite{sason2018}. Particularly, the relationship between Rényi's entropy and average guessing number is interesting and useful in different engineering contexts.}  In fact, Rényi's entropy was a frequently used information measure in different contexts such as source coding to be able to generalize coding theorems in the past \cite{cambell56}. Such findings on the derived bounds are successfully applied to various recent applications of data compression \cite{kuzuoka2019}, channel coding \cite{duffy2019}, networking and data storage security \cite{bracher2019} through tweaking the original guesswork problem so that it fits within the requirements of the  application at hand.

In many practical scenarios, making a guess about the state of a system (in a physical realm) or the unknown value of a random variable ({both} in presence {or absence} of side information \cite{arikan1996} {or compressed side information \cite{Graczyk2022}} {might} lead to a certain amount of cost. {In our work,} using the same set-up, unlike the guesswork, the random variable will be associated with the cost set $\mathcal{C} = \{c_1,\dots,c_{|\mathcal{C}|}\}$ and the additive term in the computation of average cost for $x\in \mathcal{X}$ would be $\sum_x c_xP_X(x)$. Hypothesizing this cost measure to typically represent the potential risks and consequences that may arise from making an incorrect guess, unity costs will correspond to the guesswork as the baseline. With this generalization, the impact of making an error could be high enough to justify taking the time and effort to gather more information or perform additional measurements to reduce the uncertainty.  {Consequently}, making a choice among multiple possibilities may lead to different types and amounts of costs overall where we would refer it as the \textit{cost of guessing} or simply \textit{guessing cost} throughout. In {general}, these costs may dynamically be changing after making subsequent guesses about a series of random variables $\{X_i\}_{i=1}^n$ {not necessarily independent}. Independent and identically distributed random variables within the context of guesswork is thoroughly studied in the literature and some extensions to {ergodic} Markovian dependencies are also considered \cite{malone2004}. {More complex dependencies such as the one formed by shift spaces are considered for a sequence of random variables in \cite{pfister2004}}. To our best of knowledge, the cost of guessing is only mentioned recently in \cite{kuzuoka2019} in a limited context whereby the guesser is allowed to stop guessing and declare an error and only then a fixed amount of cost is applied, otherwise the mechanism is identical to the guesswork. {In addition,} the {definition of} ``cost'' is expanded in the context of guessing in \cite{arslanisit2020} {to cover each choice to have an individual numerical cost value} {and a few improved bounds are provided later in \cite{arslanisit2022}}.

\subsection{Background and {Past Applications}}

{There are numerous applications involving guesswork and its various uses, with many of them being related to cryptography and data correction. For instance,} in the field of security known as the {(public)} keyword guessing \cite{noroozi2019} around the cryptographic notion called \textit{searchable encryption} \cite{Song2000}. On the other hand, guessing is recognized to be a useful analysis tool for data detection and error correction coding as well. For instance, it is shown that the cut-off rate of sequential decoding can easily be characterized if guessing theory is applied to the general idea of decoding of a tree code \cite{arikan1996}. {The application of guessing to coding theory dates back to Ulam's problem \cite{Ulam1988} where one is allowed to lie in their responses \cite{ArikanGwLies}.} More recently, capacity-achieving maximum likelihood decoding algorithms are developed in a data communication context based on guessing \cite{duffy2019}. Later, these studies evolved to develop universal decoders especially for low-latency communication scenarios \cite{duffy2022}. Moreover, thanks to the optimal strategy  that lists most likely noise sequences and low implementation complexity, guessing framework is demonstrated to be a viable decoding option for the control channel of 5G networks \cite{solomon2020}. Therefore, {we believe that} extending the idea of guessing by associating a cost with each choice would be quite powerful and will find plenty of interesting applications in communications engineering of future generation standards. Distributed systems constitute yet another application area in which cost of data communication depends on the link loads, node availabilities and current traffic at the time of data communication etc. Moreover base stations (BS) could also be used to help with the network data reconstruction processes at the expense of increased costs \cite{haytaoglu2022}. {BS can help reduce the time needed to reconstruct data, as well as reduce the average cost of the process.} Such costs can be expressed in terms of latency, bandwidth used to transfer information or computation complexity  depending on the context. 

{
Recently, there have been a multitude of diverse research efforts undertaken to investigate the general concept of guessing across various domains. For instance, in \cite{Brocher2015}, security attacks on distributed storage systems in which an attacker can use one hint on the  sensitive data is analyzed. In another study \cite{Christiansen2015}, bounds for a specific setting in which simultaneous guesses can be made is investigated. Furthermore, the study presented in \cite{Huleihel2017} delves into the subject of restricted-memory guessing, wherein the individual making the guesses is limited by their ability to recall their previous attempts. In  a more recent study, \cite{Kumar2022} focuses on developing general framework on well-known problems related to guesswork such as, source coding, task partitioning, etc. Among these studies, both \cite{Christiansen2015} and \cite{Huleihel2017} can be extended with the assignment of ``costs" for guessing random variables. As for the case in \cite{Huleihel2017},  the representation size of each variable can be associated with a form of cost value since guessing a particular value can incur different usage patterns in memory, whereas the number of simultaneous guesses, i.e. the number of attacking computers, can cause additional cost in the system. The study in \cite{Christiansen2015} can be further extended by taking into account the same specific parameter.
}

\begin{figure}[t!]
    \centering
    \includegraphics[width=0.75\textwidth]{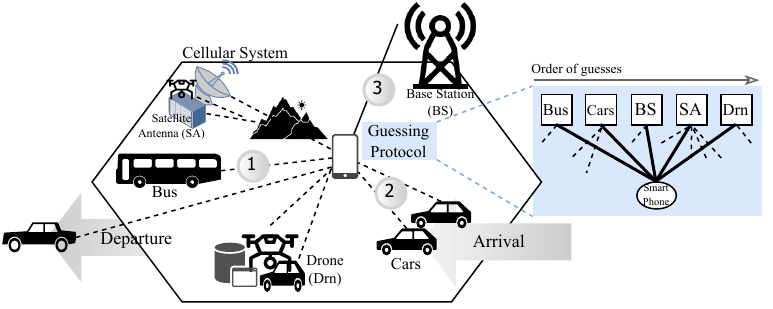}
    \caption{{A motivating example for data regeneration scenario using a protocol based on the guessing cost. The cellular phone (a node) first attempts the data-carrying bus ({\large \textcircled{\small 1}}), then two cars ({\large \textcircled{\small 2}}), respectively, since downloading data from two different cars is more expensive than from a bus. Upon unavailability or download failure, the node finally retrieves the data from the base station ({\large \textcircled{\small 3}}). Each connectivity has its own cost measure and is a function of the code used to protect data. For instance, at time of the regeneration, the drone group (due to mobility) and satellite antenna site (due to obstructions) connections were too costly to be used.}}
    \label{fig:motexample}
\end{figure}

\subsection{{Motivating Example and Contribution}}

{In various applications of dynamic cellular networks, data is partitioned and disseminated among multiple nodes of the system, with nodes joining and leaving the cell at unpredictable times. An example scenario is provided in Fig. \ref{fig:motexample}. Because of the frequently changing status of the nodes in the cell, it is difficult to be informed of data location in real time. The data on the departed nodes may be lost indefinitely, necessitating data regeneration, while the system only has minimal knowledge about the cached data whereabouts. Data may be downloaded from either local nodes or the base station, with the cost of each option depending on the physical distance of the nodes, potential obstructions for the line of sight, available bandwidth, or even the popularity of the file being regenerated/cached.}

{As can be seen in Fig. \ref{fig:motexample}, a guessing protocol is required to run in a smartphone to regenerate the needed data piece. As can be seen, the protocol makes two attempts based on a predefined strategy to locate the data. Upon unavailability, the third attempt has been to download it from the base station. Another intriguing use case could be searchable encryption \cite{Song2000}. One can typically spend more time searching for a cipher keyword in an encrypted document depending on its size or number of defined keywords, leading to varying processing requirements. We finally notice that all such cost considerations can be integrated into our generalized guessing cost {framework}.}

Motivated from such {examples}, in this study, we {introduce} the guessing cost, and derive optimal guessing strategies {(the ones that minimize {the various statistics about the cost}  as well as asymptotically tight bounds by using a quantity related to the Rényi's entropy {for the expected values of real powers a.k.a. moments of the guessing cost}.
Note that numerical calculation of moments of guessing cost might be computationally feasible for small $|\mathcal{X}|$, however {when} we consider a series of random variables $\{X_i\}_{i=1}^n$ {each defined on the same set $\mathcal{X}$,} finding the optimal guessing strategy for minimum average {or} the moments of {the guessing} cost would be computationally intractable (exponential in $n$), {motivating us for finding tight bounds}. We also have shown that {bounds on the} guessing cost of a sequence of independent random variables (not necessarily identically distributed) can be expressed in terms of \textit{Rényi entropy rate} {(Theorem \ref{thmexponent})}, {which is defined for order-$\alpha$ ($\alpha \in {\mathbb{R}^+}$) as
{
\begin{align}
    \mathcal{R}_\alpha (\{X_i\}) \triangleq 
    \begin{cases}
   \lim_{n\rightarrow \infty} \frac{H(\{X_i\})}{n}, & \text{ if } \alpha = 1,\\
     \lim_{n\rightarrow \infty} \frac{H
_\alpha(\{X_i\})}{n}, & \text{Otherwise}
    \end{cases}
\end{align}}
{as long as} the limits exist, where $H
_\alpha(\{X_i\})$ is the joint Rényi entropy \cite{renyi1960} {and $H(\{X_i\})$ is the Shannon entropy} of the sequence $\{X_i\}_{i=1}^n$}. On the other hand, the computation of the moments of guessing cost for independent {sequence of} random variables $\{X_i\}_{i=1}^n$ is observed to be linear in $n$. Our results are asymptotically tight i.e., as $n \rightarrow \infty$ we characterize the exponential growth rate of the moments of guessing cost. Several improved bounds are conjectured for the non-asymptotic region based on an established connection with the guesswork. {Moreover,} we realized that our findings for guessing cost can be easily applied to {an example} distributed data storage scenario, {as depicted in Fig. \ref{fig:motexample}}, where nodes are repaired using a master-based {regeneration} protocol and graph-based codes such as low density parity check (LDPC) codes \cite{LDPC1963} in the event of node failures or unexpected node departures from {a base station's} coverage \cite{haytaoglu2022}. 

\subsection{Organization}

The rest of the paper is organized as follows. In Section II, the problem is formally stated and necessary and sufficient conditions are laid out for an optimal guessing strategy that minimizes the moments of guessing cost. In addition, {distinct guesses} for costs are discussed along with an algorithm that describes an optimal guessing strategy for non-configurable costs. In Section III, tight upper and lower bounds are provided {for  guessing cost moments of a random variable and the logarithm of the guessing cost moments of a series of random variables}. While deriving the upper bound, new random variables are introduced and the connection with the guesswork is established. This connection helped us observe that the previous findings may be utilized to find/characterize tighter bounds in the non-asymptotic regime for the guessing cost. In Section IV, {an example} distributed storage scenario is considered where a long--blocklength LDPC code is utilized along with a {guessing} protocol which uses a master node/base station to help with the data regeneration process. It is shown that the data repair problem of an LDPC code can be considered within the context of guessing cost. Several numerical results are provided  before we conclude our paper in Section V.

\subsection{Notation}

In our work, $X$ denotes the random variable whose values are selected from the set $\mathcal{X}$ according to a probability distribution i.e., $X \sim P_X(x)$ where $\sim$ denotes ``distributed''. We use {$| \ . \ |$} to denote the cardinality of a set or absolute value depending on the context and $[M]$ to mean the {index} set $\{1,2,\dots, M\}$. $\mathbb{E}[.]$ is the expectation operator. We denote the order-$\alpha$  Rényi's entropy by $H_\alpha(X)$ and Rényi's rate by $\mathcal{R}_\alpha(X)$ for a given random variable $X$. Also, $(z)^+ \triangleq \max\{z, 0\}$ for $z \in \mathbb{R}$. For a given length-$n$ vector $\mathbf{x}=(x_1,\dots,x_n)$, the $a$--norm is defined to be $||\mathbf{x}||_a = \left(\sum_{i=1}^n |x_i|^a \right)^{1/a}$ for any positive real $a \geq 1$. Also for integers $c$ and $l\leq n$, we define $\mathbf{x}^{(l)} \triangleq (x_1,x_2,\dots,x_l)$ and $\mathbf{x}^{(l)}+c \triangleq (x_1+c,x_2+c,\dots,x_l+c)$.

\section{Problem Statement and Guessing Strategy}

Let $C_\mathcal{G}(x)$ denote the guessing cost required by a particular guessing strategy $\mathcal{G}: \mathcal{X} \rightarrow [M]$ when $X = x$ (the realization of the random variable $X$ is $x$). If the cost of making {the} guess $X = x$ is independent of other guesses, {each having unit cost}, then this problem would be the same as the characterization of the average number guesses (\textit{expected guessing number}) and is identical to Massey's original guessing problem introduced earlier \cite{massey1}. 

Let us assume that the random variable $X$ can take on values from a finite set $\mathcal{X} = \{x_1,\dots,x_M\}$ \ according to a distribution $P_X(x)$ with  {the associated set of costs $\mathcal{C} = \{c_{i}\in \mathbb{R}^+, c_i \geq 1 | 1 \leq i \leq M\}$ and { $M \in \mathcal{Z^+} $}.}
Without loss of generality, sets are assumed to have cardinalities $|\mathcal{X}|=|\mathcal{C}|=M$ in which using a particular guessing strategy $\mathcal{G}$, the probability that a randomly selected element of $\mathcal{X}$ can be found in the $i$-th guess is $p_i = P_X({\mathcal{G}^{-1}(i)})$ with the associated cost {$c_i = c_{\mathcal{G}^{-1}(i)}$}, independently of already made guesses. Then, the average guessing cost using the strategy $\mathcal{G}$ can be expressed as follows
\begin{align}
\mathbb{E}[C_\mathcal{G}(X)] = \sum_{i=1}^M \sum_{j=1}^{i} c_j p_i = \sum_{i=1}^M f_i p_i = \sum_{i=1}^M c_i \left(1 - g_{i-1} \right)
\end{align}
where $f_i=\sum_{j=1}^{i} c_j$ and $g_{i} = \sum_{j=1}^{i}p_j$ are cumulative cost and probability distributions. The minimization of this value is a function of both guessing strategy $\mathcal{G}$  and the probability distribution of $X$.


{In the context of guessing, One of the fundamental questions arises: when provided with the probability distribution $P_X(x)$ and $\rho>0$, what strategy, denoted as $\mathcal{G}^*$, can minimize $\mathbb{E}[C_\mathcal{G}(X)^{\rho}]$, i.e.,
\begin{eqnarray}\label{eq:optimalStrategy}
\mathcal{G}^* \triangleq \argmin_{\mathcal{G}} \mathbb{E}[C_\mathcal{G}(X)^\rho].
\end{eqnarray}}
{For $c_i = 1, \forall i \in [M], \rho=1$}, the optimal strategy is studied and well known i.e., guess the
possible values of $X$ in the order of non-increasing probabilities \cite{massey1}. In other words, without loss of generality, we can assume  $\mathbf{p}^{(M)}$ with $p_1 \geq p_2 \geq \dots \geq p_{M}$ being the probabilities of choosing values from $[M]$ and $\mathcal{G}(X = x_i) = i$. Then with this choice, the quantity $\sum_i i p_i$ would be minimized. However, the same conclusion cloud not be easily drawn for an arbitrary vector of costs $\mathbf{c}^{(M)}$ {$\in \{ {\mathbb{R}^+ \setminus [0,1)\}}^{M}$}. 
Let us consider two possible cost selection scenarios based on the timing with respect to when exactly the guesses are made, that will have different implications.  


In the {first} scenario, although the cost {values are}  given i.e., $\mathcal{C}$, the assignments are not made {a priori} i.e., costs can be associated with each choice as it minimizes the average/moments of guessing cost in the beginning of {or during} the guessing process. {To illustrate this scenario, let us assume that we are in the situation of transporting a water tank to a fireplace. The collection of M tanks at our disposal, despite having the same capacity, is made of different materials, incurring different expenses. Furthermore, we have $M$ different vehicles, each with a different chance of successfully reaching the firing zone. In this situation, our primary purpose is to transport a single tank to the given site. Such a work gives us the freedom to attach any tank to any vehicle, with preset prices associated to each choice but costs that may vary based on the precise tank and vehicle coupling chosen.} {Given the configurable costs}, the best strategy  for $\rho=1$ is to guess the
possible values of $X$ in the order of non-increasing probabilities and associate the more probable choice with the smallest cost value. In other words, for any assignment (a permutation of $\mathbf{c}^{(M)}$) $\Tilde{\mathbf{c}}^{(M)} = (\tilde{c}_1,\tilde{c}_2,\dots,\tilde{c}_M)$ with $\tilde{c}_1 \leq \tilde{c}_2 \leq \dots \leq \tilde{c}_M$ and $\tilde{c}_i \in \mathbf{c}^{(M)}$, it is easy to see that for any $\rho > 0$, we have
\begin{eqnarray}
\sum_{i=1}^M \left( \sum_{j=1}^{i} c_j \right)^\rho p_i \geq \sum_{i=1}^M \left( \sum_{j=1}^{i} \tilde{c}_j \right)^\rho p_i.
\end{eqnarray}

In case the cumulative costs are given by the moments of the guessing number i.e., $f_{i}=i^\rho$  for any $\rho \geq 1$, then it is easy to see that $c_i = i^\rho - (i-1)^\rho$ which implies that $c_1 \leq c_2 \leq \dots \leq c_M$ is satisfied. Thus, the {optimal} strategy would again be to guess the possible values of $X$ in the order of non-increasing probabilities as argued in \cite{arikan1996}. 



{The second scenario, in which costs associated with each choice are externally determined, involves finite costs and choices predetermined before the guessing process begins. This typical scenario is examined throughout the article.}  In this case, the best strategy for {$\rho=1$} would not necessarily be guessing the 
possible values of $X$ in the order of non-increasing probabilities. 

\textbf{{Example:}} {Suppose that} there are three choices $1,2,3$ ($M=3$) and $\rho=1$ with $\{1,p_1=0.5,c_1=20 \}$, $ \{2,p_2=0.4,c_2=2 \}$ and $\{3,p_3=0.1,c_3=1\}$. In that case the guessing order expressed as the {index} set $\{2,3,1\}$ would be preferable  over the set $\{1,2,3\}$ with the average costs being 12.6 and 21.1, respectively. Note that the latter choice, which is based on the order of non-increasing probabilities, is clearly not optimal.  

We next provide the following proposition {establishing} a necessary condition for the optimal guessing strategy $\mathcal{G}^*$ for non-configurable costs.  

\begin{proposition} \label{prop0}
For a given $\rho > 0$ and an optimal guessing strategy, namely $\mathcal{G}^*$ {(or $\mathcal{G}^*_\rho$\footnote{{The subscript indicates the dependency of the optimal strategy on the choice of $\rho$. However, we omit this notation unless it is absolutely necessary to simplify the notation.}})} for the $\rho$-th moment of guessing cost, we have the following necessary condition for all $i, j \in [M]$ satisfying $i \leq j$,
\begin{align}
    \left[ ||\mathbf{c}^{(i)}||_{1}^\rho - ||\mathbf{c}^{(j)}||_{1}^\rho \right] p_i + \left[ ||\mathbf{c}^{(j)}||_{1}^\rho - (||\mathbf{c}^{(i)}||_{1}-c_i+c_j)^\rho \right] p_j \leq  \sum_{l=i+1}^{j-1}  \left[  (||\mathbf{c}^{(l)}||_{1}-c_i+c_j)^\rho - ||\mathbf{c}^{(l)}||_{1}^\rho \right] p_l.  \label{necessarycon}
\end{align}
\end{proposition}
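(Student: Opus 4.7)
The plan is a standard swap/exchange argument: I would show that if any pair of positions $i<j$ in the optimal guessing order can be swapped without increasing the $\rho$-th moment, then that swap gives the stated inequality as a direct necessary condition for optimality.

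First I would fix an optimal strategy $\mathcal{G}^\ast$ and, relabeling so that $\mathcal{G}^\ast(x_k)=k$, write the $\rho$-th moment of guessing cost in the compact form
\begin{equation}
\mathbb{E}[C_{\mathcal{G}^\ast}(X)^\rho]=\sum_{k=1}^M \|\mathbf{c}^{(k)}\|_{1}^{\rho}\, p_k.
\end{equation}
Then for any indices $i<j$ in $[M]$ I would define a perturbed strategy $\mathcal{G}'$ that guesses the value originally at position $j$ at position $i$, the value originally at position $i$ at position $j$, and is identical to $\mathcal{G}^\ast$ at every other position. The probability associated with position $i$ thus becomes $p_j$ and that at position $j$ becomes $p_i$, while for $k\notin\{i,j\}$ the probability at position $k$ is unchanged.

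The key bookkeeping step is to track how the cumulative cost $f_k$ changes under $\mathcal{G}'$. For $k<i$ and $k>j$ nothing changes, so $f_k'=\|\mathbf{c}^{(k)}\|_1$. For $i\le k<j$ the cost $c_i$ at position $i$ has been replaced by $c_j$, so the new cumulative cost is $f_k'=\|\mathbf{c}^{(k)}\|_1-c_i+c_j$. At position $k=j$ the multiset of costs summed is unchanged, hence $f_j'=\|\mathbf{c}^{(j)}\|_1$. Plugging these into the moment formula gives
\begin{equation}
\mathbb{E}[C_{\mathcal{G}'}(X)^\rho]-\mathbb{E}[C_{\mathcal{G}^\ast}(X)^\rho] = (\|\mathbf{c}^{(i)}\|_1-c_i+c_j)^\rho p_j - \|\mathbf{c}^{(i)}\|_1^\rho p_i + \|\mathbf{c}^{(j)}\|_1^\rho p_i - \|\mathbf{c}^{(j)}\|_1^\rho p_j + \sum_{l=i+1}^{j-1}\bigl[(\|\mathbf{c}^{(l)}\|_1-c_i+c_j)^\rho-\|\mathbf{c}^{(l)}\|_1^\rho\bigr]p_l.
\end{equation}

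By optimality of $\mathcal{G}^\ast$ this difference is nonnegative. Grouping the $p_i$ and $p_j$ terms on the left-hand side yields exactly \eqref{necessarycon}, completing the argument. I do not expect a genuine obstacle here; the only thing to be careful about is verifying that the cumulative cost at position $j$ is preserved under the swap (since the same costs $c_1,\dots,c_j$ are summed, just in a different order) and that positions outside $[i,j]$ are completely unaffected, so that the algebraic difference reduces to the four boundary terms plus the interior sum shown above.
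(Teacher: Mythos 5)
Your proof is correct and follows essentially the same route as the paper: the paper's Appendix A also uses a pairwise swap of positions $i$ and $j$ and asserts that the non-improvement condition $\mathbb{E}[C_{\mathcal{G}_{i,j}}(X)^\rho] - \mathbb{E}[C_{\mathcal{G}_{j,i}}(X)^\rho] \leq 0$ yields \eqref{necessarycon} ``through some algebra.'' Your explicit bookkeeping of the cumulative costs (prefixes $i \le k < j$ shifted by $-c_i + c_j$, position $j$ and all positions outside $[i,j]$ unchanged) is exactly that algebra, worked out in more detail than the paper provides.
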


\begin{corollary} \label{cor211}
Furthermore, if $\rho \geq 1$, then for any $i \leq j$ the condition \eqref{necessarycon} can be simplified to 
\begin{align}
\left[||\mathbf{c}^{(i+1)}||_{1}^\rho - (||\mathbf{c}^{(i+1)}||_{1}-c_i)^\rho\right]p_j \leq \left[||\mathbf{c}^{(j)}||_{1}^\rho - ||\mathbf{c}^{(j-1)}||_{1}^\rho\right] p_i
\end{align}
\end{corollary}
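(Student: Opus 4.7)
The plan is to derive the Corollary by applying Proposition~\ref{prop0} to all consecutive index pairs and combining the resulting inequalities via a monotonicity property of $x\mapsto x^{\rho}-(x-c)^{\rho}$ that becomes available when $\rho\ge 1$. The base case $j=i+1$ is immediate: the sum on the right of \eqref{necessarycon} is empty and the identity $||\mathbf{c}^{(i)}||_{1}-c_i+c_{i+1}=||\mathbf{c}^{(i+1)}||_{1}-c_i$ reduces Proposition~\ref{prop0} to exactly the corollary's inequality. Thus ``Proposition at consecutive pairs'' and ``Corollary at consecutive pairs'' coincide, and it remains only to extend from consecutive pairs to an arbitrary pair $j>i+1$.

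For the general case, introduce the shorthand $S_k\triangleq ||\mathbf{c}^{(k)}||_{1}$ and $\psi_k^{(c)}\triangleq S_{k+1}^{\rho}-(S_{k+1}-c)^{\rho}$, so that the consecutive corollary reads $p_{k+1}\,\psi_k^{(c_k)}\le p_k\,\psi_k^{(c_{k+1})}$ for every $k$. The key auxiliary step will be a monotonicity lemma: for $\rho\ge 1$ and any fixed $c>0$, the map $x\mapsto x^{\rho}-(x-c)^{\rho}$ is non-decreasing on $x\ge c$, since its derivative $\rho\bigl(x^{\rho-1}-(x-c)^{\rho-1}\bigr)$ is non-negative when $\rho-1\ge 0$. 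Applied to the cumulative-cost abscissae, this yields $\psi_k^{(c)}\le\psi_{k+1}^{(c)}$ for every fixed $c$, since $S_{k+1}\le S_{k+2}$.

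The final step is to telescope: writing the consecutive inequalities in ratio form $p_{k+1}/p_k\le \psi_k^{(c_{k+1})}/\psi_k^{(c_k)}$ and multiplying from $k=i$ up to $k=j-1$ produces
\[
\frac{p_j}{p_i}\;\le\;\prod_{k=i}^{j-1}\frac{\psi_k^{(c_{k+1})}}{\psi_k^{(c_k)}}
\;=\;\frac{\psi_{j-1}^{(c_j)}}{\psi_i^{(c_i)}}\,\prod_{k=i}^{j-2}\frac{\psi_k^{(c_{k+1})}}{\psi_{k+1}^{(c_{k+1})}},
\]
after regrouping the $\psi$-factors (the equality is pure algebraic cancellation). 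Each factor in the residual product is at most $1$ by the monotonicity lemma, so $p_j\,\psi_i^{(c_i)}\le p_i\,\psi_{j-1}^{(c_j)}$. Substituting back $\psi_i^{(c_i)}=||\mathbf{c}^{(i+1)}||_{1}^{\rho}-(||\mathbf{c}^{(i+1)}||_{1}-c_i)^{\rho}$ and $\psi_{j-1}^{(c_j)}=S_j^{\rho}-S_{j-1}^{\rho}=||\mathbf{c}^{(j)}||_{1}^{\rho}-||\mathbf{c}^{(j-1)}||_{1}^{\rho}$ recovers the claimed inequality. The main obstacle will be the careful index bookkeeping in the telescoping step and handling the degenerate case of some $p_k=0$ (for which the consecutive inequality forces $p_{k+1}=0$, trivializing the claim). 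The hypothesis $\rho\ge 1$ is essential because for $\rho<1$ the map $x\mapsto x^{\rho}-(x-c)^{\rho}$ is non-increasing, so the final product bound reverses and the argument breaks down.
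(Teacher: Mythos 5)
Your proposal is correct and follows essentially the same route as the paper's own argument: both reduce to the consecutive-pair case of Proposition~\ref{prop0}, invoke the same convexity fact (your monotonicity of $x\mapsto x^{\rho}-(x-c)^{\rho}$ for $\rho\ge 1$ is exactly the paper's inequality $||\mathbf{c}^{(i+l)}||_{1}^\rho - (||\mathbf{c}^{(i+l)}||_{1}-c_{i+l-1})^\rho \geq ||\mathbf{c}^{(i+l-1)}||_{1}^\rho - ||\mathbf{c}^{(i+l-2)}||_{1}^\rho$, just phrased in the shifted form), and then telescope the resulting chain. Your write-up is somewhat more explicit about the algebraic regrouping of the $\psi$-factors and flags the degenerate $p_k=0$ case, but there is no conceptual divergence from the paper.
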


\begin{proof}
The proof of Proposition \ref{prop0} as well as Corollary \ref{cor211} can be found in Appendix A. 
\end{proof}

\begin{remark} \label{prop1}
It is clear from  Corollary \ref{cor211} that for a given optimal guessing strategy for the mean guessing cost (i.e., $\rho=1$), we must have $c_{i}p_{j} \leq c_{j}p_{i}$ for all $i, j \in [M]$ satisfying $i \leq j$. 
\end{remark}
 
\begin{remark}
We also note that there may be more than one optimal guessing strategy that would satisfy the conditions given above. Of these, one or more specific selections will result in the minimum guessing cost. The solution is unique only if the relation in the necessary condition \eqref{necessarycon} is a strict inequality. 
\end{remark}
\begin{remark}
{Note that in our setting due to the freedom of choosing cost values arbitrarily, the choice of $\rho$ may change the stochastic nature of guessing strategies i.e., the strategies for different $\rho$'s satisfying the condition in Proposition 2.1 are not necessarily the same and hence making a stochastic dominance argument across the optimal guessing strategies for distinct moments would not be possible.} 
\end{remark}

In observation of Proposition \ref{prop0}, let us provide an algorithmic solution to finding optimal {strategy for the minimum} guessing cost. We notice that if the order based on the inequality in the proposition is executed using a Bubble-sort\footnote{Bubble-sort is a sorting algorithm that works by repeatedly swapping the adjacent elements in a given list based on a condition.} style for the given strategy, the convergence to an optimal guessing would be guaranteed.  {Here, using Algorithm 1, we can find the optimal solution with the best and the worst time complexities, which is mainly dominated by the sorting processes, with  $\Omega(M)$ and $\Theta(M^2)$,  respectively.} An example naive algorithm that finds an optimal guessing strategy for minimizing the $\rho$-th moment of accumulated cost is provided in \textbf{Algorithm \ref{alg1}} where \texttt{swap}(.,.) function swaps the entries of a given array in the argument. Alternatively, Merge-sort \cite{mergeSort} or Heap-sort \cite{Williams1964algorithm} \ could also be applied, which would result in $\Theta(M \log M)$  worst and the average case complexities respectively. In the next section, we focus on the moments of the guessing cost whereby the average cost would be a special case. Furthermore, lower and upper bounds are derived in terms of a popular information theoretic measure, namely Rényi's entropy.

\begin{algorithm}[t!] \label{alg1}
\begin{algorithmic}[1]
\State \textbf{function}  \textbf{OptimalCostGuess}($\mathbf{p}, \mathbf{c}, \rho$)
\State $M \gets |\mathbf{p}|$ 
\State $\mathcal{I} \gets \lbrace 1,2,3...,M\rbrace$ \Comment{Selection Order}
\State $swapped \gets true$
\While{$swapped$}
    \State $swapped \gets false$
    \For{$j=1:M-1$}
    \If{$\left[||\mathbf{c}^{(j+1)}||_{1}^\rho - (||\mathbf{c}^{(j+1)}||_{1}-c_j)^\rho\right] p_{j+1} >  \left[||\mathbf{c}^{(j+1)}||_{1}^\rho - ||\mathbf{c}^{(j)}||_{1}^\rho\right]p_j$}
    \Comment{If the condition does not hold}
    \State \texttt{swap}($c_{j},c_{j+1}$)
    \State \texttt{swap}($p_{j},p_{j+1}$)
    \State \texttt{swap}($\mathcal{I}_j,\mathcal{I}_{j+1})$
    \State $swapped \gets true$
    \EndIf
   \EndFor
\EndWhile
\State return $\mathcal{I}$ 
\end{algorithmic}
\caption{} \label{alg1}
\end{algorithm}

\section{Bounds on Moments of the Guessing Cost}

Throughout this section and the following sections, we shall assume static costs determined a priori and focus on moments of guessing as the average guessing cost would be a special case. {Let us begin by defining} two auxilary random variables $Y$ and $Z$ based on previously defined random variable $X$ and its associated set of costs {$\mathcal{C} = \{c_{x_1},c_{x_2},\dots,c_{x_M}\}$}. 

\begin{definition} \label{def31}
Let us define the random variable $Y$  that takes on values from a finite set {$\mathcal{Y}=\{y_1,y_2,\dots,y_{\sum_{x=1}^{M} \lceil c_x \rceil }\}$}, and {the probability distribution} of $Y$ is defined as {$P_Y(y_t) = P_X(x_i)/ \lceil c_{x_i} \rceil$ for all positive reals  { $c_{x_i}\geq 1$, $i\in [M]$}  and {$t \in [|\mathcal{Y}|$]}}satisfying  {
\begin{eqnarray}
\sum_{x \in \mathcal{X}^{i-1}}\lceil  c_{x} \rceil < t \leq \sum_{x \in \mathcal{X}^i} \lceil c_{x} \rceil \label{eqnyceiling}
\end{eqnarray}
where $\mathcal{X}^i = \{x_1,\dots,x_i\}$ with $i=0$ corresponding to empty set} and the random variable $Z$  to take on values from a finite set {$\mathcal{Z}=\{z_1,z_2,\dots,z_{\sum_{x=1}^{M}\lfloor c_x \rfloor }\}$} with probabilities 
{$P_Z(z_t) = P_X(x_i)/ \lfloor c_{x_i} \rfloor$ for all $c_{x_i} > 1$, $i\in [M]$ and $t\in[|\mathcal{Z}|]$} satisfying  {
\begin{eqnarray}
\sum_{x \in \mathcal{X}^{i-1}} \lfloor c_{x} \rfloor < t \leq \sum_{x \in \mathcal{X}^i} \lfloor c_{x} \rfloor.
\end{eqnarray}}
\end{definition}{
In Fig. \ref{fig:YZexample}, An example cost set $\mathcal{C}$ (to the right) along with a uniform $P_X(x)$ is assumed  where the corresponding distributions $P_Y(y)$ and $P_Z(z)$ are illustrated (to the left). The same plot also shows the threshold points (red for $Y$, blue for $Z$) where different realizations may have a different probability of occurring. Let us continue with the definition of induced guessing strategy. {Note that the induced strategy is defined for integer costs to make it applicable to both random variables $Y$ and $Z$ at the same time since they are defined based on either ceiling or floor of real costs.}
}

\begin{figure}[t!]
    \centering
\includegraphics[width=1.01\textwidth]{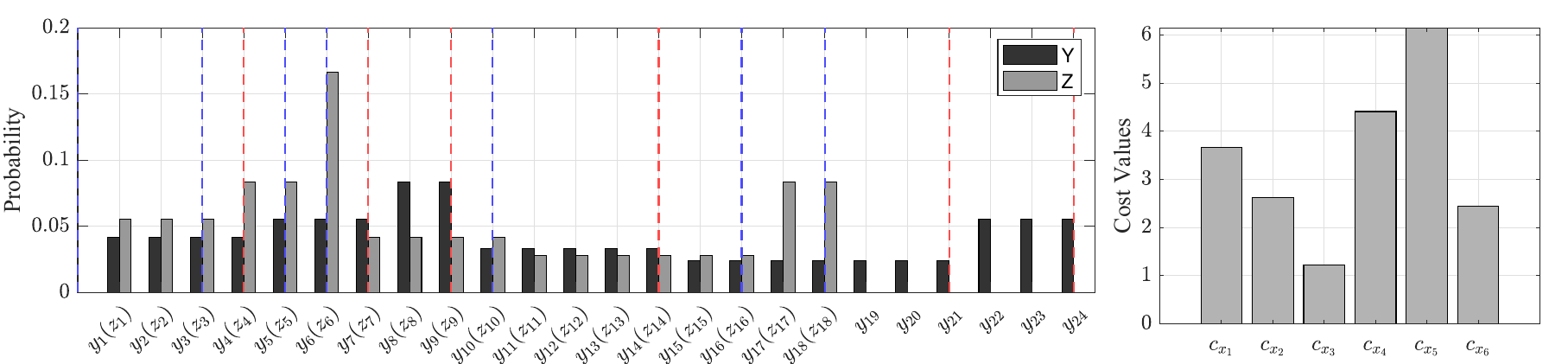}
    \caption{{An example cost set $\mathcal{C} = \{c_{x_i}\}_{i=1}^{6}$ is shown (right plot) and used with a uniform distribution (i.e., {$P_X(x) = 1/6$}). Based on the definitions 3.1 and 3.2, plot on the left demonstrates the calculated distributions $P_Y(y)$ and $P_Z(z)$. Also shown are the index thresholds (cumulative sum of  (floor/ceiling of) costs) for $y\in\mathcal{Y}$ (red) and $z\in\mathcal{Z}$ (blue) across which the assigned probabilities may change. Note that the support for random variables $Y$ and $Z$ are different due to floor/ceiling operations. }}
    \label{fig:YZexample}
\end{figure}

{Any guessing strategy function $\mathcal{G}$ defined on $\mathcal{X}$ can be transformed to one specific guessing strategy using the induced guessing strategy described below, for  $Y$ and $Z$, named as $\mathcal{H}(Y)$ and $\mathcal{F}(Z)$, respectively. }

\begin{definition} \label{inducedG} ({\textit{Induced Guessing Strategy}})
Let us consider a {new} random variable $\overline{X}$ to take on values from a finite set  {{$\overline{\mathcal{X}}=\{\overline{x}_1,\overline{x}_2,\dots,\overline{x}_{\sum_{x=1}^{M}  c_x }\}$}} with probabilities $P_{\overline{X}}(\overline{x})=P_X(x)/ c_x $ {for $\overline{x}\in \overline{\mathcal{X}}$, $x \in \mathcal{X}$ and $c_x \in \mathbb{Z}^+$}. 
Let us further assume  that the guessing strategy $\mathcal{G}$ is used to guess the values of $X$. For a given index {$i \in {[\sum_x  c_x  ]}$} there exists a positive integer $k^{(i)}\leq M$ satisfying
{
\begin{align}
    \sum_{x \in \mathcal{X}^{k^{(i)}-1}}  c_x  < i \leq  \sum_{x \in \mathcal{X}^{k^{(i)}}}  c_x . \label{ineq9}
\end{align}
}
The induced guessing strategy $\overline{\mathcal{G}}$ for guessing the values of $\overline{X}$ is defined to be 
{
\begin{align} 
\overline{\mathcal{G}}(\overline{X} = \overline{x}_i) \triangleq \sum_{x: \mathcal{G}(x) < \mathcal{G}\left(x_{k^{(i)}}\right)}  c_x - \sum_{x \in \mathcal{X}^{k^{(i)}-1}}  c_{x}  + i. \label{eq10}
\end{align}
}
\end{definition}
{
\begin{proposition} \label{prop00}
The induced guessing strategy, namely $\overline{\mathcal{G}}$, is a valid strategy (bijection).
\end{proposition}
\begin{proof}
The proof of Proposition \ref{prop00} can be found in Appendix B. 
\end{proof}}

Using the Definition \ref{inducedG}, {since $\lfloor 
 c_x \rfloor$ and $\lceil c_x \rceil$ are integers,} we can define $\mathcal{F}(Z)$ and $\mathcal{H}(Y)$ to be the \textit{induced guessing strategies} for random variables $Z$ and $Y$, respectively. 
 
 {\textbf{Example:}} {Let us consider the example in Fig. \ref{fig:YZexample}. {Note that} the optimal strategy ($\mathcal{G}^*(X)$) is to {perform selections} in order of non-decreasing costs due to uniform $P_X(X) = 1/6$. {In other words, $\mathcal{G}^*(X=x_1) = 3$, $\mathcal{G}^*(X=x_2) = 2$, $\mathcal{G}^*(X=x_3) = 1$, $\mathcal{G}^*(X=x_4) = 4$, $\mathcal{G}^*(X=x_5) = 6$, $\mathcal{G}^*(X=x_6) = 5$.} For, say $i=15$, it is not hard to verify $k^{(i)}=5$ using inequalities \eqref{eqnyceiling} {since $\sum_{x \in \mathcal{X}^{k^{(i)}-1}}  c_x = 14 < i \leq 21 = \sum_{x \in \mathcal{X}^{k^{(i)}}}$}. Now, using \eqref{eq10}, we compute  
 \begin{align}
      \mathcal{H}^*(Y = y_{15}) &= \sum_{x: \mathcal{G}^*(x) < \mathcal{G}^*\left(x_{5}\right)} {\lceil} c_x {\rceil} - \sum_{x \in \mathcal{X}^{4}} {\lceil} c_{x} {\rceil} + 15 = \underbrace{\sum_{x: \mathcal{G}^*(x) < 6} {\lceil} c_x  {\rceil}}_\text{{$=17$}} - \underbrace{\sum_{i=1}^4 {\lceil} c_{x_i} {\rceil}}_\text{{$=14$}} + 15 = 18.
 \end{align}
 which is in line with the optimal guessing strategy ($\mathcal{H}^*$) induced for $Y$ using the arguments of the guesswork  i.e., guessing in order of non-increasing probabilities (see Fig. \ref{fig:YZexample})
} Let us now state the main results of this subsection. 

\subsection{Lower and Upper Bounds}

Let $P_X(x)$ to denote the probability distribution of $X$ and define the moments of the guessing cost using a particular guessing function $\mathcal{G}$ as
{
\begin{eqnarray}
\mathbb{E}[C_\mathcal{G}(X)^\rho] = \sum_{i=1}^M P_X({\mathcal{G}^{-1}(i)}) \left[\sum_{j=1}^i c_{\mathcal{G}^{-1}(j)}\right]^\rho \label{orgeqn}
\end{eqnarray}
}
where the costs are not necessarily integers. Let us use the previous notation $c_i = c_{\mathcal{G}^{-1}(i)}$ and define $\mathbf{c}^*=\{c^*_1,c^*_{2},\dots,c^*_M\}$ to be the order of costs obtained by running Algorithm 1 {for a given $\rho > 0$} to find the optimal guessing strategy $\mathcal{G}^*$. This  shall be useful in expressing the lower and upper bounds in the following {two theorems}.

\begin{theorem} \label{thm301}
For any guessing function $\mathcal{G}$, $\rho \geq 0$ and costs $c_j>1$, $\rho$-th moment of the guessing cost is lower bounded by
\begin{align}
\mathbb{E}[C_\mathcal{G}(X)^\rho] \geq \mathbb{E}[C_{\mathcal{G}^*}(X)^\rho] 
\geq \left(\frac{M}{1+\gamma^*}\right)^{-\rho} \exp\left\{\rho H_{\frac{1}{1+\rho}}(X) \right\}  \label{eqn11}
\end{align}
where $\gamma^*$ is the harmonic mean of $\{\sum_j^i c^*_j-1\}'s$ for $i=\{1,2,\dots,M\}$ and $H_{\alpha}(X)$ is Rényi's entropy of order $\alpha$ for a given random variable $X$ { as long as the limit for Renyi's entropy  exists}. 
\end{theorem}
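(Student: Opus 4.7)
The plan is to establish \eqref{eqn11} as two separate inequalities. The first, $\mathbb{E}[C_\mathcal{G}(X)^\rho]\geq \mathbb{E}[C_{\mathcal{G}^*}(X)^\rho]$, is immediate from the optimality of $\mathcal{G}^*$ (as characterized by Proposition~\ref{prop0} and Algorithm~\ref{alg1}), so all the work is in the second inequality. My approach mirrors Arıkan's derivation of the classical guesswork lower bound \cite{arikan1996}, with the partial cost sums $f_i^*\triangleq\sum_{j=1}^i c_j^*$ playing the role that the integer guessing index plays in the classical setting.

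First I would apply Hölder's inequality with conjugate exponents $(1+\rho)/\rho$ and $1+\rho$ to the decomposition $p_i^{1/(1+\rho)}=(f_i^*)^{-\rho/(1+\rho)}\cdot\bigl(p_i(f_i^*)^\rho\bigr)^{1/(1+\rho)}$. This separates a factor $\bigl(\sum_i 1/f_i^*\bigr)^{\rho/(1+\rho)}$ from a factor $\bigl(\sum_i p_i(f_i^*)^\rho\bigr)^{1/(1+\rho)}=\mathbb{E}[C_{\mathcal{G}^*}(X)^\rho]^{1/(1+\rho)}$. After raising to the $(1+\rho)$-th power and invoking the definition of Rényi's entropy to recognize $\bigl(\sum_x P_X(x)^{1/(1+\rho)}\bigr)^{1+\rho}=\exp\{\rho H_{1/(1+\rho)}(X)\}$, I arrive at
\begin{align*}
\mathbb{E}[C_{\mathcal{G}^*}(X)^\rho]\geq \frac{\exp\{\rho H_{1/(1+\rho)}(X)\}}{\bigl(\sum_{i=1}^M 1/f_i^*\bigr)^\rho}.
\end{align*}

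The main obstacle is then to upper bound $\sum_{i=1}^M 1/f_i^*$ by $M/(1+\gamma^*)$ so that the harmonic mean $\gamma^*=M\bigl/\sum_{i=1}^M 1/(f_i^*-1)$ appears in the stated form; this step is exactly where the hypothesis $c_j>1$ gets used (it guarantees $f_i^*-1>0$ and hence that $\gamma^*$ is well-defined). Since each $c_j^*>1$, the partial sums $f_i^*$ are strictly increasing in $i$, so the sequences $\{1/f_i^*\}$ and $\{1/(f_i^*-1)\}$ are both decreasing and therefore similarly ordered. Chebyshev's sum inequality then gives
\begin{align*}
M\sum_{i=1}^M \frac{1}{f_i^*(f_i^*-1)}\geq \left(\sum_{i=1}^M \frac{1}{f_i^*}\right)\!\left(\sum_{i=1}^M \frac{1}{f_i^*-1}\right),
\end{align*}
and combining this with the partial-fraction identity $\frac{1}{f_i^*(f_i^*-1)}=\frac{1}{f_i^*-1}-\frac{1}{f_i^*}$ and the harmonic-mean relation $\sum_{i=1}^M 1/(f_i^*-1)=M/\gamma^*$ yields $\sum_i 1/f_i^*\leq M/(1+\gamma^*)$ after a short rearrangement. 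Substituting into the Hölder bound finishes the proof.
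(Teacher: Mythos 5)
Your proposal is correct and follows essentially the same route as the paper: the identical H\"older decomposition with exponents $(1+\rho)/\rho$ and $1+\rho$ applied to the optimal ordering, followed by bounding $\sum_{i=1}^M 1/f_i^*$ by $M/(1+\gamma^*)$. The only difference is that you prove this last harmonic-mean inequality yourself via Chebyshev's sum inequality and the partial-fraction identity (a valid, self-contained argument), whereas the paper states it as a consequence of Radon's inequality and cites a reference.
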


\begin{proof}
The proof of the theorem can be found in {Appendix C}. 
\end{proof}

This lower bound, as will be illustrated in numerical results, is not too tight particularly for large $\rho$. However, this theorem would be useful for asymptotic analysis. For instance, using this result we can demonstrate in the next theorem that the bound given in Theorem \ref{thm301} is tight within a factor of  $\left( M / (1+\gamma^*) \right)^\rho$. 

\begin{theorem} \label{thm31}
For the optimal guessing function $\mathcal{G}^*$, and $\rho \geq 0$, $\rho$-th moment of the  guessing cost is upper bounded by
\begin{eqnarray}
\mathbb{E}[C_\mathcal{G^*}(X)^\rho] \leq  \exp\{ \rho H_{\frac{1}{1+\rho}}(Y)\} \label{eqn2222}
\end{eqnarray}
where  $H_{\alpha}(X)$ is Rényi's entropy of order $\alpha$ for a given random variable $X$. 
\end{theorem}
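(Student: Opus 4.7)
The high-level plan is to bypass direct analysis of $\mathcal{G}^*$ by exhibiting a single, easily analyzable strategy $\mathcal{G}$ that already satisfies the claimed bound; optimality of $\mathcal{G}^*$ then finishes the job, since $\mathbb{E}[C_{\mathcal{G}^*}(X)^\rho]\leq \mathbb{E}[C_{\mathcal{G}}(X)^\rho]$. The strategy I would use is the one that lists $\mathcal{X}$ in non-increasing order of the ratios $P_X(x)/\lceil c_x\rceil$. Under this ordering, the induced strategy $\mathcal{H}$ on $Y$ from Definition~\ref{inducedG} enumerates $\mathcal{Y}$ in non-increasing order of $P_Y(y)$, because every $Y$-copy of $x$ inherits the common probability $P_X(x)/\lceil c_x\rceil$, and $\mathcal{G}$ orders the parent blocks accordingly.

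Next I would relate the accumulated $X$-cost to a guessing number on $Y$. For $x$ with $\mathcal{G}(x)=k$, replacing each cost by its ceiling gives
\begin{align*}
C_{\mathcal{G}}(x)=\sum_{j=1}^{k}c_{\mathcal{G}^{-1}(j)} \;\leq\; \sum_{j=1}^{k}\lceil c_{\mathcal{G}^{-1}(j)}\rceil \;=:\; N(x),
\end{align*}
and by the blockwise construction of $\mathcal{H}$, $N(x)=\mathcal{H}(y^{\ast}(x))$, where $y^{\ast}(x)$ is the final ($\lceil c_x\rceil$-th) $Y$-copy of $x$. Hence it suffices to bound $\sum_x P_X(x)\,N(x)^\rho$.

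Then I would transplant Arikan's classical counting argument into the $Y$ world. Because $\mathcal{H}$ is sorted by non-increasing $P_Y$, every $y$ counted in $N(x)$ satisfies $P_Y(y)\geq P_Y(y^{\ast}(x))=P_X(x)/\lceil c_x\rceil$, which yields $N(x)\leq (P_X(x)/\lceil c_x\rceil)^{-1/(1+\rho)}\sum_{y}P_Y(y)^{1/(1+\rho)}$. Raising to the $\rho$-th power, weighting by $P_X(x)$, summing over $x$, and recognizing the identity $\sum_x \lceil c_x\rceil\,(P_X(x)/\lceil c_x\rceil)^{1/(1+\rho)}=\sum_y P_Y(y)^{1/(1+\rho)}$ collapses the right-hand side to $\big(\sum_y P_Y(y)^{1/(1+\rho)}\big)^{1+\rho}=\exp\{\rho H_{1/(1+\rho)}(Y)\}$, which is exactly the claimed bound.

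The step I expect to be the most delicate is the index bookkeeping: verifying that with $\mathcal{G}$ ordered by $P_X/\lceil c\rceil$ the map $\mathcal{H}$ of Definition~\ref{inducedG} really does enumerate $\mathcal{Y}$ in non-increasing $P_Y$-order, and that the quantity $N(x)$ matches $\mathcal{H}(y^{\ast}(x))$. Once that indexing is nailed down, everything downstream is a direct transposition of Arikan's upper bound for standard guesswork, with no new inequality needed beyond the sorted-ratio comparison inside the sum.
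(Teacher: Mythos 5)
Your proposal is correct and follows essentially the same route as the paper's proof in Appendix C: compare the $\rho$-optimal strategy against an explicit ratio-ordered strategy and run Arikan's ratio-insertion/counting argument, with the ceiling relaxation producing $\left[\sum_y P_Y(y)^{1/(1+\rho)}\right]^{1+\rho}=\exp\{\rho H_{1/(1+\rho)}(Y)\}$. The only cosmetic difference is that you round the costs up to $\lceil c_x\rceil$ at the outset and argue via the induced strategy on $Y$ (ordering by $P_X(x)/\lceil c_x\rceil$), whereas the paper orders by $P_X(x)/c_x$, first obtains the slightly tighter intermediate bound $\left[\sum_x c_x\,(P_X(x)/c_x)^{1/(1+\rho)}\right]^{1+\rho}$, and only then relaxes $c_x$ to $\lceil c_x\rceil$ to reach the same $Y$-entropy form.
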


\begin{proof}
The proof of the theorem can be found in {Appendix D.} 
\end{proof}

\subsection{Relation to Guesswork and Guessing Cost Exponent}

{In this section, we present tight bounds for the logarithm of guessing cost moments, for a series of $M$ random variables variables, which would be useful for our later data storage application. We primarily realize that the }
introduction of a random variable $Z$ is useful for establishing a relationship with the guesswork. From the earlier discussions on the random variable $Z$, we can express a looser lower bound (compared to \eqref{eqn11}) for any guessing function $\mathcal{G}(.)$ by observing the following for $c_j >0$, 
\begin{align}
\mathbb{E}[C_\mathcal{G}(X)^\rho] 
& = \sum_{x} P_X(x) C_{\mathcal{G}}(x)^\rho = \sum_{i=1}^M P_X({\mathcal{G}^{-1}(i)}) \left[\sum_{j=1}^i c_{\mathcal{G}^{-1}(j)}\right]^\rho  \label{eqn100}\\ 
& = \sum_{i=1}^M \sum_{j=1}^{\lfloor c_{\mathcal{G}^{-1}(i)} \rfloor} \frac{P_X(\mathcal{G}^{-1}(i))}{\lfloor c_{\mathcal{G}^{-1}(i)} \rfloor} \left[\sum_{k=1}^{i-1} c_{\mathcal{G}^{-1}(k)} + c_{\mathcal{G}^{-1}(i)}\right]^\rho \\
& \geq   \sum_{i=1}^M \sum_{j=1}^{\lfloor c_{\mathcal{G}^{-1}(i)} \rfloor} \frac{P_X(\mathcal{G}^{-1}(i))}{\lfloor c_{\mathcal{G}^{-1}(i)} \rfloor} \left[\sum_{k=1}^{i-1}\lfloor c_{\mathcal{G}^{-1}(k)} \rfloor+j \right]^\rho  \\ 
& = \mathbb{E}[C_\mathcal{F}(Z)^\rho] \geq \left(1+ \ln \left(\sum_x \lfloor c_x \rfloor \right)\right)^{-\rho} \exp \left\{ \rho H_{\frac{1}{1+\rho}}(Z) \right\} \label{lowerboundalter} 
\end{align}
where the last inequality is due to guesswork and follows directly from \cite{arikan1996} based on the definition of the random variable $Z$. Better lower bounds can be given, however this loose lower bound is enough to prove the following asymptotic result.  Here the guessing function $\mathcal{F}(Z)$ for the random variable $Z$ defined earlier is directly induced from  $\mathcal{G}(X)$. Next, the guessing cost exponent  is given by the following theorem.

\begin{theorem} \label{thmexponent}
{Let $\{X_1,\dots,X_n\}$ be a sequence} of independent random variables  where each is defined over the set $\mathcal{X}_i$ with the associated cost distribution $\mathcal{C}_i${, and random variables $\{Y_i\}, \{Z_i\}$ based on Definition \ref{def31}}. Let $\mathcal{G}^* (X_1,\dots, X_n)$ be an optimal guessing function. Then, for any $\rho>0$, we have
\begin{align}
    \limsup_{n \rightarrow \infty} \frac{1}{n} \ln \left( \mathbb{E}[C_{\mathcal{G}^*}(X_1,X_2,\dots,X_n)^\rho]\right)^{1/\rho} 
    =\mathcal{R}_{\frac{1}{1+\rho}}(\{Y_i\}) \\
        \liminf_{n \rightarrow \infty} \frac{1}{n} \ln ( \mathbb{E}[C_{\mathcal{G}^*}(X_1,X_2,\dots,X_n)^\rho])^{1/\rho} 
    =\mathcal{R}_{\frac{1}{1+\rho}}(\{Z_i\}) 
\end{align}
where $\mathcal{R}_{\frac{1}{1+\rho}}(.)$ denotes the order-$1/(1+\rho)$ Rényi rate {is assumed to exist}, $Y_i$s and $Z_i$s are random variables induced from random variables $X_i$s as defined before. Moreover if the costs are integers, then the limits converge and we will have 
\begin{align}
    \lim_{n \rightarrow \infty} \frac{1}{n} \ln ( \mathbb{E}[C_{\mathcal{G}^*}(X_1,X_2,\dots,X_n)^\rho])^{1/\rho} 
    =\mathcal{R}_{\frac{1}{1+\rho}}(\{X_i\})
\end{align}
\end{theorem}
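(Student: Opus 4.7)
The plan is to sandwich $\mathbb{E}[C_{\mathcal{G}^*}(X_1,\ldots,X_n)^\rho]$ between the upper bound in Theorem \ref{thm31} and the $Z$--based lower bound displayed just above the statement, both applied to the joint random variable $\mathbf{X} = (X_1,\dots,X_n)$. After normalising by $\tfrac{1}{n}$, independence of the $X_i$'s will let me split the Rényi entropies of the induced joint variables $\mathbf{Y}$ and $\mathbf{Z}$ into sums over coordinates, and the definition of the Rényi rate takes care of the limit.

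For the $\limsup$ direction, I would apply Theorem \ref{thm31} with $\mathbf{X}$ in place of $X$ and its induced $\mathbf{Y}$ in place of $Y$. Because $Y_i$ is a deterministic refinement of $X_i$ in Definition \ref{def31}, independence of $(X_i)$ transfers to independence of $(Y_i)$, and under a natural product--compatible extension of the cost function to the joint alphabet (so that the marginals of the induced $\mathbf{Y}$ on the product space coincide with the $Y_i$'s) one has $H_{1/(1+\rho)}(\mathbf{Y}) = \sum_{i=1}^{n} H_{1/(1+\rho)}(Y_i)$ by additivity of Rényi entropy over independent coordinates. Taking $\tfrac{1}{n}\ln$ of Theorem \ref{thm31} and then $\limsup_{n\to\infty}$ yields the first equality via the definition of the Rényi rate $\mathcal{R}_{1/(1+\rho)}(\{Y_i\})$.

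For the $\liminf$ direction, I would invoke the lower bound \eqref{lowerboundalter} on the joint, obtaining
\[
\bigl(\mathbb{E}[C_{\mathcal{G}^*}(\mathbf{X})^\rho]\bigr)^{1/\rho} \;\geq\; \bigl(1+\ln\textstyle\sum_{\mathbf{x}}\lfloor c_{\mathbf{x}}\rfloor\bigr)^{-1}\exp\bigl\{H_{1/(1+\rho)}(\mathbf{Z})\bigr\}.
\]
The product alphabet is at most exponentially large in $n$, so $\ln\sum_{\mathbf{x}}\lfloor c_{\mathbf{x}}\rfloor = O(n)$ and the correction factor contributes $-\tfrac{1}{n}\ln(1+O(n)) = O(\tfrac{\ln n}{n}) \to 0$. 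The leading term $\tfrac{1}{n}H_{1/(1+\rho)}(\mathbf{Z}) = \tfrac{1}{n}\sum_i H_{1/(1+\rho)}(Z_i)$ converges to $\mathcal{R}_{1/(1+\rho)}(\{Z_i\})$ by independence. The integer--cost statement is then immediate: $\lceil c_{x_i} \rceil = \lfloor c_{x_i} \rfloor = c_{x_i}$ forces $Y_i \overset{d}{=} Z_i$, so $\mathcal{R}_{1/(1+\rho)}(\{Y_i\}) = \mathcal{R}_{1/(1+\rho)}(\{Z_i\})$, the $\limsup$ and $\liminf$ coincide, and this common value is identified with $\mathcal{R}_{1/(1+\rho)}(\{X_i\})$ under the paper's notational convention.

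The principal obstacle will be the joint--cost extension: Theorem \ref{thm31} and the lower bound \eqref{lowerboundalter} were derived for a single $X$, so lifting them to $\mathbf{X}$ forces me to commit to how the cost of a tuple $(x_1,\ldots,x_n)$ is built from the marginals $c_{x_i}$. A multiplicative choice $c_{\mathbf{x}} = \prod_i c_{x_i}$ makes $\mathbf{Y}$ and $\mathbf{Z}$ decompose cleanly into independent coordinates; justifying that this is the right (or only natural) extension for the problem, and then verifying that the logarithmic correction on the lower--bound side is indeed $o(n)$ rather than something that swamps the main term, are the two steps I expect to be the most delicate. Once those are in hand, the rest is a mechanical combination of the sandwiched bounds with the definition of the Rényi entropy rate.
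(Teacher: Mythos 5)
Your proposal takes essentially the same route as the paper's Appendix D: sandwich $\mathbb{E}[C_{\mathcal{G}^*}(\mathbf{X})^\rho]$ between the $Y$-based upper bound (Theorem~\ref{thm31} redone at the product level) and the $Z$-based lower bound \eqref{lowerboundalter}, use independence to split the Rényi entropy of the induced joint variables into sums over coordinates, and absorb the $\ln\prod_i \sum_{x_i}\lfloor c_{x_i}\rfloor$ correction as $o(n)$ after normalisation. The multiplicative joint cost $c_{\mathbf{x}} = \prod_i c_{x_i}$ that you flag as the ``principal obstacle'' is precisely what the paper commits to (implicitly, by writing the joint cost as $\prod_i c_{x_i'}$ in the nested sums), so that obstacle is indeed resolved exactly as you anticipate. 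One cautionary note that applies equally to the paper's own write-up: the sandwich only yields $\limsup_n \tfrac{1}{n\rho}\ln s_n \leq \mathcal{R}_{1/(1+\rho)}(\{Y_i\})$ and $\liminf_n \tfrac{1}{n\rho}\ln s_n \geq \mathcal{R}_{1/(1+\rho)}(\{Z_i\})$ (not the equalities displayed in the theorem statement), and the paper's Appendix D even pairs $\liminf$ with $\{Y_i\}$ and $\limsup$ with $\{Z_i\}$, which is backwards relative to both the theorem and the direction of the inequalities; your pairing (upper bound $\to \limsup$, lower bound $\to \liminf$) is the consistent one, but like the paper you do not supply the matching reverse inequalities that would upgrade them to equalities, except in the integer-cost case where $Y_i \overset{d}{=} Z_i$ forces the sandwich to collapse.
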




\begin{proof}
The proof of the theorem can be found in {Appendix E.} 
\end{proof}

These results indicate that the complexity of  guessing cost of a random variable $X$ with strategy $\mathcal{G}$ can be tied to the complexity of guessing two related random variables $Z$ and $Y$ with the induced strategies $\mathcal{F}$ and $\mathcal{H}$, respectively, which are derived based on the cost distribution $\mathcal{C}$ defined earlier. 

\subsection{Improved Bounds: Non-asymptotic regime}

One of the observations is that the provided bounds have the potential for improvement particularly in the non-asymptotic regime similar in spirit to works such as \cite{Bozdas1997}, \cite{sason2018} and \cite{Dragomir98}. These improvements can easily be made after we recognize the relationship between guessing cost and the {standard} guesswork. In the following, we go through these extensions by referring to related past works. We shall also demonstrate how these bounds play out with varying $\rho$.

\subsubsection{Extension of Boztas' bounds \cite{Bozdas1997}}

Let us extend Boztas' upper bound by deriving the analog for the guessing cost. Let us first start with the following definition.
\begin{definition} \label{defbalcost}
For a given random variable $X$ and $\rho > 0$, the \textit{balancing cost} $\overline{c}_X(\rho)$ is defined to satisfy the following equality
\begin{eqnarray}
\sum_{i=1}^M \left( \sum_{j=1}^{i-1} c_j \right)^\rho p_i = \sum_{i=1}^M \left( \sum_{j=1}^{i} c_j - \overline{c}_X(\rho) \right)^\rho p_i \label{eqn62}
\end{eqnarray}
and equals a constant if costs are fixed i.e., $c_1=\dots=c_M=c$ for some constant $c \in \mathbb{R}$.
\end{definition}

\begin{remark}
Note that for the special case $\rho = 1$, we will have $ \overline{c}_X(1) = \sum_i c_ip_i$, i.e., balancing cost would be equivalent to the average (expected) cost of guessing. 
\end{remark}

Now considering telescoping sequence argument for $\rho \geq 1$, we observe the following relation
\begin{align}
\sum_{i=1}^M \left[\left( \sum_{j=1}^{i} c_j \right)^\rho - \left( \sum_{j=1}^{i-1} c_j \right)^\rho\right] \frac{p_i}{\lceil c_i \rceil} &\leq  \sum_{i=1}^M \left[\left( \sum_{j=1}^{i} \lceil c_j \rceil \right)^\rho - \left( \sum_{j=1}^{i-1} \lceil c_j \rceil \right)^\rho\right] \frac{p_i}{\lceil c_i \rceil}  \\
&=\sum_{i=1}^M \sum_{z=1}^{\lceil c_i \rceil} \left( \Big(\sum_{k=1}^{i-1}\lceil c_k \rceil+z\Big)^\rho - \Big(\sum_{k=1}^{i-1}\lceil c_k \rceil+z-1\Big)^\rho \right) \frac{p_i}{\lceil c_i \rceil} 
\end{align}
Finally using the equality provided in Eqn. \eqref{eqn62}, we get
\begin{align} 
 \mathbb{E}[C_\mathcal{G}(X)^\rho] -  \mathbb{E}[(C_\mathcal{G}(X)-\overline{c}_X(\rho))^\rho] & = \sum_{i=1}^M \left[\left( \sum_{j=1}^{i} c_j \right)^\rho - \left( \sum_{j=1}^{i} c_j - \overline{c}_X(\rho) \right)^\rho \right] p_i\label{eqn622} \\
 &= \sum_{i=1}^M  \left[\left( \sum_{j=1}^{i} c_j \right)^\rho - \left( \sum_{j=1}^{i-1} c_j \right)^\rho\right] p_i \\
 & \leq \sum_{i=1}^M \sum_{z=1}^{\lceil c_i \rceil} \left( \Big(\sum_{l=1}^{i-1}\lceil c_l \rceil+z\Big)^\rho - \Big(\sum_{l=1}^{i-1}\lceil c_l \rceil+z-1\Big)^\rho \right) p_i \\
 & = \sum_{k=1}^{M^\prime} (k^\rho - (k-1)^\rho) q_k \leq \left[ \sum_{k=1}^{M^\prime} q_k^{1/\rho}\right]^\rho \label{eqn63}
\end{align}
where $M^\prime = \sum_{i=1}^M \lceil c_i \rceil$ and \begin{eqnarray}
q_k = p_i \textrm{ for } \sum_{l=1}^{i-1} \lceil c_l \rceil < k \leq  \sum_{l=1}^{i} \lceil c_l \rceil \textrm{ and } i = 1,\dots,M,
\end{eqnarray}
{
\begin{eqnarray}
q_{k+1}^{1/\rho} \leq \frac{1}{k} (q_1^{1/\rho}+\dots+q_{k}^{1/\rho}), \textrm{  for } k=1,\dots, M^\prime-1.\label{eqn64}
\end{eqnarray}
Note that the inequality in \eqref{eqn63} follows  from the Lemma in  \cite{Bozdas1997} as long as the ``weights'' $q_1,\dots,q_{M^\prime}$ are non-negative reals satisfying the inequality  given in \eqref{eqn64}. }
We can show that the necessary condition for optimal strategy derived earlier will satisfy this inequality. Hence, this is a looser condition making the inequality apply to a broader range of guessing functions other than the optimal. Note here that $\sum_k q_k = \sum_i \lceil c_i \rceil p_i \not= 1$ unless $c_i=1$ for all $i=1,\dots,M$ i.e., $q_k$s are not forming a probability distribution for non-unity costs.
Next, let us provide our theorem as an extension/generalization of Boztas' arguments.

\begin{theorem}\label{theo:boztasUpperBoundupdatedtheo3_4}
For $\overline{c}_X(.)$  as given in Definition 3.3 and all guessing functions $\mathcal{G}$ for a random variable $X$ inducing $\{q_k\}$s which satisfy the relation in $\eqref{eqn64}$ for $\rho = m + 1$ where $m \geq 1$ is an integer, the $m$-th moment of the guessing cost can be upper bounded by the recursive relation
\begin{align}
\mathbb{E}[C_\mathcal{G}(X)^{m}] &\leq \frac{1}{ \overline{c}_X(m+1)(m+1)} \left[ \left[\sum_{k=1}^{M^\prime} q_k^\frac{1}{m+1}\right]^{m+1} + \sum_{l=0}^{m-1} \binom{m+1}{l} \mathbb{E}[C_\mathcal{G}(X)^{l}] (-\overline{c}_X(m+1))^{m+1-l} \right]
\end{align}
where $m\geq1$ is a positive integer and $M^\prime = \sum_{i=1}^M \lceil c_i \rceil$. 
\end{theorem}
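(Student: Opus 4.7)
The plan is to apply the chain of inequalities \eqref{eqn622}--\eqref{eqn63} already established just above the theorem, specialized to the exponent $\rho = m+1$, and then algebraically isolate the $m$-th moment using the binomial theorem. Concretely, I would start from
\begin{align*}
\mathbb{E}[C_\mathcal{G}(X)^{m+1}] - \mathbb{E}[(C_\mathcal{G}(X)-\overline{c}_X(m+1))^{m+1}] \leq \left[\sum_{k=1}^{M^\prime} q_k^{1/(m+1)}\right]^{m+1},
\end{align*}
which is simply \eqref{eqn63} evaluated at $\rho = m+1$ and uses the hypothesis that $\{q_k\}$ satisfies \eqref{eqn64} for this value of $\rho$.

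Next I would expand the second expectation via the binomial theorem:
\begin{align*}
\mathbb{E}[(C_\mathcal{G}(X)-\overline{c}_X(m+1))^{m+1}] = \sum_{l=0}^{m+1} \binom{m+1}{l}\mathbb{E}[C_\mathcal{G}(X)^{l}]\left(-\overline{c}_X(m+1)\right)^{m+1-l}.
\end{align*}
The critical observation is that the $l = m+1$ term of this sum is exactly $\mathbb{E}[C_\mathcal{G}(X)^{m+1}]$, which then cancels with the first term on the left-hand side of the displayed inequality. The $l = m$ term pulls out as $(m+1)(-\overline{c}_X(m+1))\mathbb{E}[C_\mathcal{G}(X)^{m}]$, i.e., a single clean copy of the $m$-th moment (with a negative coefficient). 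After moving the remaining lower-order terms $l = 0,\dots,m-1$ to the other side and multiplying by $-1$, dividing through by $(m+1)\overline{c}_X(m+1) > 0$ yields the claimed recursive bound.

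The only genuine subtlety, and the main obstacle, is justifying that the hypothesis \eqref{eqn64} is compatible with the use of \eqref{eqn63}: one must verify that the weights $\{q_k\}_{k=1}^{M^\prime}$ induced by $\mathcal{G}$ through the lifting $q_k = p_i$ for $\sum_{l<i}\lceil c_l\rceil < k \leq \sum_{l\leq i}\lceil c_l\rceil$ actually satisfy the monotone-average condition required by the auxiliary lemma from \cite{Bozdas1997}. Since the theorem already assumes this as a hypothesis on $\mathcal{G}$, no further work is needed here; one only needs to note in a short remark that any $\mathcal{G}$ satisfying the necessary optimality condition from Proposition~\ref{prop0} (in particular Corollary~\ref{cor211}) will satisfy \eqref{eqn64} after the lifting, so that the bound applies at least to $\mathcal{G}^\ast$. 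Everything else in the proof reduces to bookkeeping with the binomial expansion and the defining identity \eqref{eqn62} of $\overline{c}_X(\rho)$.
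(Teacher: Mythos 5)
Your proposal is correct and is essentially the paper's own proof: both start from the inequality \eqref{eqn63} specialized to $\rho=m+1$ (valid under the hypothesis \eqref{eqn64}), expand $\mathbb{E}[(C_\mathcal{G}(X)-\overline{c}_X(m+1))^{m+1}]$ by the binomial theorem so the $(m+1)$-th moments cancel, and isolate the $l=m$ term before dividing by $(m+1)\overline{c}_X(m+1)$. Your closing remark on the compatibility of \eqref{eqn64} with the lifted weights $\{q_k\}$ matches the discussion the paper gives just before the theorem, so nothing further is needed.
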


\begin{proof}
Using equations (\ref{eqn62}), (\ref{eqn63}) and the Binomial expansion, we have the following inequalities for integer $m$,
\begin{align}
 \mathbb{E}[C_\mathcal{G}(X)^{m+1}] -  \mathbb{E}[(C_\mathcal{G}(X)-\overline{c}_X(m+1))^{m+1}] &=  
 \mathbb{E}[C_\mathcal{G}(X)^{m+1}] - \sum_{l=0}^{m+1} \binom{m+1}{l} \mathbb{E}[C_\mathcal{G}(X)^{l}] (-\overline{c}_X(m+1))^{m+1-l} \\ 
 &\leq \left[\sum_{k=1}^{M^\prime} q_k^\frac{1}{m+1}\right]^{m+1} 
 \end{align}
 which implies that
 \begin{align}
 \overline{c}_X(m+1)(m+1) \mathbb{E}[C_\mathcal{G}(X)^{m}] &\leq \left[\sum_{k=1}^{M^\prime} q_k^\frac{1}{m+1}\right]^{m+1}  + \sum_{l=0}^{m-1} \binom{m+1}{l} \mathbb{E}[C_\mathcal{G}(X)^{l}] (-\overline{c}_X(m+1))^{m+1-l}
\end{align}
from which the result follows.
\end{proof}

The main difference of our result compared to that of Boztas is the introduction of $\{q_k\}$s and the term $\overline{c}_X(m+1)$. In case of $m=1$, we would have
\begin{eqnarray}
\mathbb{E}[C_\mathcal{G}(X)] \leq \frac{1}{2\overline{c}_X(2)}  \left[\sum_{k=1}^{M^\prime} q_k^\frac{1}{2}\right]^{2} + \frac{\overline{c}_X(2)}{2} 
\end{eqnarray}
subject to $q_{k+1}^{1/2} \leq \frac{1}{k} (q_1^{1/2}+\dots+q_{k}^{1/2}), \textrm{  for } k=1,\dots, M^\prime-1$. Similarly for $m=2$, we shall have
\begin{align}
\mathbb{E}[C_\mathcal{G}(X)^2] &\leq \frac{1}{3\overline{c}_X(3)}  \left[\sum_{k=1}^{M^\prime} q_k^\frac{1}{3}\right]^{3}  + \overline{c}_X(3)\mathbb{E}[C_\mathcal{G}(X)] - \frac{{\overline{c}_X(3)}^2}{3} \\
&\leq  \frac{1}{3\overline{c}_X(3)}  \left[\sum_{k=1}^{M^\prime} q_k^\frac{1}{3}\right]^{3}  + \frac{\overline{c}_X(3)}{2\overline{c}_X(2)} \left[\sum_{k=1}^{M^\prime} q_k^\frac{1}{2}\right]^{2} + \overline{c}_X(3)\left(\frac{\overline{c}_X(2)}{2}- \frac{{\overline{c}_X(3)}}{3}\right)
\end{align} 
subject to the conditions $q_{k+1}^{1/2} \leq \frac{1}{k} (q_1^{1/2}+\dots+q_{k}^{1/2})$ and $q_{k+1}^{1/3} \leq \frac{1}{k} (q_1^{1/3}+\dots+q_{k}^{1/3}), \textrm{  for } k=1,\dots, M^\prime-1$.

We finally note that these expressions/bounds form a generalization of Boztas' results and requires the calculation of the balancing cost for integer $\rho$s. We provide a gradient descent scheme in Algorithm
\ref{alg:GradientDescentBarc} for efficiently finding the balancing cost for a given integer $\rho$.


\begin{algorithm}[!t]
\begin{algorithmic}[1] 
\State \textbf{function}  \textbf{gradientDescent} {$\mathbf{\bar{c}}$}($\mathbf{p}, ~\mathbf{c},~n,~\delta,~\mu,\rho $) 
\State $M \gets |\mathbf{p}|$ 
\State \textit{minusCost} $\gets \sum_{i=1}^{M}C_{i-1}^{\rho}p_i $ where $C_{i-1}=\sum_{j=1}^{i-1}c_j $
\State $minusCost'(\bar{c}) \gets \sum_{i=1}^{M}{(C_{i}-\bar{c})}^{\rho}p_i $ where $C_{i}=\sum_{j=1}^{i}c_j $
\State $f'(\bar{c}) \gets  |\sum_{i=1}^{M}{(C_{i-1}}^{\rho}p_i) - ( {{\bar{c}(-\rho)}(C_{i}-\bar{c})^{\rho-1}p_i +(C_{i}-\bar{c})}^{\rho}p_i) |$ where $C_{i}=\sum_{j=1}^{i}c_j$
\State $\bar{c} = \min(\mathbf{c})$
\For{$i=1$ \textbf{to} $n$} \Comment{$n$ represents the iteration count}
\State  \textit{step} $\gets -\delta \times f'(\bar{c})$ \Comment{$\delta$ represents  the step size}
\If {\textit{step} $< 0$}
\State $\bar{c} = \bar{c} - \textit{step} $ 
\Else
\State $\bar{c} = \bar{c} + $ \textit{step}
\EndIf
\If {$|\textit{step}|\leq \mu \vee \bar{c}\geq max(\mathbf{c}) $} \Comment{$\mu$ represents  the step tolerance}
\State \textbf{return} $\{ \bar{c}$\}
\EndIf
\EndFor
\State \textbf{return} $\{-1\}$ \Comment{Notify an error}
\end{algorithmic}
\caption{} \label{alg:GradientDescentBarc}
\end{algorithm}

\subsubsection{Extension of Sason's bounds \cite{sason2018}}  In particular, we have the following improved lower bounds for any guessing strategy $\mathcal{G}$ and $\rho > 0$ that show better performance in the non-asymptotic regime of $\rho$,
\begin{align}
\mathbb{E}[C_\mathcal{G}(X)^\rho] \geq \mathbb{E}[C_\mathcal{F}(Z)^\rho] &\geq \sup_{\beta \in (-\rho,\infty)-\{0\}} \exp\left\{\frac{\rho}{\beta}\left[H_{\frac{\beta}{\beta+\rho}}(Z) - \log u_{\sum_x \lfloor c_x \rfloor} (\beta) \right]\right\} \\
&= \sup_{\beta \in (-\rho,\infty)-\{0\}} \left[u_{\sum_x \lfloor c_x \rfloor} (\beta)\right]^{-\frac{\rho}{\beta}}\exp\left(\frac{\rho}{\beta}H_{\frac{\beta}{\beta+\rho}} (Z)\right) \label{sasonlb}
\end{align}
where {$u_{|\mathcal{Z}|}(\beta)$} is defined {similarly as} in \cite{sason2018} and given by  
\begin{equation} 
u_{|\mathcal{Z}|}(\beta)=\begin{cases}
\ln |\mathcal{Z}| +\gamma+ \frac{1}{2|\mathcal{Z}|}-\frac{5}{6(10{(|\mathcal{Z}|)}^2+1)} & \beta = 1  \\
\min\{\zeta(\beta)-\frac{(|\mathcal{Z}|+1)^{1-\beta}}{\beta-1}- \frac{(|\mathcal{Z}|+1)^{1-\beta}}{2}, u_{|\mathcal{Z}|}(1)\} & \beta >1\\
1+\frac{1}{1-\beta}\left[{(|\mathcal{Z}|+\frac{1}{2})}^{1-\beta}- \left(\frac{3}{2}\right)^{1-\beta} \right] & |\beta|<1\\
\frac{{(|\mathcal{Z}|)}^{1-\beta}-1}{1-\beta}+\frac{1}{2}(1+{|\mathcal{Z}|}^{-\beta}) & \beta \leq -1
\end{cases}
\end{equation}
where $ |\mathcal{Z}| =\sum_x \lfloor c_x \rfloor$, $ \gamma \approx 0.5772$ is the Euler-Mascheroni constant and $\zeta(\beta) = \sum_{n=1}^\infty \frac{1}{n^\beta}$ is the Riemann zeta function for $\beta > 1$. Here the first inequality follows due to equations \eqref{eqn100}--\eqref{lowerboundalter}. Moreover the second inequality follows due to guesswork arguments given in  \cite{sason2018} which are directly applicable to random variable $Z$ as its cost distribution assumes only unity values. As an extension of the upper bound, we provide the following theorem. 
\begin{theorem} \label{sasonbound1}
For any guessing function $\mathcal{G}$, $\rho \geq 0$ and costs $c_j>1$ associated with $\{q_k\}$s for a random variable $X$ satisfying both $q_{k+1}^{\frac{1}{\rho}} \leq \frac{1}{k} (q_1^{\frac{1}{\rho}}+\dots+q_{k}^{\frac{1}{\rho}})$ and  $q_{k+1}^{\frac{1}{1+\rho}} \leq \frac{1}{k} (q_1^{\frac{1}{1+\rho}}+\dots+q_{k}^{\frac{1}{1+\rho}})$ for $ k=1,\dots, M^\prime-1$, then the $\rho$-th moment of the guessing cost is upper bounded  by
\begin{align}
        \mathbb{E}[C_\mathcal{G}(X)^\rho] \leq \frac{1}{\overline{c}_{{min}_X}(\rho)(1+\rho)} \left[ \sum_{k=1}^{M^\prime} q_k^{\frac{1}{1+\rho}}\right]^{1+\rho} + \overline{c}_{{min}_X}^{\rho \mathbbm{1}_{\rho < 1}}(\rho) \left[ \sum_{k=1}^{M^\prime} q_k^{1/\rho}\right]^{\rho \mathbbm{1}_{\rho \geq 1}} - \frac{\overline{c}^\rho_{{min}_X}(\rho)}{1+\rho}
\end{align}
where $\overline{c}_{{min}_X}(\rho) = \min\{\overline{c}_X(\rho),\overline{c}_X(1+\rho)\}$, {$M^\prime = \sum_{i=1}^M \lceil c_i \rceil$}, $\mathbbm{1}_A$ is the indicator function and equals 1 if the condition $A$ is true otherwise 0, and $\overline{c}_X(\rho)$, $\overline{c}_X(1+\rho)$ are as defined before for a given $\rho$ and can be found using Algorithm \ref{alg:GradientDescentBarc}.  
\end{theorem}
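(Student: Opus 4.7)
The plan is to combine two complementary ingredients: a telescoping identity driven by the balancing cost $\overline{c}_X(\cdot)$ of Definition~\ref{defbalcost} which, after applying the lemma of \cite{Bozdas1997}, upper bounds a difference of $\beta$-th moments by $\left[\sum_{k=1}^{M'} q_k^{1/\beta}\right]^\beta$; and a pair of pointwise inequalities relating $(a-\overline{c})^\rho$ or $a^\rho$ to the $(1+\rho)$-th power telescope. The two tools must be invoked at a common pivot value, which forces the appearance of $\overline{c}_{{min}_X}(\rho) = \min\{\overline{c}_X(\rho),\overline{c}_X(1+\rho)\}$ in the statement. The indicator switches $\mathbbm{1}_{\rho \geq 1}$ and $\mathbbm{1}_{\rho < 1}$ reflect which pointwise inequality is active and whether a second telescoping at power $\rho$ is admissible, admissibility requiring $\rho \geq 1$ for the Boztas lemma to apply.

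First, using the definition of $\overline{c}_X(\beta)$ and replicating the chain of manipulations (\ref{eqn622})--(\ref{eqn63}) at a generic power $\beta \geq 1$, one obtains
\begin{align*}
\mathbb{E}[C_\mathcal{G}(X)^\beta] - \mathbb{E}[(C_\mathcal{G}(X) - \overline{c}_X(\beta))^\beta] \leq \left[\sum_{k=1}^{M'} q_k^{1/\beta}\right]^\beta.
\end{align*}
Because $\overline{c}_{{min}_X}(\rho) \leq \overline{c}_X(\beta)$ forces $(C - \overline{c}_{{min}_X}(\rho))^\beta \geq (C - \overline{c}_X(\beta))^\beta$, the bound remains valid after replacing $\overline{c}_X(\beta)$ by $\overline{c}_{{min}_X}(\rho)$ on the left-hand side. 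I would apply this with $\beta = 1+\rho$ for every $\rho > 0$, and additionally with $\beta = \rho$ when $\rho \geq 1$, invoking the two assumed Boztas-type conditions on $\{q_k\}$, respectively.

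Next, I would prove two pointwise inequalities valid for all $a \geq \overline{c} > 0$ by checking equality at $a = \overline{c}$ and then comparing derivatives:
\begin{align*}
\rho \geq 1:\; (a-\overline{c})^\rho &\leq \frac{a^{1+\rho} - (a-\overline{c})^{1+\rho}}{(1+\rho)\overline{c}} - \frac{\overline{c}^\rho}{1+\rho}, \\
0 < \rho \leq 1:\; a^\rho &\leq \frac{a^{1+\rho} - (a-\overline{c})^{1+\rho}}{(1+\rho)\overline{c}} + \frac{\rho\,\overline{c}^\rho}{1+\rho}.
\end{align*}
The derivative comparison reduces to $(a-\overline{c})^\rho + \rho\,\overline{c}\,(a-\overline{c})^{\rho-1} \leq a^\rho$ in the first case, which is exactly the tangent-line inequality at $a-\overline{c}$ for the convex map $t \mapsto t^\rho$, $\rho \geq 1$; and to $(a-\overline{c})^\rho + \rho\,\overline{c}\,a^{\rho-1} \leq a^\rho$ in the second, the tangent-line inequality at $a$ for the concave map $t \mapsto t^\rho$, $0 < \rho \leq 1$. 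Taking expectations with $\overline{c} = \overline{c}_{{min}_X}(\rho)$ and chaining with the telescoping bound gives the theorem: for $\rho \geq 1$, the first pointwise inequality combined with the $\beta = \rho$ telescoping yields $\mathbb{E}[C^\rho] \leq \mathbb{E}[(C-\overline{c}_{{min}_X}(\rho))^\rho] + \left[\sum_k q_k^{1/\rho}\right]^\rho$, after which the $\beta = 1+\rho$ telescoping bounds $\mathbb{E}[(C-\overline{c}_{{min}_X}(\rho))^\rho]$; for $0 < \rho \leq 1$, the second pointwise inequality directly yields an upper bound on $\mathbb{E}[C^\rho]$ in terms of the $\beta = 1+\rho$ telescope, with the additive constant $\rho\,\overline{c}_{{min}_X}^\rho(\rho)/(1+\rho) = \overline{c}_{{min}_X}^\rho(\rho) - \overline{c}_{{min}_X}^\rho(\rho)/(1+\rho)$. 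The two branches consolidate, via the indicator notation, into the single display of the theorem.

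The main obstacle is the two pointwise inequalities and tracking which regime of $\rho$ each one governs: the convex-side inequality for $\rho \geq 1$ is false for $\rho < 1$, and its concave-side dual fails for $\rho > 1$, so the proof genuinely splits into two cases that are unified only at the end through the indicator-function notation. A secondary subtlety is justifying that substituting the ``natural'' balancing cost ($\overline{c}_X(\rho)$ or $\overline{c}_X(1+\rho)$) by the smaller $\overline{c}_{{min}_X}(\rho)$ preserves the direction of every inequality; this is immediate for the telescoping step by monotonicity of $(C-\overline{c})^\beta$ in $\overline{c}$, and holds automatically in the pointwise step since those inequalities are derived for an arbitrary positive $\overline{c}$ with $a \geq \overline{c}$.
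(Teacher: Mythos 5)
Your proposal is correct and follows essentially the same route as the paper's Appendix E: the two parametric pointwise inequalities you state are exactly the paper's functions $r(u;c)\geq 0$ for $\rho\geq 1$ and $\rho\in(0,1)$, and they are chained with the Boztas-type telescoping bound $\mathbb{E}[C_\mathcal{G}(X)^\beta]-\mathbb{E}[(C_\mathcal{G}(X)-\overline{c}_X(\beta))^\beta]\leq\bigl[\sum_k q_k^{1/\beta}\bigr]^{\beta}$ at $\beta=1+\rho$ (and additionally $\beta=\rho$ when $\rho\geq 1$), just as in the paper. The only cosmetic differences are that you verify the pointwise inequalities via tangent-line (convexity/concavity) arguments rather than the mean value theorem, and you work directly with $\overline{c}_{{min}_X}(\rho)$ via monotonicity of $\overline{c}\mapsto(C-\overline{c})^\beta$ instead of the paper's case split on which of $\overline{c}_X(\rho),\overline{c}_X(1+\rho)$ is smaller.
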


\begin{proof}
The proof of the theorem can be found in {Appendix F.} 
\end{proof}

\begin{remark}
Theorem \ref{sasonbound1} may be loose for a given parameter set compared to previous upper bounds. However, we note that Theorem \ref{sasonbound1} is in similar form to Theorem \ref{theo:boztasUpperBoundupdatedtheo3_4} except it is non-recursive and assumes any real $\rho \geq 0$ rather than an integer. 
\end{remark}


{It will become evident that by utilizing the claim given above, we will be able to improve the upper bound, particularly for values of $\rho$ that are relatively small. Additionally, it is worth noting that the subsequent theorem provides an opportunity to refine this bound even further, specifically for $\rho \in (0,2]$.}

\begin{theorem} \label{Thm36}
For any guessing function $\mathcal{G}$ and the cost of guessing $C_\mathcal{G}(.)$, $\rho \in (0,2]$ and costs $c_j>1$ associated with $\{q_k\}$s for a random variable $X$ satisfying both $q_{k+1}^{\frac{1}{\rho}} \leq \frac{1}{k} (q_1^{\frac{1}{\rho}}+\dots+q_{k}^{\frac{1}{\rho}})$ and  $q_{k+1}^{\frac{1}{1+\rho}} \leq \frac{1}{k} (q_1^{\frac{1}{1+\rho}}+\dots+q_{k}^{\frac{1}{1+\rho}})$ for $ k=1,\dots, M^\prime-1$, then the $\rho$-th moment of the guessing cost is upper bounded  by
\[
  \mathbb{E}[C_\mathcal{G}(X)^\rho] \leq \left.
  \begin{cases}
    \frac{1}{\overline{c}_{{min}_X}(\rho) (1+\rho)}  \left[ \sum_{k=1}^{M^\prime} q_k^{\frac{1}{1+\rho}}\right]^{1+\rho} +  \frac{\rho \overline{c}^\rho_{{min}_X}(\rho)}{1+\rho} P(\overline{c}_{{min}_X}(\rho) \leq C_\mathcal{G}(X) < \overline{c}_{{min}_X}(\rho)+1) \\ + \left((\overline{c}_{{min}_X}(\rho)+1)^\rho - \frac{(\overline{c}_{{min}_X}(\rho)+1)^{1+\rho}-1}{\overline{c}_{{min}_X}(\rho)(1+\rho)}\right) P(C_\mathcal{G}(X) \geq \overline{c}_{{min}_X}(\rho)+1)
    & \text{for } \rho \in (0,1) \\
    \frac{1}{1+\rho}  \left[ \sum_{k=1}^{M^\prime} q_k^{\frac{1}{1+\rho}}\right]^{1+\rho}  + \frac{1}{\rho} \left[ \sum_{k=1}^{M^\prime} q_k^{\frac{1}{\rho}}\right]^{\rho} + \frac{\overline{c}_{{min}_X}^\rho(\rho)(\rho^2-\overline{c}_{{min}_X}(\rho)\rho-1)}{\rho(1+\rho)}
    & \text{for } \rho \in [1,2]
  \end{cases}
  \right.
\]
where $\overline{c}_{{min}_X}(\rho) = \min\{\overline{c}_X(\rho),\overline{c}_X(1+\rho)\}$ {and $M^\prime = \sum_{i=1}^M \lceil c_i \rceil$}. 
\end{theorem}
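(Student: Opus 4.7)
The plan is to build on the proof of Theorem~\ref{sasonbound1} by combining two telescoping inequalities that follow from \eqref{eqn62}--\eqref{eqn63} applied separately at exponents $\rho$ and $1+\rho$. Writing $\overline{c}=\overline{c}_{\min_X}(\rho)$ and using that $x\mapsto(C_\mathcal{G}(X)-x)^\rho$ is decreasing (so replacing either balancing cost by the smaller $\overline{c}$ preserves the inequalities), we obtain
\begin{align*}
\mathbb{E}[C_\mathcal{G}(X)^\rho]-\mathbb{E}[(C_\mathcal{G}(X)-\overline{c})^\rho] &\leq \left[\sum_k q_k^{1/\rho}\right]^\rho,\\
\mathbb{E}[C_\mathcal{G}(X)^{1+\rho}]-\mathbb{E}[(C_\mathcal{G}(X)-\overline{c})^{1+\rho}] &\leq \left[\sum_k q_k^{1/(1+\rho)}\right]^{1+\rho}.
\end{align*}
Both remain simultaneously valid precisely because the theorem assumes the $\{q_k\}$ ratio conditions at both exponents. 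The plan is then to glue them via pointwise estimates coming from the Mean Value Theorem (MVT) applied to $x\mapsto x^\rho$ and $x\mapsto x^{1+\rho}$.

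For $\rho\in[1,2]$, both $x^\rho$ and $x^{1+\rho}$ are convex. MVT on $x^{1+\rho}$ over $[C-\overline{c},C]$ gives $C^{1+\rho}-(C-\overline{c})^{1+\rho}=(1+\rho)\overline{c}\,\xi^\rho$ for some $\xi$ in the interval; since $\xi^\rho$ is increasing, this yields the pointwise bound $(C-\overline{c})^\rho\leq\frac{C^{1+\rho}-(C-\overline{c})^{1+\rho}}{(1+\rho)\overline{c}}$. A parallel estimate obtained from the integral representation $C^\rho-(C-\overline{c})^\rho=\int_{C-\overline{c}}^C\rho t^{\rho-1}\,dt$ (bounded using monotonicity of $t^{\rho-1}$ for $\rho\geq 1$) handles the residual. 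Substituting both into the decomposition $C^\rho=(C-\overline{c})^\rho+[C^\rho-(C-\overline{c})^\rho]$ and taking expectations, the coefficients $\frac{1}{1+\rho}$ and $\frac{1}{\rho}$ emerge as the natural integral prefactors, while the constant $\frac{\overline{c}^\rho(\rho^2-\overline{c}\rho-1)}{\rho(1+\rho)}$ collects the $\overline{c}$-only remainders after invoking the balancing equality of Definition~\ref{defbalcost}.

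For $\rho\in(0,1)$, $x^\rho$ is concave and so MVT yields $C^\rho-(C-\overline{c})^\rho\leq\rho\overline{c}(C-\overline{c})^{\rho-1}$, which is singular as $C\downarrow\overline{c}$. To tame the singularity I would split on the events $\{\overline{c}\leq C<\overline{c}+1\}$ and $\{C\geq\overline{c}+1\}$. On the latter, $(C-\overline{c})^{\rho-1}\leq 1$ makes the residual finite and, combined with the $(1+\rho)$-sum bound (which still applies since $x^{1+\rho}$ is convex for $1+\rho>1$), produces the coefficient $(\overline{c}+1)^\rho-\frac{(\overline{c}+1)^{1+\rho}-1}{\overline{c}(1+\rho)}$ multiplying $P(C\geq\overline{c}+1)$. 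On the former, a direct integral evaluation $\int_{\overline{c}}^{\overline{c}+1}\rho t^{\rho-1}\,dt=(\overline{c}+1)^\rho-\overline{c}^\rho$ together with $C^\rho\leq(\overline{c}+1)^\rho$ yields the contribution $\frac{\rho\overline{c}^\rho}{1+\rho}P(\overline{c}\leq C<\overline{c}+1)$. The first-sum coefficient $\frac{1}{\overline{c}(1+\rho)}$ carries over unchanged from Theorem~\ref{sasonbound1}.

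The principal obstacle will be the algebraic bookkeeping: the exact expressions $\frac{\overline{c}^\rho(\rho^2-\overline{c}\rho-1)}{\rho(1+\rho)}$ and $(\overline{c}+1)^\rho-\frac{(\overline{c}+1)^{1+\rho}-1}{\overline{c}(1+\rho)}$ arise from delicate cancellations between the MVT remainders and \eqref{eqn62}, and tracking them correctly (together with supplying a supplementary trivial bound on $\{C<\overline{c}\}$ if that event has positive probability) is where the proof will be most technically demanding.
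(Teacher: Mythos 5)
Your overall architecture matches the paper's Appendix F: establish pointwise inequalities in $u=C_\mathcal{G}(X)$ with parameter $c=\overline{c}_{{min}_X}(\rho)$, take expectations, and convert the resulting differences $\mathbb{E}[C_\mathcal{G}(X)^{1+\rho}]-\mathbb{E}[(C_\mathcal{G}(X)-c)^{1+\rho}]$ and $\mathbb{E}[C_\mathcal{G}(X)^{\rho}]-\mathbb{E}[(C_\mathcal{G}(X)-c)^{\rho}]$ into the $q_k$-sums via the telescoping bound \eqref{eqn63}; your observation that monotonicity in $c$ lets you replace either balancing cost by the smaller one is a valid streamlining of the paper's two-case analysis. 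The gap is that the pointwise inequalities you would need are exactly the two lemmas proved in Appendix F, and the estimates you name do not produce them. For $\rho\in[1,2]$ the required inequality is $u^\rho \le \frac{u^{1+\rho}-(u-c)^{1+\rho}}{1+\rho}+\frac{u^{\rho}-(u-c)^{\rho}}{\rho}+\frac{c^\rho(\rho^2-c\rho-1)}{\rho(1+\rho)}$, proved in the paper by showing the associated parametric function is increasing in $u$ (via convexity of $x^\rho$) and vanishes at $u=c$. Your chain --- MVT giving $(u-c)^\rho\le \frac{u^{1+\rho}-(u-c)^{1+\rho}}{c(1+\rho)}$ plus keeping the residual $u^\rho-(u-c)^\rho$ --- only yields $\mathbb{E}[C_\mathcal{G}(X)^\rho]\le \frac{1}{\overline{c}(1+\rho)}\bigl[\sum_k q_k^{1/(1+\rho)}\bigr]^{1+\rho}+\bigl[\sum_k q_k^{1/\rho}\bigr]^{\rho}$, a genuinely different bound that neither matches nor implies the statement: the coefficient $\frac{1}{\rho}$ on the second sum cannot be recovered by absorbing the difference into a constant, because $u^\rho-(u-c)^\rho\le\frac{u^\rho-(u-c)^\rho}{\rho}+K$ fails for large $u$ when $\rho>1$ (the excess grows like $(\rho-1)c\,u^{\rho-1}$), and one can exhibit parameter choices where your bound is strictly larger than the theorem's.

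Similarly for $\rho\in(0,1)$, the coefficients $\frac{\rho\overline{c}^\rho}{1+\rho}$ and $(\overline{c}+1)^\rho-\frac{(\overline{c}+1)^{1+\rho}-1}{\overline{c}(1+\rho)}$ do not come from bounding $(C_\mathcal{G}(X)-\overline{c})^{\rho-1}\le 1$ nor from the balancing equality of Definition \ref{defbalcost}; in the paper they arise by showing that the two parametric functions $r_1(u,c)=\frac{u^{1+\rho}-(u-c)^{1+\rho}}{c(1+\rho)}+\frac{\rho c^\rho}{1+\rho}-u^\rho$ and $r_2(u,c)=\frac{u^{1+\rho}-(u-c)^{1+\rho}}{c(1+\rho)}+(c+1)^\rho-\frac{(c+1)^{1+\rho}-1}{c(1+\rho)}-u^\rho$ are increasing in $u$ and vanish at the anchors $u=c$ and $u=c+1$, respectively, together with the comparison showing $r_2$'s constant is the smaller one on $\{u\ge c+1\}$. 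Your route via $C_\mathcal{G}(X)^\rho-(C_\mathcal{G}(X)-\overline{c})^\rho\le\rho\overline{c}\,(C_\mathcal{G}(X)-\overline{c})^{\rho-1}\le\rho\overline{c}$ on $\{C_\mathcal{G}(X)\ge\overline{c}+1\}$ produces a strictly weaker constant than the one claimed. In short, what you defer to ``algebraic bookkeeping'' and ``delicate cancellations'' is precisely the mathematical content of the theorem (the monotonicity arguments for these specific parametric functions), so as written the proposal does not yet constitute a proof of the stated bound.
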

\begin{proof}
The proof of the theorem can be found in {Appendix G.} 
\end{proof}

Furthermore, we provide the following recursive upper bound that can be used along with Theorem \ref{Thm36} to extend the previous result to explicit upper bounds for larger values of $\rho > 2$. 

\begin{theorem} \label{Thm37}
For any guessing function $\mathcal{G}$ and the cost of guessing $C_\mathcal{G}(.)$, $\rho \in (2,\infty)$ and costs $c_j>1$ associated with $\{q_k\}$s for a random variable $X$ satisfying $q_{k+1}^{\frac{1}{1+\rho}} \leq \frac{1}{k} (q_1^{\frac{1}{1+\rho}}+\dots+q_{k}^{\frac{1}{1+\rho}})$ for $k=1,\dots, M^\prime-1$, $\rho$-th moment of the guessing cost is upper bounded  by
\begin{align}
        \mathbb{E}[C_\mathcal{G}(X)^\rho] \leq \frac{1}{\overline{c}_X(1+\rho)(1+\rho)} \left[ \sum_{k=1}^{M^\prime} q_k^{\frac{1}{1+\rho}}\right]^{1+\rho}  + \frac{\rho \overline{c}_X(1+\rho)}{2} \mathbb{E}[C_\mathcal{G}(X)^{\rho-1}] - \frac{\rho(\rho-1)}{2(1+\rho)} \label{eqn399}
\end{align}
\end{theorem}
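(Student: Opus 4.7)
My plan is to generalize the proof of Theorem~\ref{theo:boztasUpperBoundupdatedtheo3_4} by substituting its Binomial expansion (which required integer $\rho$) with a second-order Taylor expansion of $f(t) = t^{1+\rho}$, so that the argument remains valid for arbitrary real $\rho > 2$. The first step is to reuse the telescoping chain in equations (61)--(63) with $\rho$ replaced by $1+\rho$: the defining identity of $\overline{c}_X(1+\rho)$ collapses the shifted sum, and the admissibility hypothesis $q_{k+1}^{1/(1+\rho)} \leq \tfrac{1}{k}(q_1^{1/(1+\rho)} + \ldots + q_k^{1/(1+\rho)})$ activates Boztas' lemma at order $1+\rho$, yielding
\begin{align*}
\mathbb{E}[C_\mathcal{G}(X)^{1+\rho}] - \mathbb{E}[(C_\mathcal{G}(X) - \overline{c}_X(1+\rho))^{1+\rho}] \leq \left[\sum_{k=1}^{M^\prime} q_k^{\frac{1}{1+\rho}}\right]^{1+\rho}.
\end{align*}

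The second step produces a matching lower bound on the same left-hand side. For $y \geq a > 0$ and $\rho > 2$, Taylor's theorem with Lagrange remainder gives $(y-a)^{1+\rho} = y^{1+\rho} - (1+\rho)\,a\,y^\rho + \binom{1+\rho}{2}\,a^2\,\xi^{\rho-1}$ for some $\xi \in (y-a, y)$, and the monotonicity $\xi^{\rho-1} \leq y^{\rho-1}$ (valid since $\rho-1 > 0$) yields
\begin{align*}
y^{1+\rho} - (y-a)^{1+\rho} \geq (1+\rho)\,a\,y^\rho - \binom{1+\rho}{2}\,a^2\,y^{\rho-1}.
\end{align*}
Substituting $y = C_\mathcal{G}(X)$ and $a = \overline{c}_X(1+\rho)$, taking expectations, chaining with the Boztas bound above, and dividing through by $(1+\rho)\overline{c}_X(1+\rho) > 0$ delivers the recursion of the theorem up to the additive constant.

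To recover the subtractive term $-\rho(\rho-1)/[2(1+\rho)]$ in \eqref{eqn399}, I would sharpen the Boztas step by isolating the $k=1$ contribution in $\sum_k(k^{1+\rho} - (k-1)^{1+\rho})q_k$: the Taylor lower bound $(1+\rho)k^\rho - \binom{1+\rho}{2}k^{\rho-1}$ evaluated at $k=1$ equals $(1+\rho) - \tfrac{\rho(1+\rho)}{2}$, which underestimates the true value $1$ by exactly $\rho(\rho-1)/2$; carrying this pointwise surplus through the telescoping and the final normalization by $(1+\rho)\overline{c}_X(1+\rho)$ contributes the stated constant. The main obstacle sits in this last refinement and is twofold: the pointwise Taylor inequality in Step~2 can fail on realizations with $C_\mathcal{G}(X) < \overline{c}_X(1+\rho)$, so one must appeal to the defining identity of the balancing cost to validate the lower bound in expectation rather than pointwise; and the $k=1$ surplus must be aggregated into a numerical constant independent of $q_1$, which requires careful bookkeeping of how the weights $\{q_k\}$ interact with the denominator $(1+\rho)\overline{c}_X(1+\rho)$ in the final normalization.
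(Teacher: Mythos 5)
Your Steps 1--2 are correct and match the outer shell of the paper's argument: the Boztas lemma at order $1+\rho$ yields
\begin{align*}
\mathbb{E}[C_\mathcal{G}(X)^{1+\rho}] - \mathbb{E}\left[(C_\mathcal{G}(X)-\overline{c}_X(1+\rho))^{1+\rho}\right] \leq \left[\sum_{k=1}^{M'} q_k^{1/(1+\rho)}\right]^{1+\rho},
\end{align*}
and your second-order Taylor expansion of $t^{1+\rho}$ (with the Lagrange remainder and $\xi^{\rho-1}\leq y^{\rho-1}$) gives the pointwise bound $y^{1+\rho}-(y-a)^{1+\rho}\geq(1+\rho)\,a\,y^\rho - \tfrac{(1+\rho)\rho}{2}\,a^2\,y^{\rho-1}$. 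Chaining and dividing by $(1+\rho)\overline{c}_X(1+\rho)$ reproduces the recursion \emph{without} the $-\rho(\rho-1)/[2(1+\rho)]$ term, exactly as you say.

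Your Step 3, however, has a genuine gap. The ``surplus'' you isolate at $k=1$ is $\tfrac{\rho(\rho-1)}{2}\,q_1$, a quantity weighted by $q_1 = p_1$, not a universal constant; after dividing by $(1+\rho)\overline{c}_X(1+\rho)$ you would obtain $\tfrac{\rho(\rho-1)\,q_1}{2(1+\rho)\overline{c}_X(1+\rho)}$, which does not equal $\tfrac{\rho(\rho-1)}{2(1+\rho)}$ for general distributions, and that surplus also sits on the wrong side of the chain (inside the \emph{upper} bound on the moment difference, whereas the constant must come out of a sharper \emph{lower} bound on $y^{1+\rho}-(y-a)^{1+\rho}$). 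The paper does not touch the Boztas step at all. It instead strengthens the pointwise inequality of your Step 2: define
\begin{align*}
r(u,c) = \frac{u^{1+\rho}-(u-c)^{1+\rho}}{c(1+\rho)} - u^\rho + \frac{\rho\,c\,u^{\rho-1}}{2} - \frac{\rho(\rho-1)}{2(1+\rho)},
\end{align*}
show $\partial_u r(u,c) = \tfrac{c^2}{6}\rho(\rho-1)(\rho-2)(u')^{\rho-3}\geq 0$ using a \emph{third}-order Taylor expansion with Lagrange remainder applied to $v(x)=x^\rho$ (i.e.\ to the derivative of $t^{1+\rho}$), and check $r(c,c)=\tfrac{\rho(\rho-1)}{2(1+\rho)}(c^\rho-1)\geq 0$ for $c\geq 1$. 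The constant is precisely the value that makes $r(1,1)=0$, and the monotonicity guarantees $r(u,c)\geq r(c,c)\geq 0$ for $u\geq c\geq 1$. Taking $u=C_\mathcal{G}(X)$, $c=\overline{c}_X(1+\rho)$, expectations, and then the Boztas bound gives the theorem. So the fix you want is to go one Taylor order deeper on the Taylor side (your Step 2), not to surgically rework the Boztas side (your Step 3).
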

\begin{proof}
The proof of the theorem can be found in {Appendix H.} 
\end{proof}

\begin{remark} \label{remark32}
Using Theorem \ref{Thm37}, we can find an explicit bound for $\rho$'s satisfying $i+1 \geq \rho > i$ for all integers $i > 2$. We can obtain these bounds by applying Equation \eqref{eqn399} for $i-2$ times using the result of Theorem \ref{Thm36}. In this case however, the set of conditions would be more restrictive i.e., we would require to satisfy $q_{k+1}^{\frac{1}{1+\rho}} \leq \frac{1}{k} (q_1^{\frac{1}{1+\rho}}+\dots+q_{k}^{\frac{1}{1+\rho}})$ for $\rho,\rho-1,\dots,\rho-i$ and $k=1,\dots, M^\prime-1$ all at the same time. 
\end{remark}

In order to help understand Remark \ref{remark32} with an example, let us consider for instance $\rho \in (2,3]$. In this case we can apply the result of Theorem \ref{Thm36} to get
\begin{align}
    \mathbb{E}[C_\mathcal{G}(X)^{\rho-1}] \leq \frac{1}{\rho}  \left[ \sum_{k=1}^{M^\prime} q_k^{\frac{1}{\rho}}\right]^{\rho}  + \frac{1}{\rho-1} \left[ \sum_{k=1}^{M^\prime} q_k^{\frac{1}{\rho-1}}\right]^{\rho-1} + \frac{\overline{c}_{{min}_X}^{\rho-1}(\rho-1)((\rho-1)^2-\overline{c}_{{min}_X}(\rho-1)(\rho-1)-1)}{(\rho-1)\rho}
\end{align}
which is subject to $q_{k+1}^{\frac{1}{\rho}} \leq \frac{1}{k} (q_1^{\frac{1}{\rho}}+\dots+q_{k}^{\frac{1}{\rho}})$ and $q_{k+1}^{\frac{1}{\rho-1}} \leq \frac{1}{k} (q_1^{\frac{1}{\rho-1}}+\dots+q_{k}^{\frac{1}{\rho-1}})$. Then using Theorem \ref{Thm37} the upper bound for $\rho \in (2,3]$ can be expressed in a closed form as 
\begin{align}
    \mathbb{E}[C_\mathcal{G}(X)^\rho] &\leq \frac{1}{\overline{c}_X(1+\rho)(1+\rho)} \left[ \sum_{k=1}^{M^\prime} q_k^{\frac{1}{1+\rho}}\right]^{1+\rho} - \frac{\rho(\rho-1)}{2(1+\rho)} \nonumber \\ 
    & \ \ \ + \overline{c}_X(1+\rho) \left[ \frac{1}{2}  \left[ \sum_{k=1}^{M^\prime} q_k^{\frac{1}{\rho}}\right]^{\rho} + \frac{\rho}{2(\rho-1)} \left[ \sum_{k=1}^{M^\prime} q_k^{\frac{1}{\rho-1}}\right]^{\rho-1} + \frac{\overline{c}_{{min}_X}^{\rho-1}(\rho-1)(\rho^2-2\rho-\overline{c}_{{min}_X}^{\rho-1}(\rho-1)(\rho-1))}{2(\rho-1)} \right]  
\end{align}
with the additional constraint $q_{k+1}^{\frac{1}{1+\rho}} \leq \frac{1}{k} (q_1^{\frac{1}{1+\rho}}+\dots+q_{k}^{\frac{1}{1+\rho}})$ for $k=1,\dots, M^\prime-1$. 

\begin{remark}
It is not hard to verify that the bounds given in Theorems \ref{sasonbound1}, \ref{Thm36} and \ref{Thm37} will be reduced to Sason's bounds given in \cite{sason2018} if we assume constant and unit costs. Hence these bounds are useful extensions and characterize a more general scenario. 
\end{remark}

\subsection{Extension of Dragomir's bounds \cite{Dragomir98}}

Finally, we would like to remark on the Dragomir's bounds which was originally presented in the context of guesswork. These bounds have been introduced right after Boztas' bounds are published \cite{Dragomir98}. Unfortunatelly these bounds are quite loose particularly in the context of guessing cost. The proposed bounds were based on the following theorem. 
\begin{theorem}
Let $a_i,b_i \in \mathbb{R}$ for $i \in {[n]}$ such that
\begin{align}
    a_{min} \leq a_i \leq a_{max}, \ \ b_{min} \leq b_i \leq b_{max} \ \ \text{ for all } \ \ i =1,\dots,n
\end{align}
with $a_{max} = \min\{a_i\}$ and $b_{max} = \min\{b_i\}$.  Then, we have the inequality
\begin{align}
    \left| \frac{1}{n}\sum_{i=1}^n a_i b_i - \left(\frac{1}{n} \sum_{i=1}^n a_i \right)\left(\frac{1}{n} \sum_{i=1}^n b_i \right) \right| \leq \frac{1}{4}(a_{max}-a_{min})(b_{max}-b_{min}) \label{eqn72}
\end{align}
\end{theorem}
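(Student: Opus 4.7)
The inequality in the final theorem is a discrete Grüss-type inequality, and the plan is to prove it via the classical chain: Korkine identity $\to$ Cauchy--Schwarz $\to$ Popoviciu's variance bound.

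First, I would rewrite the left-hand side as a covariance-like quantity using the well-known Korkine (Lagrange) identity: for any real sequences $a_i,b_i$, one has
\begin{align}
\frac{1}{n}\sum_{i=1}^n a_i b_i - \left(\frac{1}{n}\sum_{i=1}^n a_i\right)\left(\frac{1}{n}\sum_{i=1}^n b_i\right) = \frac{1}{2n^2}\sum_{i,j=1}^n (a_i-a_j)(b_i-b_j).
\end{align}
This is obtained by simple expansion of the right-hand side, so it is a routine step.

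Next, I would apply the Cauchy--Schwarz inequality to the double sum on the right:
\begin{align}
\Bigl|\sum_{i,j=1}^n (a_i-a_j)(b_i-b_j)\Bigr| \leq \left(\sum_{i,j=1}^n (a_i-a_j)^2\right)^{1/2}\left(\sum_{i,j=1}^n (b_i-b_j)^2\right)^{1/2}.
\end{align}
Each factor is in turn a variance up to a constant: expanding gives $\sum_{i,j}(a_i-a_j)^2 = 2n^2 \sigma_a^2$, where $\sigma_a^2 \triangleq \frac{1}{n}\sum_i a_i^2 - \bigl(\frac{1}{n}\sum_i a_i\bigr)^2$, and similarly for $b$.

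The crucial step is to bound each variance by the squared half-range using Popoviciu's inequality:
\begin{align}
\sigma_a^2 \leq \frac{(a_{max}-a_{min})^2}{4}, \qquad \sigma_b^2 \leq \frac{(b_{max}-b_{min})^2}{4}.
\end{align}
I would prove this directly by shifting to the centered variables $\tilde a_i = a_i - \tfrac{1}{2}(a_{min}+a_{max})$, which satisfy $|\tilde a_i|\leq \tfrac{1}{2}(a_{max}-a_{min})$; since variance is translation-invariant, $\sigma_a^2 = \frac{1}{n}\sum \tilde a_i^2 - \bar{\tilde a}^2 \leq \frac{1}{n}\sum \tilde a_i^2 \leq \tfrac{1}{4}(a_{max}-a_{min})^2$. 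Combining the three inequalities yields the factor $1/4$ and completes the proof.

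The main (minor) obstacle is just verifying Popoviciu's bound cleanly; everything else is standard algebraic manipulation. I note that there is a small typo in the theorem statement ($a_{max}=\min\{a_i\}$ should read $a_{max}=\max\{a_i\}$, and likewise for $b$), which I would silently correct in the proof. The hypotheses $a_{min}\leq a_i\leq a_{max}$ and $b_{min}\leq b_i \leq b_{max}$ are exactly what is needed to invoke Popoviciu, so no additional assumption is required.
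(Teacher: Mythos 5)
Your proposal is correct, but note that the paper itself does not prove this theorem at all: its ``proof'' is a one-line citation to Dragomir and van der Hoek \cite{Dragomir98}, so there is no in-paper argument to match. What you supply is the classical self-contained proof of the discrete Gr\"{u}ss (pre-Gr\"{u}ss) inequality: the Korkine identity turns the left-hand side into $\frac{1}{2n^2}\sum_{i,j}(a_i-a_j)(b_i-b_j)$, Cauchy--Schwarz reduces it to $\sigma_a\sigma_b$, and Popoviciu's bound $\sigma_a^2\le\frac{1}{4}(a_{max}-a_{min})^2$ (your centering argument for it is valid) yields the constant $\frac{1}{4}$; each algebraic step checks out, including $\sum_{i,j}(a_i-a_j)^2=2n^2\sigma_a^2$. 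This is essentially the same route as the cited source, so the value of your write-up is that it makes the paper self-contained rather than that it departs from it; as an intermediate product you even get the slightly stronger covariance bound $\bigl|\frac{1}{n}\sum_i a_ib_i-\bar a\,\bar b\bigr|\le\sigma_a\sigma_b$. You are also right that the statement's hypotheses contain a typo ($a_{max}=\min\{a_i\}$ and $b_{max}=\min\{b_i\}$ should read $\max$), and correcting it silently is appropriate since the inequality is otherwise vacuous or false as written.
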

\begin{proof}
The proof can be found in \cite{Dragomir98}.
\end{proof}
Let $a_i=f_i^\rho=\left(\sum_{j=1}^ic_j\right)^\rho$ and $b_i=p_i$ in equation (\ref{eqn72}). Also, let us define random variable $U$ with exactly the same cost distribution $\mathcal{C}$ of $X$ and uniform probability distribution, then for any gussing strategy $\mathcal{G}$ we have
\begin{align}
    \left|  \mathbb{E}[C_\mathcal{G}(X)^\rho] - \mathbb{E}[C_\mathcal{G}(U)^\rho]  \right| \leq \frac{M(p_{max}-p_{min})}{4}\left(\left(\sum_{j=1}^Mc_j\right)^\rho-c_{min}^\rho\right)
\end{align}
where $c_{min} = \min\{c_i\}$. Note that this relation defines both an upper and a lower bound for $\mathbb{E}[C_\mathcal{G}(X)^\rho]$. The bound can be tightened using the optimal guessing strategy $\mathcal{G}^*$. However, Dragomir's bounds are generally looser compared to {that of} Sason's and hence we omit to present numerical results for this bound.

\subsection{Numerical Results}

First, let us provide several numerical results to be able to illustrate how close the provided bounds are for finite values of costs, $\rho$ and $M$. The exact moments for the optimal guessing strategy are calculated using Algorithm 1 and denoted by OPT. The results are provided in Fig. \ref{fig:costofguessingresults}. More specifically, inspired from the past research \cite{sason2018}, we consider the quantity $\frac{1}{\rho} \ln\mathbb{E}[{C_{\mathcal{G}^*}(X)}^{\rho}]$ in our comparisons where $\rho \in [0.25,10]$. The probability of each choice is generated using geometric distribution as assumed in \cite{sason2018} with the restricted probability distribution $P_X(x) = (1-a)a^{x-1}/(1-a^M)$ using $M=32$ and the parameter $a=0.9$.  The non-integer cost values are generated based on a truncated normal distribution defined in the range $(1,100)$ with the same mean and variance i.e., $\mu=\sigma^2=16$.

\begin{figure}[!t] 
	\centering
\includegraphics[width=\textwidth]{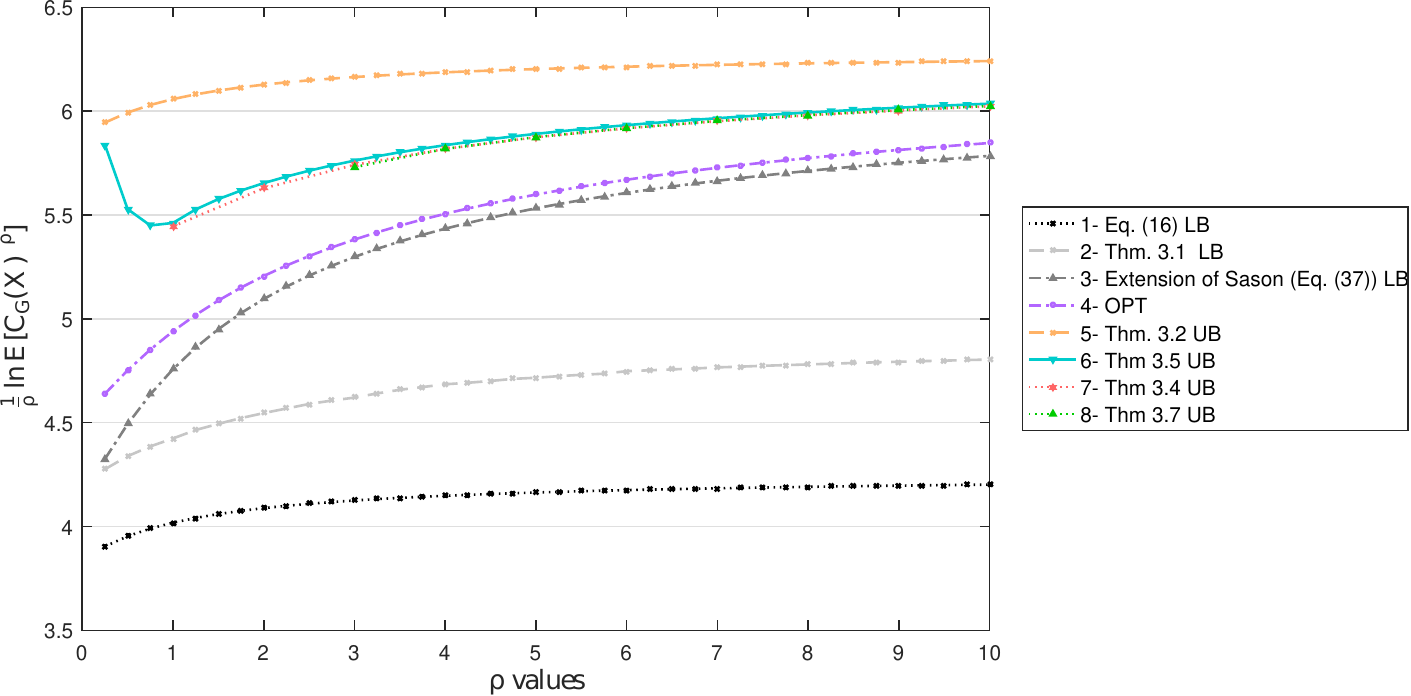} 
\centering \caption{This figure presents the exact value as well as the lower and upper bounds for $\frac{1}{\rho} {\ln\mathbb{E}[{C_{\mathcal{G}^*}(X)}^{\rho}] }$.}\label{fig:costofguessingresults}
\end{figure}

As shown in Fig. \ref{fig:costofguessingresults}, the closest values to   $\frac{1}{\rho} \ln\mathbb{E}[{C_{\mathcal{G}^{*}}(X)}^{\rho}]$ for $\rho \in (0.25,10]$, are given by the Eq. (\ref{sasonlb}), which are followed by the bounds provided in Theorem \ref{thm301} and Eq.(\ref{lowerboundalter}). On average, the lower bounds of $\frac{1}{\rho} \ln\mathbb{E}[{C_{\mathcal{G}^{*}}(X)}^{\rho}]$ using Eq. (\ref{sasonlb}) is $16.3\%$ and $30.5\%$  higher than that of bounds due to  Theorem \ref{thm301} and Eq. (\ref{lowerboundalter}), respectively. In fact, it is interesting to show that bounds of Theorem \ref{thm301} and Eq. (\ref{lowerboundalter}) are not asymptotically tight. The tightest bound is achieved by the bound given in Theorem \ref{theo:boztasUpperBoundupdatedtheo3_4} among other alternative  upper bounds. The bounds given in  Theorem \ref{theo:boztasUpperBoundupdatedtheo3_4} are $5.98\%$ and $0.391\%$ less than the bounds given in Theorem \ref{thm31} and Theorem \ref{sasonbound1} for $\rho \in \{1,2,\hdots,10\}$, respectively. Moreover, for $\rho \in \{4,5,\hdots,10\}$ the bound values of Theorem \ref{theo:boztasUpperBoundupdatedtheo3_4} are $0.023\%$ less than that of Theorem \ref{Thm37}. Notice also that bounds given in Theorem \ref{theo:boztasUpperBoundupdatedtheo3_4} and 
Theorem \ref{Thm37} are only valid for integer values of $\rho$ and Theorems \ref{Thm36}  and \ref{Thm37} are complementary and should be considered together. 



 \section{An Application: Distributed Data Regeneration}


{In this section, we provide an application of the guessing cost within the context of a distributed data storage in which data
content regeneration and repair are necessary to maintain the data durability. Such a data repair application scenario involves a
slight variation of the guessing cost problem (introduced earlier), which is shown to be quite useful in this section in deriving
optimal protocol design for highly dynamic  networks, {for instance, wireless networks or mobile ad-hoc networks.}
 
 \subsection{Long Block Length Sparse Graph Codes With A Back-up Master}
 
Let us consider a cellular network with a master-slave configuration for a distributed data storage scenario in which the data protection is provided by a long block length $(n,k)$ sparse graph code. Each slave node in the system is assumed to store a single coded symbol of a \textit{codeword}. In addition, a master node constitutes a backup system (a.k.a. a base station) and keeps the copy of all coded symbols. If one of the slave nodes fails, departs the cellular network, or becomes permanently unavailable, {it is interpreted as} loss of a coded symbol in the system. Thanks to the multiple check relations defined for that lost symbol in the sparse graph code, there would be multiple options of repair for that specific node. To be able to maintain instantaneous reliability, this symbol is required to be repaired as soon as possible. 
 
In a highly dynamic network \cite{Le2017}, it may not be possible to obtain the status of all nodes (due to other unexpected failures or network link breakages and congestion) instantaneously, or else it may be time and bandwidth costly to contact the master directly and retrieve that information. Therefore, in that case, the newcomer node needs to adapt the best guessing strategy and choose among the multiple repair options to complete the repair process (either exactly or functionally) as quickly as possible using minimum network resources.
 
\begin{figure}[t!]
\centering
  \includegraphics[width=0.75\linewidth]{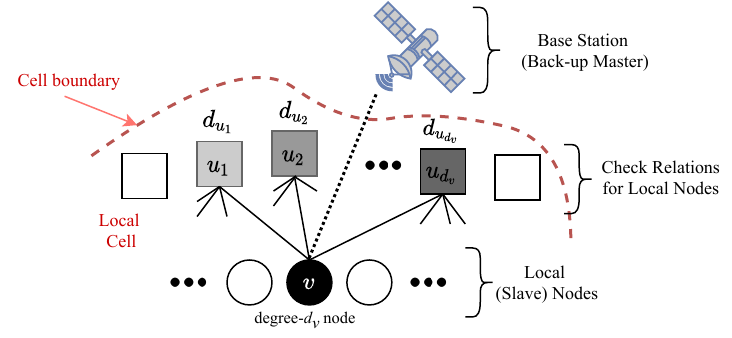}
  \caption{An example repair process using an LDPC code Tanner graph. $d_v$ represents the degree number of the lost symbol/node $v$ whereas the $d_{u_1},\dots,d_{u_{d_v}}$ are the degrees of the potential repair check relations.}
  \label{fig:sgc2}
  \vspace{-4mm}
\end{figure}
 
Let us suppose one of the degree-$d_v$ symbols of an irregular LDPC code, shown as a black-colored node in Fig. \ref{fig:sgc2} is to be exactly repaired. Suppose it is connected to check nodes of degrees $d_{u_1},d_{u_2},\dots,d_{u_{d_v}}$, as shown in the same figure. Accordingly, let us define the costs associated with each repair option to be the number of downloaded symbols, i.e., $c_j\triangleq d_{u_j}-1$\footnote{Here, due to large block length assumption, it is assumed that subsequent guesses cannot help each other. In addition, the cost of download can also be scaled with the link weight for a more realistic communication scenario. In an alternative context, the physical distances between nodes could have also been part of this cost definition, making the rest of our discussion more general.} for all $j$ satisfying $1 \leq j \leq M-1$ with $M=d_v+1$  i.e., each symbol download within the same cell has a unit cost.  One of the differences of this application scenario compared to the standard cost of guessing is that the probabilities are functions of costs as will be explored next. The following proposition establishes a condition for contacting the master node under optimal guessing context and independent node loss model. 

\begin{theorem} \label{prop2}
 Let each slave node to be independently unavailable/failed with probability $q > 0$. Assuming a degree-$d_v$ node is lost, let also $c_M$ be the cost of contacting the back-up node and $c_{max} \triangleq \max\{c_1,c_2,\dots,c_{d_v}\}$ satisfying 
 \begin{eqnarray}
 c_M \geq c_{max}((1-q)^{-c_{max}}-1)  \geq c_{max} \label{eqprop41}
 \end{eqnarray}
 where $M=d_v+1$. Then guessing check relations  as well as the back-up master   in the order of non-decreasing costs minimizes the average cost of downloaded symbols in the node repair process. {(Here we use the guessing term  for trying these relations  until  the lost symbol is repaired or using back-up master if this symbol could not be repaired using local nodes' check relations.)}
\end{theorem}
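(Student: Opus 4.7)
The plan is to recast the repair process as an instance of the guessing cost problem of Section~II and then invoke the necessary optimality condition of Remark~2.1 (the $\rho=1$ case of Corollary~2.1.1) on the non-decreasing cost ordering of the $M=d_v+1$ options.

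First I would identify the probability model. Under the independent failure assumption, the intrinsic probability that the $j$-th check relation yields a valid repair is $P_j=(1-q)^{c_j}$, since all $c_j$ participating slave nodes must simultaneously be available, while the master is deterministically reachable and therefore $P_M=1$. For a guessing order $\pi$, the probability that the procedure first succeeds at position $i$ is $p_{\pi(i)}=P_{\pi(i)}\prod_{k<i}(1-P_{\pi(k)})$; these $p$'s sum to $1$ because the master's presence guarantees termination, and substituting into $\mathbb{E}[C_\mathcal{G}]=\sum_i \bigl(\sum_{k\leq i}c_{\pi(k)}\bigr)\,p_{\pi(i)}$ shows that the expected number of downloaded symbols equals the mean guessing cost.

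I would take the candidate strategy to be the non-decreasing cost ordering with the master at position $M$, which is legitimate because the hypothesis $c_{max}((1-q)^{-c_{max}}-1)\geq c_{max}$ forces $c_M\geq c_{max}$. Remark~2.1 then demands $c_ip_j\leq c_jp_i$ for every pair $i\leq j$. For two check relations $i<j$, a direct expansion gives $\frac{c_ip_j}{c_jp_i}=\frac{c_i}{c_j}(1-q)^{c_j-c_i}\prod_{i\leq k<j}(1-P_{\pi(k)})$, a product of three factors each at most $1$, so the condition is automatic on the check-relation block.

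The non-trivial case is the pair $(i,M)$ for $i\in[d_v]$, where the same manipulation reduces Remark~2.1's condition to $c_M\geq c_i\bigl((1-q)^{-c_i}-1\bigr)\prod_{k=i+1}^{d_v}\bigl(1-(1-q)^{c_k}\bigr)$. The main obstacle I anticipate is showing that the right-hand side is maximised at $i=d_v$, where the trailing product is empty and the bound becomes exactly $c_M\geq c_{max}((1-q)^{-c_{max}}-1)$. This should follow from two elementary observations: the map $c\mapsto c((1-q)^{-c}-1)$ is increasing in $c$ for $c>0$ and $q\in(0,1)$, so the prefactor $c_i((1-q)^{-c_i}-1)$ grows along the non-decreasing cost ordering, and each trailing factor $1-(1-q)^{c_k}$ lies in $(0,1)$, so dropping it can only enlarge the bound. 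Once this monotonicity reduction is justified, the single hypothesis $c_M\geq c_{max}((1-q)^{-c_{max}}-1)$ simultaneously certifies Remark~2.1 for every pair $(i,M)$, confirming that the proposed non-decreasing cost ordering minimises the average number of downloaded symbols.
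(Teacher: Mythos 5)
Your argument is correct and lands on the same conclusion, but it verifies the optimality condition along a genuinely different route from the paper. The paper checks the Remark~2.1 condition $c_ip_j\leq c_jp_i$ only for consecutive pairs: for check--check pairs it establishes the recursion $p_j=p_{j-1}\bigl[(1-q)^{c_j-c_{j-1}}-(1-q)^{c_j}\bigr]$, from which non-decreasing costs immediately give non-increasing probabilities and hence the pairwise condition; for the master it only examines the adjacent pair $(M-1,M)$, which collapses directly to the theorem's upper bound $c_M\geq c_{max}\bigl((1-q)^{-c_{max}}-1\bigr)$; the remaining non-adjacent pairs $(i,M)$ with $i<M-1$ are covered by the chaining argument proved in Appendix~A for Proposition~2.1. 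You instead verify the condition directly for every pair $(i,j)$: for the check block by expanding $\frac{c_ip_j}{c_jp_i}=\frac{c_i}{c_j}(1-q)^{c_j-c_i}\prod_{k=i}^{j-1}\bigl(1-(1-q)^{c_k}\bigr)$ as a product of sub-unit factors, and for the check--master pairs by reducing to $c_M\geq c_i\bigl((1-q)^{-c_i}-1\bigr)\prod_{k=i+1}^{d_v}\bigl(1-(1-q)^{c_k}\bigr)$ and showing the right-hand side peaks at $i=d_v$. The cost of forgoing the chaining step is the extra monotonicity lemma that $c\mapsto c\bigl((1-q)^{-c}-1\bigr)$ is increasing, which the paper never needs; what you buy is a self-contained pairwise check that does not lean on Proposition~2.1's telescoping argument, which the paper leaves implicit. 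Your use of the lower bound $c_{max}\bigl((1-q)^{-c_{max}}-1\bigr)\geq c_{max}$ to justify that the master sits last in the non-decreasing ordering matches the paper's reasoning.
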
 
\begin{proof}
Assuming independence, the probability that $j$-th check node will successfully repair the gray-colored node of Fig. \ref{fig:sgc2} can be shown to be of the form
 \begin{align}
p_j = (1-q)^{c_j} \prod_{i=1}^{j-1} (1-(1-q)^{c_i})  \textrm{  with  }  p_{M} = 1 - \sum_{j=1}^{d_v} p_j = \prod_{i=1}^{M-1} (1-(1-q)^{c_i}) \label{eqn43}
 \end{align}
from which we realize that the probabilities are dependent on the costs. In a more general version of the problem, the costs of the check nodes may take values independent of the degrees (e.g., the communication cost  required for obtaining a variable node may be different). In search of an optimal strategy, we need to think about $p_j$'s and $c_j$'s at the same time. Fortunately from equation (\ref{eqn43}), we can express $p_j$'s recursively for $j \leq M-1$,
\begin{eqnarray}
p_j = p_{j-1} \left[(1-q)^{c_j-c_{j-1}}-(1-q)^{c_j}\right]
\end{eqnarray}
which implies that if $c_{j-1} \leq c_j$, due to $0 < (1-q)^s \leq (1-q)^t \leq 1$ for all positive $t \leq s$ and $q\in (0,1)$, we shall have $p_{j} \leq p_{j-1}$. Therefore, rearranging costs in non-decreasing order leads to rearrangement of probabilities in non-increasing order. But this result implies that the necessary condition of Remark \ref{prop1} for $\rho=1$ i.e., $c_ip_j \leq c_jp_i$ is satisfied for all $i,j \in \{1,2,\dots,M-1\}$ and $i \leq j$. Note that if $c_{j-1} > c_j$ was the case, we would not be able to satisfy the necessary condition. 
In order to contact the master (back-up) node when no neighboring nodes are able to help, we then have to satisfy the necessary condition $c_Mp_{M-1} \geq c_{M-1}p_M$. Using equation \eqref{eqn43}, this condition can be reexpressed as
\begin{equation}
   c_M(1-q)^{c_{M-1}}\geq c_{M-1}(1-(1-q)^{c_{M-1}}) \label{inequal1}
\end{equation}
which is the upper bound in inequality \eqref{eqprop41} with $c_{max} = c_{M-1}$. We finally recognize that lower bound  in  inequality \eqref{eqprop41} while costs satisfying $c_i > 1$ means $(1-q)^{c_{max}} \leq 1/2$\footnote{This implies an upper bound on $q$ i.e., $q \geq 1-2
^{-1/c_{max}}$.}. Considering it with inequality in \eqref{inequal1}, this condition reduces to $c_M \geq c_{M-1}$ which completes the proof of the optimality of the non-decreasing cost order. We finally note that the lower bound inequality need not be satisfied for the back-up master to be the last resort. In fact, the upper bound inequality is a sufficient condition for that. However $c_M \geq c_{M-1}$ becomes only necessary if the lower bound inequality is satisfied and hence the assertion of the theorem follows. 
\end{proof}


Let us associate the random variable $X_v$ with a variable node $v$ (having degree-$d_v$) that characterizes the identification of the right check node for a successful repair. Note that a specific node failure pattern determines usable options of repair for that variable node. For instance given the finite set $\mathcal{X}_u = \{u_1,u_2,\dots,u_{d_v}\}$ associated with the costs $\mathcal{C}_u = \{d_{u_1}-1,d_{u_2}-1,\dots,d_{u_{d_v}}-1\}$,  $X_v = u_j$  indicates that the check relation $u_j$ would be the first option for repair (i.e., $\mathcal{G}(X_u=u_j)=1$) if $d_{u_j} \leq d_{u_i}$ for all $i \in [d_v], i \not= j$ (due to proposition 4.1) for a successful regeneration.  Furthermore, let $\mathcal{G}^*(X_1,X_2,\dots,X_n)$  denote the optimal guessing function for the value of a joint realization of independent random variables $X_1,X_2,\dots,X_n$.  Then due to Theorem \ref{thmexponent}, for large enough block length (number of nodes $n$ tends large), the moments of repair bandwidth (cost in terms of downloaded symbols) using the optimal guessing strategy can be well approximated by the Rényi entropy rate (with equality in the ideal case),
\begin{eqnarray}
\mathbb{E}[C_{\mathcal{G}^*}(X_1,X_2,\dots,X_n)^\rho] \thickapprox \prod_{{i=1}}^{{n}} \exp \{ \rho H_{\frac{1}{1+\rho}}(X_i) \} = \exp\{n\rho \mathcal{R}_{\frac{1}{1+\rho}} (\{{X}_i\})\} \label{eqn61}
\end{eqnarray}
due to costs are defined to be integers in our application scenario.  


 \subsection{Data Repair with Multiple Passes: Density and Cost Evolution}
 
In the previous subsection we have considered a static case i.e., a realization of an LDPC code ensemble i.e., a fixed bipartite graph representation. On the other hand, check and variable node degrees of a typical LDPC code ensemble is  governed by degree distributions. As can be seen in Fig. \ref{fig:sgc2}, the variable node of interest as well as its neighboring check nodes of degrees $d_{u_1},d_{u_2},\dots,d_{u_{d_v}}$, are shown. One can think of these values as realizations of the variable and check node degree distributions of LDPC codes typically expressed in polynomial forms as $\Lambda(x)=\sum_{d=1}^{D_v} \Lambda_d x^d$ and $\Phi(x)=\sum_{d=1}^{D_c} \Phi_d x^d$, respectively. Furthermore, we can define \textit{edge-perspective} degree distributions for variable and check nodes in terms of node-perspective ones as follows \cite{Richardson2001},
\begin{eqnarray}
\lambda(x)= \frac{\Lambda'(x)}{\Lambda'(1)}=\sum_{d=1}^{D_v} \lambda_d x^{d-1}, \ \ \ \phi(x)= \frac{\Phi'(x)}{\Phi'(1)} = \sum_{d=1}^{D_c} \phi_d x^{d-1}.
\end{eqnarray}
where the code rate ($r_{LDPC}$) can be described in terms of  edge-perspective degree distributions as follows
\begin{eqnarray}
r_{LDPC} = \frac{k}{n} = 1 - \frac{\int_0^1 \phi(x)dx}{\int_0^1 \lambda(x)dx} = 1 - \frac{\sum_d \phi_d/d}{\sum_d \lambda_d/d}.
\end{eqnarray}
where $\phi_d (\lambda_d)$ is the probability that when we select an edge from the underlying bipartite graph randomly, it belongs to the set of the edges of a degree-$d$ check (variable) node.

In Proposition \ref{prop2},  we have 1) conditioned on the node degrees of variable and check nodes and 2) considered only a single pass of the iterative repair strategy.  
Also, depending on the node failure patterns, it is likely that none of the check relations would be able to help with the repair process in the initial pass which would require us to decide on the successful completion of the repair process. One option is to download the missing content from the backup master and cease the repair process. The alternative option is to execute one more iteration to reduce the slave node unavailability/failure probability\footnote{For this option to be reliable, we have to have the simultaneous repair successes of the other slave nodes which were to be repaired in the previous pass and the assumption that no further node losses occur during the consecutive iterations.}. Note that in this scenario, the node repairs are decentralized and take place in the absence of node unavailability/failure information. 


Let $\{c_{u_1},\dots,c_{u_{d_v}}\}$ be the list of random variables characterizing the costs of contacting the check nodes $u_1,\dots,u_{d_v}$, and $\{c_{u_{(1)}}\leq \dots \leq c_{u_{(d_v)}}\}$ denote these random variables rearranged in
non-decreasing order of magnitude with $c_{max} = c_{u_{(d_v)}}$ representing the maximum of the cost values. Based on proposition \ref{prop2},  an optimal guessing strategy shall order the check nodes on the basis of their degrees (i.e., costs) assuming independent node failures. Accordingly, let us define 
\begin{eqnarray}
\phi^{(z)}(x) \triangleq \sum_{d=1}^{D_c} \phi_d^{(z)} x^{d-1}
\end{eqnarray}
to be the distribution of the $z$-th order statistic ($z$-th smallest) of the costs i.e., $c_{u_{(z)}} \sim \phi^{(z)}(x)$ for $1 \leq z \leq d_v$. Note that here $\phi_d^{(z)}$ refers to the probability that a randomly selected edge belongs to a degree-$d$ check node which gets selected in the $z$-th position in our guessing strategy when we order costs in non-decreasing order. Then, given that the variable node under repair has $d_v$ check options, the probability that $j$-th check node selection of the optimal guessing strategy will successfully repair the lost node in the $l$-th pass can be shown based on conditioning arguments to be\footnote{Here, we note that the repairing variable node does not download the corresponding symbols unless the check relations ensure that the repair can complete successfully.}
 \begin{align}
p_j^{(l)} (\mathbf{c}^{(j)}) = \sum_{d \in \mathbf{c}^{(j)}+1} \phi_{d|d_v}^{(j)}(1-\epsilon_l)^{d-1} \prod_{i=1}^{j-1} \left(1-\sum_{d \in \mathbf{c}^{(j)}+1} \phi_{d|d_v}^{(i)}(1-\epsilon_l)^{d-1}\right),
 \end{align}
where $\epsilon_l$ is the loss probability of a randomly chosen node at the $l$-th iteration and $\phi_{d|d_v}^{(i)}$ is the conditional probability that $i$-th check node neighbor of a variable node having degree $d_v$ has degree $d$ when neighboring check node degrees are sorted in non-decreasing order and can be expressed in a closed-form as follows,
\begin{align} 
    \phi_{d|d_v}^{(i)}&= \sum_{l_1=0}^{i-1}\sum_{l_2=0}^{i-1-l_1}\sum_{l_3=0}^{i-1-l_1-l_2}\hdots \sum_{l_{d}=0}^{i-1-l_1-l_2-\hdots-l_{d-1}}
    \prod_{t=1}^{d}\Bigg(\phi_t^{l_t} \hdots \\\nonumber & \ \ \ \ \ \phi_{d}\sum_{r_1=0}^{d_v-i}\sum_{r_2=0}^{d_v-i-r_1}\hdots\sum_{r_{D_c-d+1}=0}^{d_v-i-r_1-r_2-\hdots r_{D_c-i}}
    \prod_{y=d}^{D_c}\phi_{y}^{r_{y-d+1}}\frac{d_v!}{(\prod_{h=1}^{d-1}l_{h}!\prod_{h=2}^{D_c-i+1}r_{h}!)(l_{d-1}+r_{r_1}+1)!}\Bigg).
\end{align}
Since we conditioned on  $d_v$ check options for a recovering variable node, we realize that we will not be able to complete the repair process if none of the check relations are able to help, which happens with probability at the $l$-th iteration
\begin{align}
    1 - \sum_{j=1}^{d_v} p_j^{(l)}(\mathbf{c}^{(j)}) &= \prod_{i=1}^{d_v} \left(1-\sum_{d \in \mathbf{c}^{(j)}+1} \phi_{d|d_v}^{(i)}(1-\epsilon_l)^{d-1}\right). 
\end{align}

To remove conditioning, we sum over all possibilities of both sides and obtain
 \begin{align}
    1 - \sum_{d_v=1}^{D_v}\lambda_{d_v}\sum_{j=1}^{d_v} p_j^{(l)}(\mathbf{c}^{(j)}) &=
    \sum_{d_v=1}^{D_v} \lambda_{d_v} \prod_{i=1}^{d_v} \left(1-\sum_{d \in \mathbf{c}^{(j)}+1} \phi_{d|d_v}^{(i)}(1-\epsilon_l)^{d-1}\right) \\
    &= \sum_{d_v=1}^{D_v} \lambda_{d_v} \prod_{i=1}^{d_v} \left(1-\sum_{d \in \mathbf{c}^{(j)}+1} \phi_{d}^{(i)}(1-\epsilon_l)^{d-1}\right) \\
    & = \sum_{d_v=1}^{D_v} \lambda_{d_v}\left(1-\sum_{d \in \mathbf{c}^{(j)}+1} \phi_d(1-\epsilon_l)^{d-1}\right)^{d_v} \\
    & =  \sum_{d_v=1}^{D_v} \lambda_{d_v} (1-\phi(1-\epsilon_l))^{d_v} = \lambda(1-\phi(1-\epsilon_l))
\end{align}
which clearly does not depend on the guessing strategy since the local repair process already fails. From these equalities we observe that we readily have 
 \begin{align}
     \prod_{i=1}^{d_v} \left(1-\sum_{d \in \mathbf{c}^{(j)}+1} \phi_{d}^{(i)}(1-\epsilon_l)^{d-1}\right)\neq \left(1-\sum_{d \in \mathbf{c}^{(j)}+1} \phi_d(1-\epsilon_l)^{d-1}\right)^{d_v}=  \prod_{i=1}^{d_v} \left(1-\sum_{d \in \mathbf{c}^{(j)}+1} \phi_{d|d_v}^{(i)}(1-\epsilon_l)^{d-1}\right).
 \end{align}

The recovery failure probability of a given unavailable/failed slave node now evolves (due to the assumption of independence and averaging over the edge-perspective variable node degrees) and can be expressed as
\begin{eqnarray}
\epsilon_{l+1} = \epsilon_0 \sum_{d=1}^{D_v} \lambda_d (1-\phi(1-\epsilon_l))^{d-1} = \epsilon_0 \lambda(1-\phi(1-\epsilon_l))
\label{eqn13}
\end{eqnarray}
which brings us to the standard density evolution formula for erasure channels. 

\begin{figure}[t!]
\centering
  \includegraphics[width=0.65\linewidth]{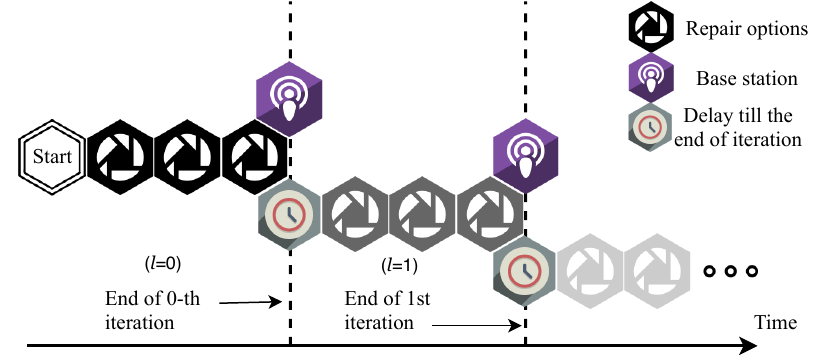}
  \caption{An example of multi-iteration repair process as a function of time for an LDPC code symbol that has a node degree of $3$. At the end of each iteration, the process can either choose to contact the back-up node or another round of iteration to complete the repair. Repair options colored as black hexagons get less intense as we have better chances of repair completion.}
  \label{fig:ldpcnode3}
  \vspace{-4mm}
\end{figure}

\subsection{A deferred master-node communication protocol} 
Let us assume there is a backup master (a base station) to help with the repair process within the cell. In this case, we assume that we can directly download the contents from this backup at the expense of a fixed cost $C_\pi (> 1)$ per symbol download. In this simple protocol, we aim at maximizing (minimizing) the use of local nodes (master nodes) in the repair process. More specifically, we order the check relations based on the degrees i.e., bandwidth cost of repair, and utilize multiple iterations to ensure the regeneration. Since the node availability information is missing at the time of the repair, we confirm whether all connections of the neighbors of the first check node can successfully be established. If at least one of the variable nodes can not be reached, we next check the availability of neighbors of the next check node and so on. The lost node is repaired (actual download happens) using neighbors of a check node whose all neighboring variable nodes are intact and reachable.   However, if this attempt is not successful at the current iteration, we have to decide whether to reach the backup for the completion of the repair process or take another round of iteration, unless a predetermined maximum number of iterations is reached.  An example is shown in Fig. \ref{fig:ldpcnode3} for $d_v=3$. As can be seen, at the end of each iteration, a decision is made whether to complete the repair process with the help of a backup or continue with another round of iteration. Since we may have downloaded the contents of the repaired node directly from the backup, we first check if the backup node is to be contacted at the end of contacting all local nodes for a given iteration round. If so, we allow another round of iteration and if not, we cease iterations and complete the repair with the help of the backup node. Therefore, with this deferred master-node communication protocol, to be able to ease the analysis, the backup node is allowed to be contacted only at the end of each iteration round and the advantage of conducting data regeneration using local nodes is maximized.

Accordingly, to contact the backup at the beginning of the $l$-th iteration for $l \geq 1$, we need to make sure that the backup node would not be contacted last within the same iteration i.e., the cost of contacting back-up is not too costly compared to local downloads. Recall from Theorem \ref{prop2} that the back-up is not considered as a last resort only when the condition $C_{\pi} < c_{max} \left((1-q\right)^{-c_{max}}-1)$ is met. In other words, when the repair process comprises several iterations, to contact the back-up at the beginning of the $l$-th iteration for $l \geq 0$, we need to make sure that it would not be contacted at the end of the current iteration.  For the LDPC code ensembles, the above condition happens with a non-zero probability since $c_{max}$ is a random variable. For a given variable node degree $d$, we contact the backup node at the beginning of the $l$-th iteration with conditional probability $\tau_{l|d}$ given by
\begin{eqnarray}
\tau_{l|d} = Pr\left( (C_{\pi} < c_{max} \left((1-\epsilon_{l}\right)^{-c_{max}}-1))  \wedge (C_{\pi} \geq c_{max} \left((1-\epsilon_{t-1}\right)^{-c_{max}}-1) ~\forall ~t \in [l]) \right)
\end{eqnarray}
which can be simplified due to the monotonicity of $c_{max} \left((1-\epsilon_{l}\right)^{-c_{max}}-1))$ as (see also Remark \ref{remark41})
\begin{eqnarray}
\tau_{l|d} = Pr\left( (C_{\pi} < c_{max} \left((1-\epsilon_{l}\right)^{-c_{max}}-1))  \wedge (C_{\pi} \geq c_{max} \left((1-\epsilon_{l-1}\right)^{-c_{max}}-1) ) \right).
\end{eqnarray}
Note that since successful repair is guaranteed when the back-up node is involved, based on the  above formulation,  the evolution formula in \eqref{eqn13} can be rewritten as (again with the And-Or tree assumption \cite{lubyandortree1998} in the decoding path)
\begin{eqnarray} \label{eq:newDensityEvolution}
\epsilon_{l+1} &=& \epsilon_0  \sum_{d=1}^{D_v} \lambda_d (1-\tau_{l|d})(1-\phi(1-\epsilon_l))^{d-1}.
\end{eqnarray}

On the other hand, we notice that the conditional probability that we contact the back-up at the end of the $l$-th iteration i.e., $1-\rho_{l|d}$ can be approximated for small $\epsilon_0 c_{max}$ (i.e., $\epsilon_0 D_c$) as
\begin{eqnarray}
1 - \tau_{l|d} &=& Pr\left(c_{max} \left((1-\epsilon_{l+1}\right)^{-c_{max}}-1) \leq C_{\pi} \right) \\
&=& Pr\left( c^2_{max} \epsilon_{l+1} \leq c_{max} ( \left (1-\epsilon_{l+1}\right)^{-c_{max}}-1) \leq C_{\pi} \right) \\
&\approx& Pr\left(c_{max} \leq \sqrt{C_{\pi}/\epsilon_{l+1}} \right) \label{eqn79}\\
&=& \left( \sum_{i=1}^{\min \{ D_c, \lfloor \sqrt{C_{\pi}/\epsilon_{l+1}}\rfloor \} }  \Phi_i \right)^{d} \label{eqn65}
\end{eqnarray}
where \eqref{eqn65} follows due to independence of $c_i$'s.

\begin{remark} \label{remark41}
Note that in this setting, as long as $\epsilon_l \rightarrow 0$ as $l$ tends large, we have $\epsilon_{l+1} \leq \epsilon_l$  which leads to the relationship $\tau_{l|d} \geq \tau_{l+1|d}$ i.e., as the iterations get deeper, it becomes less likely to contact the back-up node for the repair due to reduced loss probabilities of the neighboring nodes. 
\end{remark}

\subsection{Decoding Threshold with Back-up} For ease of analysis, let us assume small $\epsilon_0 D_c$ and not put any limit on the number of iterations with a predefined threshold. In this case, we notice that if $\sqrt{C_\pi/\epsilon_{0}} \geq D_c$, then this would result in standard density evolution and the decoding threshold, in that case, would be defined to be
\begin{eqnarray}
\epsilon_0^* = \sup \{\epsilon_0 \lambda(1-\phi(1-x))<x, \forall x, x \in(0,\epsilon_0]\},
\end{eqnarray} 
i.e., the maximal erasure probability below which error-free repair is possible through solely using iterations/local nodes. On the other hand if $\sqrt{C_\pi/\epsilon_{0}} < D_c$, suppose in one of the iterations of the BP (say $l^*$-th iteration), we have $\epsilon_0 \geq  \epsilon_{l^*-1} \geq \epsilon_{l^*}$ such that $\lceil \sqrt{C_\pi/\epsilon_{l^*-1}} \rceil \leq D_c \leq \lceil \sqrt{C_\pi/\epsilon_{l^*}} \rceil$, then for all $l \leq l^*-2$ we would have $\epsilon_{l+1}$ to be the solution to the following equation
\begin{align}
\epsilon_{l+1} = \epsilon_0  \sum_{d=1}^{D_v} \lambda_d \left( \sum_{i=1}^{\lfloor \sqrt{C_{\pi}/\epsilon_{l+1}}\rfloor }  \Phi_i \right)^{d}  (1-\phi(1-\epsilon_l))^{d-1} 
\end{align}
and finally for $l \geq l^* - 1$, $\epsilon_{l+1}$ is given by the standard density evolution formula. Accordingly, the decoding threshold in that case is given by
\begin{align}
\epsilon_0^{\dagger} (C_\pi) &= \sup \left\{\epsilon_0  \sum_{d=1}^{D_v} \lambda_d \left( \sum_{i=1}^{ \min\{D_c,\lfloor \sqrt{C_{\pi}/\epsilon_{l+1}}\rfloor\} }  \Phi_i \right)^{d}  (1-\phi(1-x))^{d-1} <x, \forall x, x \in(0,\epsilon_0]\right\} \\
&= \inf \left\{ \frac{x}{\sum_{d=1}^{D_v} \lambda_d \left( 1 - \sum_{i=1}^{ \min\{D_c,\lfloor \sqrt{C_{\pi}/\epsilon_{l+1}}\rfloor\} }  \Phi_i \right)^{d}  (1-\phi(1-x))^{d-1}}, \forall x, x\in (0,1) \right\}
\end{align} 

Here we immediately realize the relationship $\epsilon_0^{*} \leq \epsilon_0^{\dagger} (C_\pi) \leq 1$ i.e., the decoding threshold can be improved with the help of a master back-up node in the context of data reconstruction.  

\subsection{Numerical {Demonstration}}

We consider an irregular LDPC code that performs provably close to the optimal (achieving minimum gap to the channel capacity) under BEC \cite{Amraoui2007}. The edge-perspective degree distributions of this code ensemble are given by
\begin{eqnarray}
\phi(x) &=& 0.608291x^5 + 0.391709x^6 \label{LDPCCodeCheckDist}\\ 
\lambda(x) &=& 0.205031x + 0.455716x^2 + 0.193248x^{13} + 0.146004x^{14}  \label{LDPCCodeVarDist}  
\end{eqnarray}
from which the rate of the code can be calculated to be $0.4339$ with $D_c=7$ and $D_v=15$. 
The results of our cost evolution process are presented
in which $\tau_{l|d}$ is estimated numerically based on
\begin{eqnarray}
\tau_{l|d} \approx Pr\left(C_{\pi} < c_{max} \left((1-\epsilon_{l}\right)^{-c_{max}}-1)\right) \times Pr\left(C_{\pi} \geq c_{max} \left((1-\epsilon_{l-1}\right)^{-c_{max}}-1)\right)
\end{eqnarray}    
rather than the approximation given by the Eqn. \eqref{eqn79} since $\max \{ \epsilon_0 D_c \} = 3.037$ is not small enough.
\begin{figure}[t!]
\centering
  \includegraphics[width=0.75\linewidth]{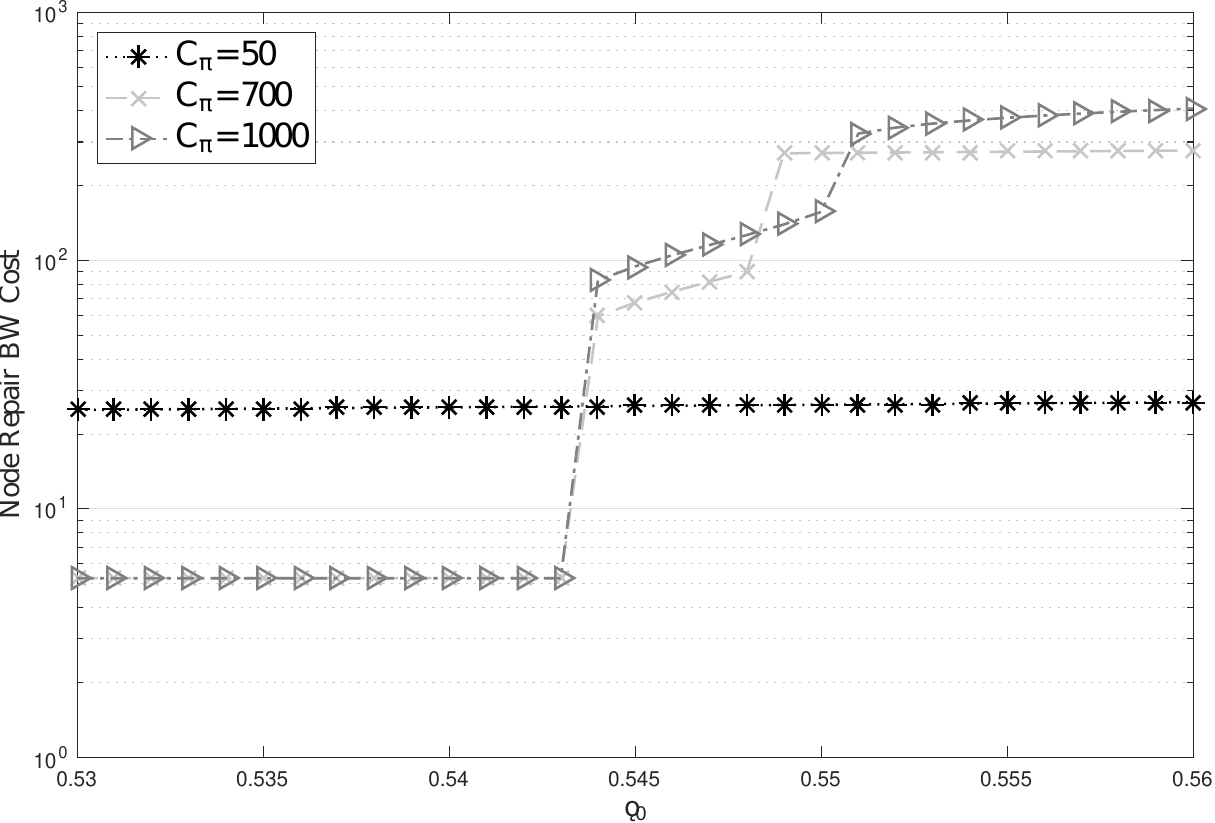}
  \caption{Cost of a symbol repair with a back-up Master}
  \label{fig:ldpcCosts}
\end{figure}

The cost of a symbol repair using the strategy with back-up master node having $C_\pi \in \{50,700,1000\}$, respectively, is provided in Fig. \ref{fig:ldpcCosts}. 
When $C_\pi=50$, the the use of the base station has commenced in early stages of the node repair process (iterations). In other words, in case of  $C_\pi=50$, the master node is used as the last resort without further iterations are performed for $\epsilon \in [0.53,0.56]$. When $C_\pi=700$ and $\epsilon_0\geq0.544$, the use of backup master is preferred, which in turn leads to higher node repair cost.  Moreover, when $C_\pi=1000$ and $\epsilon_0$ gets the value of $0.551$, the value of $\epsilon_{max}$ is increased substantially. For the value of $C_\pi=1000$, the use of BS before the last iteration is started to occur when $\epsilon_0\geq0.575$.

In Table \ref{tab:LDPCCosts}, the results of actual optimal LDPC repair cost as well as upper and lower bounds obtained through numerical evaluations of Thm. \ref{theo:boztasUpperBoundupdatedtheo3_4}, and Eq. (\ref{sasonlb}) are presented for all combinations of $\rho \in \{1,2,3\}$ and $\epsilon_0 \in \{0.01,0.05,0.1\}$. For this data repair scenario, the base station cost is  assumed to be $C_\pi=1000$. Based on our numerical results for $\rho=1,~\rho=2,~\rho=3$, the evaluation of Thm. \ref{theo:boztasUpperBoundupdatedtheo3_4} gives   $64\%, ~22\%,~17\%$ higher results than the actual results on average, whereas Eq. (\ref{sasonlb}) provides $24\%,~12\%,~11\%$ lower values than the actual on average, respectively.

\begin{table}[t!]
\caption{{Numerical computation} of $\frac{1}{n} \ln ( \mathbb{E}[C_{\mathcal{G}^*}(X)^\rho])$ }
    \label{tab:LDPCCosts}
    \centering
    \begin{tabular}{|c|c|c|c|c|}
    \hline
         $\boldsymbol{\rho}$ \textbf{values}&$\boldsymbol{\epsilon_0}$ \textbf{values} & \textbf{0.01}&\textbf{0.05} &\textbf{0.1}  \\
         \cline{1-5}
         \parbox[t]{25mm}
         
         {\multirow{3}{*}{~~~~~
         \rotatebox[origin=c]{0}{$~~\boldsymbol{ \rho=1 }~~~~~~$}}}& \textbf{$\boldsymbol{Thm. \ref{theo:boztasUpperBoundupdatedtheo3_4}}$}  & 5.40& 5.184& 5.342\\ \cline{2-5}
        & \textbf{$\boldsymbol{OPT}$} &2.099&3.996&4.967
         \\\cline{2-5}
        & \textbf{$\boldsymbol{Eq. (\ref{sasonlb})}$} &1.040&3.443&4.502  \\\hline

         \parbox[t]{25mm}
         
         {\multirow{3}{*}{~~~~~
         \rotatebox[origin=c]{0}{$~~\boldsymbol{ \rho=2 }~~~~~~$}}}& \textbf{$\boldsymbol{Thm. \ref{theo:boztasUpperBoundupdatedtheo3_4} }$} &6.216 &6.047& 6.105 \\ \cline{2-5}
        & \textbf{$\boldsymbol{OPT}$} & 4.048&5.431&5.926
         \\\cline{2-5}
        & \textbf{$\boldsymbol{Eq. (\ref{sasonlb}).}$} &3.253&4.882&5.465 \\ \hline
        
           \parbox[t]{25mm}
         
         {\multirow{3}{*}{~~~~~
         \rotatebox[origin=c]{0}{$~~\boldsymbol{ \rho=3 }~~~~~~$}}}& \textbf{$\boldsymbol{Thm. \ref{theo:boztasUpperBoundupdatedtheo3_4} }$}  &6.260 &6.152&6.222 \\ \cline{2-5}
        & \textbf{$\boldsymbol{OPT}$} &4.461& 5.652&6.075
         \\\cline{2-5}
        & \textbf{$\boldsymbol{Eq. (\ref{sasonlb})}$} &3.694&5.105&5.616 \\\hline
        
        
    \end{tabular}
    
\end{table}


 

\section{Conclusions and Future work}

In this work, the general notion of guessing cost is introduced and an optimal strategy is provided for guessing a random variable defined on a finite set with each choice may be associated with a positive finite cost. Upper and lower bounds on the moments of guessing cost are derived and expressed in terms of the Rényi's entropy and entropy rate. We {have established connections} with the guesswork through {introducing} novel random variables. Thanks to this connection, previous works on the improvements of upper/lower bounds for the guesswork become readily applicable. Accordingly, we provided improved bounds on the moments of guessing cost without lengthy proofs.  Finally, we established the guessing cost exponent on the moments of the optimal guessing by considering a sequence of random variables. These bounds are shown to serve quite useful for bounding the overall repair latency cost (data repair complexity) for distributed data storage systems in which sparse graph codes may be utilized.  We have assumed a simple protocol to derive initial results and demonstrated the usefulness of the previously derived bounds. {It's important to highlight that in the design of a distributed storage protocol, there may be value in giving up the prediction of the next value based on conditions like the total accumulated cost.} Characterization of the guessing cost, in that case, would have to be expressed in terms of smooth Rényi's entropy. Recent studies such as \cite{kuzuoka2019} considered similar constraints for the guesswork within the context of source coding. Such scenarios would be considered in our future work to be able to improve protocol design towards better system maintenance in presence/assistance of an external back-up/base station master.

\appendices
\section{Proof of Proposition \ref{prop0}}

Let us start with $\rho=1$ i.e., mean guessing cost given in Remark \ref{prop1}. Consider swapping the $i$-th and $(i+1)$-th guessed values. Let $\mathcal{G}_{i,i+1}$ be the original guessing strategy and $\mathcal{G}_{i+1,i}$ be the swapped version.  Then it is straightforward to show that the difference is 
\begin{align}
    \mathbb{E}[C_{\mathcal{G}_{i,i+1}}(x)] - \mathbb{E}[C_{\mathcal{G}_{i+1,i}}(x)] &= c_i(1-g_{i-1}) \\ 
    &+ c_{i+1}(1-g_{i-1}-p_i) \\ 
    &- c_{i+1}(1-g_{i-1}) \\
    &- c_i(1-g_{i-1}-p_{i+1}) \\
    &= c_ip_{i+1} - c_{i+1}p_i
\end{align}
which implies that if $c_ip_{i+1} > c_{i+1}p_i $, then we swap $i$-th and $(i+1)$-th guessed values in order to reduce the average  guessing cost, otherwise no swapping is performed. Since each swapping leads to lower cost,  for any $i, j \in\{1,\dots,M\}$ with $i \leq j$,  the optimal guessing strategy $\mathcal{G}^*$ would satisfy the following series of inequalities
 \[
    \begin{array}{ll}
        c_ip_{i+1} &\leq c_{i+1}p_i \\
        c_{i+1}p_{i+2} &\leq c_{i+2}p_{i+1} \\ &\vdots \\
        c_{j-1}p_{j} &\leq c_{j}p_{j-1}
        \end{array} \Vast\} \Rightarrow c_i p_j
        \prod_{k=i+1}^{j-1} c_kp_k \leq c_j p_i
        \prod_{k=i+1}^{j-1} c_kp_k 
  \]
where continuing deriving the inequality and multiplying left-hand and right-hand consecutive terms individually would give us the desired result since all $p_i$s and $c_i$s are non-negative.

Now, let us consider the general case i.e., for any real $\rho>0$, we have
\begin{align}
\mathbb{E}[C_\mathcal{G}(X)^\rho] &= \sum_{i=1}^M \left(\sum_{j=1}^{i} c_j\right)^\rho p_i = \sum_{i=1}^M ||\mathbf{c}^{(i)}||_{1}^\rho p_i  
\end{align}

In general, we would be looking for a condition that would ensure the following for indices $i,j$ satisfying $i \leq j$
\begin{eqnarray}
 \mathbb{E}[C_{\mathcal{G}_{i,j}}(x)^\rho] - \mathbb{E}[C_{\mathcal{G}_{j,i}}(x)^\rho] \leq 0
\end{eqnarray}
which would mean that swaping these indices do not improve the moments of guessing cost. This condition can be shown to imply for any $\rho \in (0,+\infty)$ through some algebra that
\begin{align}
    \left[ ||\mathbf{c}^{(i)}||_{1}^\rho - ||\mathbf{c}^{(j)}||_{1}^\rho \right] p_i + \left[ ||\mathbf{c}^{(j)}||_{1}^\rho - (||\mathbf{c}^{(i)}||_{1}-c_i+c_j)^\rho \right] p_j \leq  \sum_{l=i+1}^{j-1}  \left[  (||\mathbf{c}^{(l)}||_{1}-c_i+c_j)^\rho - ||\mathbf{c}^{(l)}||_{1}^\rho \right] p_l \label{cond4generalrho}
\end{align}

If we had consider $i$-th and $(i+1)$th indices instead, then the condition we have derived for $\rho=1$ i.e., $c_ip_{i+1} \leq c_{i+1}p_i$ would have been extended to any $\rho > 0$ for all indices between $i$ and $j$. We can finally arrive at the following set of inequalities
\begin{eqnarray}
\left[||\mathbf{c}^{(i+1)}||_{1}^\rho - (||\mathbf{c}^{(i+1)}||_{1}-c_i)^\rho\right] p_{i+1} &\leq&  \left[||\mathbf{c}^{(i+1)}||_{1}^\rho - ||\mathbf{c}^{(i)}||_{1}^\rho\right]p_i \\
\left[||\mathbf{c}^{(i+2)}||_{1}^\rho - (||\mathbf{c}^{(i+2)}||_{1}-c_{i+1})^\rho\right] p_{i+2} &\leq&  \left[||\mathbf{c}^{(i+2)}||_{1}^\rho - ||\mathbf{c}^{(i+1)}||_{1}^\rho\right]p_{i+1} \\
\vdots \\
\left[||\mathbf{c}^{(j)}||_{1}^\rho - (||\mathbf{c}^{(j)}||_{1}-c_{j-1})^\rho\right] p_{j} &\leq&  \left[||\mathbf{c}^{(j)}||_{1}^\rho - ||\mathbf{c}^{(j-1)}||_{1}^\rho\right]p_{j-1}
\end{eqnarray}
Note that due to non-negativity of costs, for any $l \in \{2,\dots,j-i\}$ and $\rho \geq 1$, we always have 
\begin{align}
    ||\mathbf{c}^{(i+l)}||_{1}^\rho - (||\mathbf{c}^{(i+l)}||_{1}-c_{i+l-1})^\rho  \geq ||\mathbf{c}^{(i+l-1)}||_{1}^\rho - ||\mathbf{c}^{(i+l-2)}||_{1}^\rho
\end{align}
which can be rewritten as 
\begin{align} (||\mathbf{c}^{(i+l-1)}||_{1} + c_{i+l})^\rho- (||\mathbf{c}^{(i+l-2)}||_{1}+ c_{i+l})^\rho  \geq ||\mathbf{c}^{(i+l-1)}||_{1}^\rho - ||\mathbf{c}^{(i+l-2)}||_{1}^\rho.
\end{align}
Then in order to satisfy all the inequalities above, we must have
\begin{align}
    \left[||\mathbf{c}^{(i+1)}||_{1}^\rho - (||\mathbf{c}^{(i+1)}||_{1}-c_i)^\rho\right]p_j \leq \left[||\mathbf{c}^{(j)}||_{1}^\rho - ||\mathbf{c}^{(j-1)}||_{1}^\rho\right] p_i
\end{align}
which is a simplified condition as compared to the condition in \eqref{cond4generalrho} for $\rho \in (1,+\infty)$.


{
\section{Proof of Proposition \ref{prop00}}}

{Let us assume $c_x \in \mathbb{Z}^+$. Our objective is to show that $\overline{\mathcal{G}}:\overline{\mathcal{X}}\rightarrow \left[ \sum_x c_x  \right]$ is a bijection. Let us begin with one-to-one property and with a trivial case. Suppose that $j=i+1 \not= i$ for $i < \sum_x c_x$, and assume that the following relation is true. 
\begin{eqnarray}
    \overline{\mathcal{G}}(\overline{X} = \overline{x}_i)  = \overline{\mathcal{G}}(\overline{X} = \overline{x}_j) 
    = \overline{\mathcal{G}}(\overline{X} = \overline{x}_{i+1}) \label{fundeq}
\end{eqnarray}
which obviously violates the 1-to-1 property. For a given index {$i \in {[\sum_x  c_x ]}$}, due its construction in the manuscript, there exists a positive integer $k^{(i)}\leq M$ satisfying
\begin{align}
    \sum_{x \in \mathcal{X}^{k^{(i)}-1}}  c_x  < i \leq  \sum_{x \in \mathcal{X}^{k^{(i)}}}  c_x. \label{ineq9}
\end{align}}
{where $\mathcal{X}^i = \{x_1,\dots,x_i\}$ with $\mathcal{X}^0 = \varnothing$. If $k^{(i)} = k^{(i+1)}$, then it is easy to see that the equality \eqref{fundeq} 
would be impossible to hold since $i \not= i + 1$ in the definition of $ \overline{\mathcal{G}}$. If $k^{(i)} \not= k^{(i+1)}$ we notice that we can use the expression given for $ \overline{\mathcal{G}}$ and rewrite  both sides of the equation \eqref{fundeq} as}
{
\begin{eqnarray}
    \overline{\mathcal{G}}(\overline{X} = \overline{x}_i) &=& \sum_{x: \mathcal{G}(x) < \mathcal{G}\left(x_{k^{(i)}}\right)}  c_x  - \sum_{x \in \mathcal{X}^{k^{(i)}-1}}  c_{x} + i \\
    &=& \sum_{x: \mathcal{G}(x) < \mathcal{G}\left(x_{k^{(i)}}\right)}  c_x  - \sum_{x \in \mathcal{X}^{k^{(i)}}}  c_{x}  + i +  c_{x_{k^{(i)}}}   \label{eqn8} \\
    \overline{\mathcal{G}}(\overline{X} = \overline{x}_{i+1]}) &=& \sum_{x: \mathcal{G}(x) < \mathcal{G}\left(x_{k^{(i+1)}}\right)}  c_x - \sum_{x \in \mathcal{X}^{k^{(i+1)}-1}}  c_{x}  + i + 1 \\
    &=& \sum_{x: \mathcal{G}(x) < \mathcal{G}\left(x_{k^{(i+1)}}\right)}  c_x - \sum_{x \in \mathcal{X}^{k^{(i)}}} c_{x}  + i + 1. \label{eqn10}
\end{eqnarray}}
{where we used the fact that $\mathcal{X}^{k^{(i+1)}} = \mathcal{X}^{k^{(i)}+1}$ due to the way $\mathcal{X}^{i}$ is defined. Note that since $k^{(i)} \not= k^{(i+1)}$ is assumed, the first terms in the expressions of \eqref{eqn8} and \eqref{eqn10} cannot be equal to satisfy the equation \eqref{fundeq}. Therefore, we have two possible cases.}
{
\begin{itemize}
    \item Case $\sum_{x: \mathcal{G}(x) < \mathcal{G}\left(x_{k^{(i)}}\right)}  c_x  > \sum_{x: \mathcal{G}(x) < \mathcal{G}\left(x_{k^{(i+1)}}\right)}  c_x $: In this case however, it can be clearly seen that we must have  $\overline{\mathcal{G}}(\overline{X} = \overline{x}_i)  > \overline{\mathcal{G}}(\overline{X} = \overline{x}_{i+1})$ since $ c_{x_{k^{(i)}}}  \geq 1$. 
    \item Case $\sum_{x: \mathcal{G}(x) < \mathcal{G}\left(x_{k^{(i)}}\right)}  c_x  < \sum_{x: \mathcal{G}(x) < \mathcal{G}\left(x_{k^{(i+1)}}\right)} c_x $: In this case we must have 
    \begin{eqnarray}
        \sum_{x: \mathcal{G}(x) < \mathcal{G}\left(x_{k^{(i+1)}}\right)}  c_x  - \sum_{x: \mathcal{G}(x) < \mathcal{G}\left(x_{k^{(i)}}\right)}  c_x  \geq  c_{x_{k^{(i)}}} 
    \end{eqnarray}
    due to the ordering of costs. This result implies that $\overline{\mathcal{G}}(\overline{X} = \overline{x}_i)  < \overline{\mathcal{G}}(\overline{X} = \overline{x}_{i+1})$, which is necessarily a strict inequality due to the assumption $j = i + 1 \not= i$. 
\end{itemize}}

{As a result, our initial assumption that $    \overline{\mathcal{G}}(\overline{X} = \overline{x}_i)  = \overline{\mathcal{G}}(\overline{X} = \overline{x}_{i+1})$ cannot be true. Using this observation, we can extend our arguments to any $(i, j)$ pair with $i \not= j$. WOLOG, assume that $j>i$, by considering the pairs $(i,i+1),(i+1,i+2),\dots,(j-1,j)$ in this particular order, it is not hard to show $    \overline{\mathcal{G}}(\overline{X} = \overline{x}_i)  \not= \overline{\mathcal{G}}(\overline{X} = \overline{x}_{j})$ i.e., $\overline{\mathcal{G}}$ is one-to-one. On the other hand,}
{we also notice that 
\begin{eqnarray}
\overline{\mathcal{G}}(\overline{X} = \overline{x}_i) &=& \sum_{x: \mathcal{G}(x) < \mathcal{G}\left(x_{k^{(i)}}\right)}  c_x  + \underbrace{i - \sum_{x \in \mathcal{X}^{k^{(i)}-1}} c_{x} }_\text{$>0$ due to Eqn. \eqref{ineq9}} \\
&>& 0 \label{eqn00}
\end{eqnarray}
due to non-negativity of costs. Hence, the minimum integer the strategy could map to is 1. In addition, 
\begin{eqnarray}
    \overline{\mathcal{G}}(\overline{X} = \overline{x}_i) &=& 
 \sum_{x: \mathcal{G}(x) < \mathcal{G}\left(x_{k^{(i)}}\right)}  c_x  + \underbrace{i - \sum_{x \in \mathcal{X}^{k^{(i)}}}  c_{x} }_\text{$\leq 0$ due to Eqn. \eqref{ineq9}}  +  c_{x_{k^{(i)}}}  \\
 &\leq&  \sum_{x: \mathcal{G}(x) \leq \mathcal{G}\left(x_{k^{(i)}}\right)}  c_x  \leq \sum_{x}  c_x 
\end{eqnarray}
which, together with \eqref{eqn00}, implies $1 \leq \overline{\mathcal{G}}(\overline{X} = \overline{x}_i) \leq \sum_{x}  c_x $. As a result, we can induct that $\overline{\mathcal{G}}$ must be a bijection and hence a valid strategy/mapping.}  

\section{Proof of Theorem \ref{thm301}}
Before giving the formal proof let us state a well known lemma. 
\begin{lemma}[H\"{o}lder's inequality] 
Let $a_i$ and $b_i$ for $(i = 1, . . . , n)$ be positive real sequences. If $q > 1$
and $1/q + 1/r = 1$, then
\begin{eqnarray}
\left(\sum_{i=1}^n a_i^q\right)^{1/q}\left(\sum_{i=1}^n b_i^r\right)^{1/r} \geq \sum_{i=1}^{n} a_ib_i
\end{eqnarray}
\vspace{0.5mm}
\end{lemma}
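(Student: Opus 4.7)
The plan is to prove Hölder's inequality in the classical way by first establishing Young's inequality and then applying it to normalized sequences. This is a completely standard argument, but it decomposes cleanly into three short steps.

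First, I would prove the pointwise Young inequality: for nonnegative reals $a,b$ and conjugate exponents $q,r>1$ with $1/q+1/r=1$, one has $ab \leq a^{q}/q + b^{r}/r$. The cleanest route is to exploit the concavity of $\ln$: if $a,b>0$, then $\ln\!\left(\frac{a^{q}}{q}+\frac{b^{r}}{r}\right) \geq \frac{1}{q}\ln a^{q} + \frac{1}{r}\ln b^{r} = \ln(ab)$, and exponentiating gives the inequality (the cases where $a=0$ or $b=0$ being trivial). Alternatively one can integrate the inverse of the strictly increasing function $t\mapsto t^{q-1}$ and read off Young's inequality geometrically, but the log-concavity argument is shorter.

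Next, I would normalize. Set $A=\left(\sum_{i=1}^{n}a_{i}^{q}\right)^{1/q}$ and $B=\left(\sum_{i=1}^{n}b_{i}^{r}\right)^{1/r}$, and dispose of the trivial case $A=0$ or $B=0$ (in which every $a_{i}$ or every $b_{i}$ is zero and both sides vanish). Applying Young's inequality to the rescaled pair $(a_{i}/A,\,b_{i}/B)$ for each index $i$ gives
\begin{equation*}
\frac{a_{i}b_{i}}{AB} \;\leq\; \frac{1}{q}\cdot\frac{a_{i}^{q}}{A^{q}} + \frac{1}{r}\cdot\frac{b_{i}^{r}}{B^{r}}.
\end{equation*}

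Finally, summing this over $i=1,\dots,n$ and using the definitions of $A$ and $B$ yields
\begin{equation*}
\frac{1}{AB}\sum_{i=1}^{n}a_{i}b_{i} \;\leq\; \frac{1}{q}\cdot\frac{\sum_{i}a_{i}^{q}}{A^{q}} + \frac{1}{r}\cdot\frac{\sum_{i}b_{i}^{r}}{B^{r}} \;=\; \frac{1}{q}+\frac{1}{r} \;=\; 1,
\end{equation*}
which on multiplying through by $AB$ is precisely the claimed inequality. There is no substantive obstacle here; the only thing to be a little careful about is the degenerate case $A=0$ or $B=0$, which must be handled separately so that the normalization in the second step is legitimate.
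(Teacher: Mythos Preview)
Your proof is correct and follows the classical route via Young's inequality. Note, however, that the paper does not actually prove this lemma: it is stated in Appendix~B as a ``well known lemma'' and used without proof. So there is no paper proof to compare against; your argument simply supplies a standard self-contained justification where the paper chose to cite the result.
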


Let $a_i$ be a positive real number for all $i$, $M$ be a natural number, and $\gamma$ be the harmonic mean of $\{a_1,\dots,a_n\}$, then we have
\begin{align}
    \sum_{i=1}^M \frac{1}{1+a_i} \leq  \frac{M}{1+\gamma} \label{eqn111}
\end{align}
which can easily be proved using Radon's inequality \cite{lai2016}. Now, let us express the lower bound of the moments of the  guessing cost as follows,
{
\begin{align}
\mathbb{E}[C_\mathcal{G}(X)^\rho] \geq \mathbb{E}[C_\mathcal{G^*}(X)^\rho] 
\geq \left[\sum_{i=1}^M \frac{1}{\sum_{j=1}^i c_j^*}\right]^{-\rho} \left[\sum_{i=1}^M P_X({\mathcal{G}^{-1}(i)})^\frac{1}{1+\rho}\right]^{1+\rho} \label{eqn133}
\end{align}
}which easily follows from a direct application of H\"{o}lder's inequality, where $\{c_j^*\}$ are the optimal ordering of cost values. To see this, let us set $r=1+ \rho$, $q=(1+\rho)/\rho$ in Hölder's inequality so that $1/q+1/r =1$ is satisfied for $\rho > 0$. We also let
\begin{align}
    a_i = \left[\sum_{j=1}^i c_{\mathcal{G}^{-1}(j)}\right]^{-\rho/(1+\rho)} \mathrm{and} \ \ \ \ 
    b_i = \left[\sum_{j=1}^i c_{\mathcal{G}^{-1}(j)}\right]^{\rho/(1+\rho)}P_X({\mathcal{G}^{-1}(i)})^{1/(1+\rho)}.
\end{align}
Now, using H\"{o}lder's inequality, it would be easy to obtain
{
\begin{align}
 \left[ \sum_{i=1}^M \frac{1}{\sum_{j=1}^ic_{\mathcal{G}^{-1}(j)}}\right]^{\rho/(1+\rho)} \left(\mathbb{E}[C_\mathcal{G}(X)^\rho]\right)^{1/(1+\rho)} \geq \sum_{i=1}^M  P_X({\mathcal{G}^{-1}(i)})^{1/(1+\rho)}
\end{align}
}from which inequality \eqref{eqn133} follows for the optimal strategy $\mathcal{G^*}$. Now, considering the ordering of costs that minimizes the right hand side, we shall have,
{
\begin{align}
\mathbb{E}[C_\mathcal{G}(X)^\rho]  &\geq  \left[\sum_{i=1}^M \frac{1}{\sum_{j=1}^i c^*_j}\right]^{-\rho} \left[\sum_{i=1}^M P_X({\mathcal{G}^{-1}(i)})^\frac{1}{1+\rho}\right]^{1+\rho}  \nonumber \\
&\geq \left(\frac{M}{1+\gamma^*}\right)^{-\rho} \left[\sum_{i=1}^M P_X(x_i)^\frac{1}{1+\rho}\right]^{1+\rho} \label{eqn14} = \left(\frac{M}{1+\gamma^*}\right)^{-\rho} \exp\left\{\rho H_{\frac{1}{1+\rho}}(X) \right\} 
\end{align} 
}where $\gamma^*$ is the harmonic mean of $\{\sum_j^i c^*_j-1\}$'s for $i=\{1,2,\dots,M\}$ and $H_{\alpha}(X)$ is Rényi's entropy of order $\alpha$ ($\alpha>0, \alpha \not=1$) for random variable $X$ defined as,
\begin{eqnarray}
H_\alpha(X) = \frac{\alpha}{1-\alpha}\ln \left[\sum_x P_X(x)^\alpha \right]^{1/\alpha}
\end{eqnarray}

Note that inequality \eqref{eqn14} followed from the inequality \eqref{eqn63}.

\section{Proof of Theorem \ref{thm31}}

Let us first observe that with the optimal guessing strategy $\mathcal{G}^*$ that minimizes the expected guessing cost $x$, 
\begin{eqnarray}
C_{\mathcal{G}^*}(x) &=& \sum_{x^\prime:C_{\mathcal{G}^*}(x^\prime) \leq C_{\mathcal{G}^*}(x)} \sum_{x^{\prime \prime}}^{c_{x^\prime}} 1\\
&\leq& \sum_{x^\prime:C_{\mathcal{G}^*}(x^\prime) \leq C_{\mathcal{G}^*}(x)} \sum_{x^{\prime \prime}}^{c_{x^\prime}} \left(\frac{c_xP_X(x^\prime)}{c_{x^\prime}P_X(x)} \right)^{\frac{1}{1+\rho}} \label{eqn20} \\
&=& \sum_{x^\prime:C_{\mathcal{G}^*}(x^\prime) \leq C_{\mathcal{G}^*}(x)} 
c_{x^\prime}^{\frac{\rho }{1+\rho}}\left(\frac{c_xP_X(x^\prime)}{P_X(x)} \right)^{\frac{1}{1+\rho}} \\  
&\leq  & \sum_{x^\prime} 
c_{x^\prime}^{\frac{\rho}{1+\rho}}\left(\frac{c_xP_X(x^\prime)}{P_X(x)} \right)^{\frac{1}{1+\rho}} \label{eqn222}
\end{eqnarray}
where the inequality \eqref{eqn20} follows from the necessary condition of Proposition \ref{prop0} with $\rho=1$ i.e., $c_{x^\prime} P_X(x) \leq c_x P_X(x^{\prime})$ for all $\{x^{\prime}:C_{\mathcal{G}^*}(x^\prime) \leq C_{\mathcal{G}^*}(x)\}$ that needs to hold for the optimal guessing strategy $\mathcal{G}^*$. Also, although the exponent $1/(1+\rho)$ decreases the value, it is still greater than 1 due to  ${\frac{c_xP_X(x^\prime)}{c_{x^\prime}P_X(x)}} \geq 1$. Using the inequality given in \eqref{eqn222} in equation \eqref{orgeqn}, we get
\begin{align}
\mathbb{E}[C_\mathcal{G^{*}}(X)^\rho] & = \sum_x P_X(x) C_{\mathcal{G}^*}(x)^\rho \\
& \leq \sum_x P_X(x) \left( \sum_{x^{\prime}} c_{x^\prime}^{\frac{\rho}{1+\rho}} \left(\frac{c_xP_X(x^\prime)}{P_X(x)}\right)^{\frac{1}{1+\rho}}
\right)^\rho = \left[\sum_x c_x^{\frac{\rho}{1+\rho}} P_X(x)^{\frac{1}{1+\rho}}\right]^{1+\rho} \\
& = \left[\sum_x c_x (P_X(x)/c_x)^\frac{1}{1+\rho}\right]^{1+\rho} \label{eqnfinal}
\end{align}

Note that for a given $i,j$ satisfying $i \leq j$, the condition $c_ip_j \leq c_jp_i$ (i.e., $c_{x^\prime} P_X(x) \leq c_x P_X(x^{\prime})$) does not necessarily imply  the condition given in \eqref{necessarycon} for any $\rho>0$. However, we observe that the general necessary condition of Proposition \ref{prop0} is more strict in the sense that the strategy that is satisfying $c_ip_j \leq c_jp_i$ (condition in \eqref{necessarycon} with $\rho=1$) would be an upper bound on the moments of guessing cost using the optimal guessing strategy. For instance, if for all $i \leq j$,
\begin{eqnarray}
    \left[ ||\mathbf{c}^{(i)}||_{1}^\rho - ||\mathbf{c}^{(j)}||_{1}^\rho \right] p_i + \left[ ||\mathbf{c}^{(j)}||_{1}^\rho - (||\mathbf{c}^{(i)}||_{1}-c_i+c_j)^\rho \right] p_j \leq  \sum_{l=i+1}^{j-1}  \left[  (||\mathbf{c}^{(l)}||_{1}-c_i+c_j)^\rho - ||\mathbf{c}^{(l)}||_{1}^\rho \right] p_l. 
\end{eqnarray}
is satisfied, then $c_ip_j \leq c_jp_i$ may or may not hold. However, in the worst case the strategy satisfying $c_ip_j \leq c_jp_i$ may not be optimal for a given $\rho$. Hence, our argument in Eqn. \eqref{eqn20} is still valid since we are generating an upper bound for the optimal guessing strategy. However with the general condition the upper bound can be tightened at the expense of ending up with more complex expressions. For the asymptotic result of the paper, this simpler upper bound would be just sufficient. 

On the other hand, we notice that
\begin{align}
    \frac{P_X(x)}{\lceil c_x \rceil} = \frac{c_x P_X(x)}{\lceil c_x \rceil c_x} \geq \frac{P_X(x)}{c_x} \left( \frac{c_x}{\lceil c_x \rceil} \right)^{1+\rho}
\end{align}
from which the following inequality follows for $\rho \geq 0$,
\begin{align}
    \lceil c_x \rceil (P_X(x)/\lceil c_x \rceil)^{\frac{1}{1+\rho}} \geq c_x (P_X(x)/c_x)^\frac{1}{1+\rho}. \label{eqn311}
\end{align}

Thus, using the inequality \eqref{eqn311} and the pre-defined auxiliary random variable $Y$ earlier, we finally express the upper bound in a more compact form
\begin{align}
    \mathbb{E}[C_\mathcal{G^{*}}(X)^\rho] &\leq \left[\sum_x c_x (P_X(x)/c_x)^\frac{1}{1+\rho}\right]^{1+\rho} \\
    & \leq \left[ \sum_x \lceil c_x \rceil (P_X(x)/\lceil c_x \rceil)^{\frac{1}{1+\rho}} \right]^{1+\rho}  = \left[ \sum_y  P_Y(y)^{\frac{1}{1+\rho}} \right]^{1+\rho} \\\label{eqn711}
    &= \exp\{ \rho H_{\frac{1}{1+\rho}}(Y) \}
\end{align}

Notice that this upper bound will reduce to Arikan's upper bound i.e., $\exp(\rho H_{\frac{1}{1+\rho}}(X))$ with all costs set to unity. 

\section{Proof of Theorem \ref{thmexponent}}

Let us consider the general case and first define the induced random variables $Z_i \sim Z$  and  $Y_i \sim Y$ for the corresponding random variables $X_i \sim X$ for $i=\{1,\dots,n\}$ each with cost distributions $\mathcal{C}_i$ based on Definition \ref{def31}. Also let $\mathcal{F}^*$  and $\mathcal{H}^*$ be the induced optimal guessing strategies from $\mathcal{G}^*$ for random variables $\{Z_1, \dots, Z_n\}$ and $\{Y_1, \dots, Y_n\}$, respectively. 
 
 Now, consider the upper bound for i.i.d. random variables and observe
 {
\begin{eqnarray}
C_{\mathcal{G}^*}(x_1,\dots,x_n) 
&\leq& \sum_{\substack{
    x_1^\prime, x_2^\prime, \dots, x_n^\prime: \\
    C_{\mathcal{G}^*}(x_1^\prime,x_2^\prime,\dots,x_n^\prime) \leq C_{\mathcal{G}^*}(x_1,x_2,\dots,x_n) } } 
    \sum_{x_1^{''}}^{c_{x_1^\prime}} \dots \sum_{x_n^{\prime \prime}}^{c_{x_n^\prime}} \left(\prod_{i=1}^n \frac{c_{x_i}P_{X_i}(x_i^\prime)}{c_{x_i^\prime}P_{X_i}(x_i)} \right)^{\frac{1}{1+\rho}} \\
&\leq& \prod_{i=1}^n \sum_{x_i^\prime} 
c_{x_i^\prime}^{\frac{\rho}{1+\rho}}\left(\frac{c_{x_i}P_{X_i}(x_i^\prime)}{P_{X_i}(x_i)} \right)^{\frac{1}{1+\rho}} = \left[ \sum_{x_1^\prime} 
c_{x_1^\prime}^{\frac{\rho}{1+\rho}}\left(\frac{c_{x_1}P_{X_1}(x_1^\prime)}{P_{X_1}(x_1)} \right)^{\frac{1}{1+\rho}} \right]^n
\end{eqnarray}
}due to independence and series of inequalities $c_{x_1^\prime} P_X(x_1) \leq c_{x_1} P_X(x_1^{\prime}), c_{x_2^\prime} P_X(x_2) \leq c_{x_2} P_X(x_2^{\prime}), \dots, c_{x_n^\prime} P_X(x_n) \leq c_{x_n} P_X(x_n^{\prime})$ for all $\{x_i^{\prime}:C_{\mathcal{G}^*}(x_i^\prime) \leq C_{\mathcal{G}^*}(x_i)\}$ where $i=1,\dots,n$ that needs to hold for the optimal strategy $\mathcal{G}^*$ required by the necessary condition. Finally, we can upper bound the expected guessing cost for a sequence of i.i.d. random variables as
\begin{align}
\mathbb{E}[C_{\mathcal{G}^*}(X_1,\dots,X_n)^\rho] & = \sum_x P_{\boldsymbol{X}}(x_1,\dots,x_n) C_{\mathcal{G}^*}(x_1,\dots,x_n)^\rho \\
& \leq \prod_{i=1}^n \sum_{x_i}  P_{X_i}(x_i) \left[\sum_{x_i^\prime} 
c_{x_i^\prime}^{\frac{\rho}{1+\rho}}\left(\frac{c_{x_i}P_{X_i}(x_i^\prime)}{P_{X_i}(x_i)} \right)^{\frac{1}{1+\rho}}\right]^\rho =  \prod_{i=1}^n \left[\sum_{x_i} c_{x_i} (P_{X_i}(x_i)/c_{x_i})^\frac{1}{1+\rho}\right]^{(1+\rho)} 
\\
&  \leq \prod_{i=1}^n \left[ \sum_{y_i}  P_{Y_i}(y_i)^{\frac{1}{1+\rho}} \right]^{(1+\rho)} = \exp\left\{ \rho \sum_{i} H_{\frac{1}{1+\rho}}(Y_i) \right\} \label{eqn39}
\end{align}
where the last inequality follows due to inequalities similar to \eqref{eqn311} for each random variable $X_i$. If the cost and probability distributions of $X_i$'s are arranged such that the induced $Y_i$'s are identically distributed (for instance $X_i$'s are i.i.d. with identical cost distributions i.e., $\mathcal{C}_1 \equiv \mathcal{C}_2 \equiv \dots \equiv \mathcal{C}_n \triangleq \mathcal{C}$) then  we can further simplify \eqref{eqn39} as 
\begin{eqnarray}
\mathbb{E}[C_{\mathcal{G}^*}(X_1,\dots,X_n)^\rho] \leq \exp\{ \rho n H_{\frac{1}{1+\rho}}(Y_1) \}. 
\end{eqnarray}

Let us define $s_n = \mathbb{E}[C_{\mathcal{G}^*}(X_1,\dots,X_n)^\rho]$ and $\beta_k = \inf\{s_n: n \geq k\}$ for $k \geq 1$.  Note that $\beta_k$ is an increasing sequence ($\beta_{k+1} \geq \beta_k$) bounded above by \eqref{eqn39}. Then we have
\begin{align}
    \liminf_{n \rightarrow \infty} \frac{1}{n \rho} \ln(s_n) = \lim_{k \rightarrow \infty} \left\{ \frac{1}{n\rho} \ln(\beta_k) \right\} = \lim_{n \rightarrow \infty} \frac{1}{n} \sum_{i=1}^n H_{\frac{1}{1+\rho}}(Y_i) = \mathcal{R}_{\frac{1}{1+\rho}}(\{Y_i\}) \label{finupb}
\end{align}
which is defined to be the order-$1/(1+\rho)$ Rényi entropy rate \cite{Bunte2016} as long as the limit exists.  In addition to the upper bound, we can extend the lower bound given in \eqref{lowerboundalter} for a sequence of random variables as
\begin{align}
\mathbb{E}[C_{\mathcal{G}^*}(X_1,\dots,X_n)^\rho] \geq \mathbb{E}[C_{\mathcal{F}^*}(Z_1,\dots,Z_n)^\rho] \geq \left(1+ \ln \left( \prod_{i=1}^n \sum_{x_i} \lfloor c_{x_i} \rfloor \right)\right)^{-\rho} \exp \left\{ \rho \sum_{i=1}^n H_{\frac{1}{1+\rho}}(Z_i) \right\} \label{eqn47}
\end{align}
where the first inequality can be shown to be true through induction and the second inequality  follows from  \cite{arikan1996} through a bit of generalization. Note that $\mathcal{F}^*$ is the optimal induced strategy from $\mathcal{G}^*$. If the cost and probability distributions of $X_i$'s are arranged such that the induced $Z_i$'s are identically distributed (for instance $X_i$'s are i.i.d. with identical cost distributions i.e., $\mathcal{C}_1 \equiv \mathcal{C}_2 \equiv \dots \equiv \mathcal{C}_n \triangleq \mathcal{C}$) then  we can further simplify \eqref{eqn47} as 
{
\begin{eqnarray}
\mathbb{E}[C_{\mathcal{G}^*}(X_1,\dots,X_n)^\rho] \geq \left(1+ \ln \left( \prod_{i=1}^n \sum_{x_i} \lfloor c_{x_i} \rfloor \right)\right)^{-\rho} \exp \left\{ \rho n H_{\frac{1}{1+\rho}}(Z_i) \right\}. \label{eqlb120}
\end{eqnarray}}

Similarly, we further define $\alpha_k = \sup\{s_n: n \geq k\}$ for $k \geq 1$ which makes it a decreasing sequence lower bounded by \eqref{eqlb120}. As a consequence, we have
{
\begin{align}
    \limsup_{n \rightarrow \infty} \frac{1}{n\rho} \ln(s_n) = \lim_{k \rightarrow \infty} \left\{ \frac{1}{n\rho} \ln(\alpha_k) \right\} &= \lim_{n \rightarrow \infty} \ln \left(1 +  \ln \left(\prod_{i=1}^n \sum_{x_i} \lfloor c_{x_i} \rfloor \right) \right)^{-1/n} +  \lim_{n \rightarrow \infty} \frac{1}{n} \sum_{i=1}^n H_{\frac{1}{1+\rho}}(Z_i) \nonumber \\
    &= \lim_{n \rightarrow \infty} \frac{1}{n} \sum_{i=1}^n H_{\frac{1}{1+\rho}}(Z_i) = \mathcal{R}_{\frac{1}{1+\rho}}(\{Z_i\}). \label{finaleqn}
\end{align}}

If $\{X_i\}$ are indentically distributed with the same cost distribution $\mathcal{C}$, then  order-$1/(1+\rho)$ Rényi entropy rates would be equal to $H_{\frac{1}{1+\rho}}(Y)$ and $H_{\frac{1}{1+\rho}}(Z)$, respectively. Note that in general, these rates are not necessarily equal. However, if the costs are integers, {it would not be hard to verify} $\mathcal{R}_{\frac{1}{1+\rho}}(\{Y_i\}) = \mathcal{R}_{\frac{1}{1+\rho}}(\{Z_i\})$. Thus, combining equations \eqref{finaleqn} with \eqref{finupb}, we shall have 
\begin{align}
    \limsup_{n \rightarrow \infty} \frac{1}{n \rho} \ln(s_n) = \liminf_{n \rightarrow \infty} \frac{1}{n \rho} \ln(s_n)     &=\mathcal{R}_{\frac{1}{1+\rho}}(\{Y_i\}) =\mathcal{R}_{\frac{1}{1+\rho}}(\{Z_i\}) \\
    &=\mathcal{R}_{\frac{1}{1+\rho}}(\{X_i\})
\end{align}
which completes the proof. 

\section{Proof of Theorem \ref{sasonbound1}}

Let us first consider $\rho \geq 1$, and for a given real constant $c \geq 1$ we let $r(u;c)$ be the parametric function given by
\begin{align}
    r(u;c) = \frac{1}{c(1+\rho)} \left( u^{1+\rho} - (u-c)^{1+\rho} - c^{1+\rho} \right) - (u-c)^{\rho}, \ \ \ \ u \geq c
\end{align}
One of the things we realize about this function is that its derivative is non-negative, i.e., 
\begin{eqnarray}
\frac{\partial }{\partial u} r(u;c) = \frac{1}{c} (u^\rho - (u-c)^\rho) - \rho(u-c)^{\rho-1} \geq 0 
\end{eqnarray}
which is not hard to see by invoking mean value theorem from standard calculus. Moreover, we have 
\begin{align}
    \frac{\partial }{\partial c}  \frac{\partial }{\partial u} r(u;c) = \frac{\rho}{c} (u-c)^{\rho-1} - \frac{1}{c^2} (u^\rho - (u-c)^\rho)+ \rho(\rho-1)(u-c)^{\rho-2} \geq 0,
\end{align}
i.e., it is always non-negative for $u \geq c \geq 1$ and $\rho \geq 1$. Thus $\frac{\partial }{\partial u} r(u,c)$ is an increasing function of $c$ which therefore leads to the conclusion that for $c \geq 1$, it is non-negative. Since $r(c,c)=0$, it follows that $r(u;c)$ is non-negative for $u\geq c \geq 1$. 

Remember that for a given random variable $X$ associated with costs $\mathcal{C}={c_1,\dots,c_M}$, we have from equation \eqref{eqn63} 
\begin{align}
    \mathbb{E}[C_\mathcal{G}(X)^\rho] -  \mathbb{E}[(C_\mathcal{G}(X)-\overline{c}_X(\rho))^\rho] \leq \left[ \sum_{k=1}^{M^\prime} q_k^{1/\rho}\right]^\rho, \label{eqn111}
\end{align}
where $M^\prime = \sum_{i=1}^M \lceil c_i \rceil$,  $q_k = p_i \textrm{ for } \sum_{l=1}^{i-1} \lceil c_l \rceil < k \leq  \sum_{l=1}^{i} \lceil c_l \rceil \textrm{ and } i = 1,\dots,M$ and balancing cost  $\overline{c}_X(\rho)$ is as defined in Definition \ref{defbalcost} for as long as  $q_k$'s satisfy the relation given in equation \eqref{eqn64}. Note that since $\mathbb{E}[r(C_\mathcal{G}(X), \overline{c}_X(\rho) )] \geq 0$, it implies for $\rho \geq 1$ that if $\overline{c}_X(\rho) \leq \overline{c}_X(1+\rho)$, 
\begin{align}
\mathbb{E}[(C_\mathcal{G}(X)-\overline{c}_X(\rho))^\rho] &\leq \frac{1}{\overline{c}_X(\rho)(1+\rho)} \left(\mathbb{E}[C_\mathcal{G}(X)^{1+\rho}] -  \mathbb{E}[(C_\mathcal{G}(X)-\overline{c}_X(\rho))^{1+\rho}]\right)- \frac{\overline{c}_X^{\rho}(\rho)}{1+\rho} \\
& \leq  \frac{1}{\overline{c}_X(\rho)(1+\rho)} \left(\mathbb{E}[C_\mathcal{G}(X)^{1+\rho}] -  \mathbb{E}[(C_\mathcal{G}(X)-\overline{c}_X(1+\rho))^{1+\rho}]\right)- \frac{\overline{c}_X^{\rho}(\rho)}{1+\rho} \\
&\leq \frac{1}{\overline{c}_X(\rho)(1+\rho)} \left[ \sum_{k=1}^{M^\prime} q_k^{\frac{1}{1+\rho}}\right]^{1+\rho} - \frac{\overline{c}_X^{\rho}(\rho)}{1+\rho}. \label{eqn118}
\end{align}
where inequality \eqref{eqn118} follows for as long as 
$q_{k+1}^{\frac{1}{1+\rho}} \leq \frac{1}{k} (q_1^{\frac{1}{1+\rho}}+\dots+q_{k}^{\frac{1}{1+\rho}}), \textrm{  for } k=1,\dots, M^\prime-1$. However if equation \eqref{eqn64} is already satisfied for $\rho \geq 1$, then this inequality would also be satisfied. Hence combining it with the equation \eqref{eqn111}, we finally obtain
\begin{eqnarray}
\mathbb{E}[C_\mathcal{G}(X)^\rho] \leq \frac{1}{\overline{c}_X(\rho)(1+\rho)} \left[ \sum_{k=1}^{M^\prime} q_k^{\frac{1}{1+\rho}}\right]^{1+\rho} + \left[ \sum_{k=1}^{M^\prime} q_k^{1/\rho}\right]^\rho - \frac{\overline{c}_X^{\rho}(\rho)}{1+\rho} . \label{eqn115}
\end{eqnarray}

On the other hand, if $\overline{c}_X(1+\rho) \leq \overline{c}_X(\rho)$, then we use the fact that $\mathbb{E}[r(C_\mathcal{G}(X)), \overline{c}_X(1+\rho) ] \geq 0$ which implies for $\rho \geq 1$,
\begin{align}
    \mathbb{E}[(C_\mathcal{G}(X)-\overline{c}_X(1+\rho))^\rho] &\leq \frac{1}{\overline{c}_X(1+\rho)(1+\rho)} \left(\mathbb{E}[C_\mathcal{G}(X)^{1+\rho}] -  \mathbb{E}[(C_\mathcal{G}(X)-\overline{c}_X(1+\rho))^{1+\rho}]\right)- \frac{\overline{c}_X^{\rho}(1+\rho)}{1+\rho} \\
&\leq \frac{1}{\overline{c}_X(1+\rho)(1+\rho)} \left[ \sum_{k=1}^{M^\prime} q_k^{\frac{1}{1+\rho}}\right]^{1+\rho} - \frac{\overline{c}_X^{\rho}(1+\rho)}{1+\rho} \label{eqn117}
\end{align}
Hence using equation \eqref{eqn111} and \eqref{eqn117} , we finally obtain
\begin{align}
        \mathbb{E}[C_\mathcal{G}(X)^\rho]  &\leq \left[ \sum_{k=1}^{M^\prime} q_k^{1/\rho}\right]^\rho + \mathbb{E}[(C_\mathcal{G}(X)-\overline{c}_X(\rho))^\rho] \\
        &\leq \left[ \sum_{k=1}^{M^\prime} q_k^{1/\rho}\right]^\rho + \mathbb{E}[(C_\mathcal{G}(X)-\overline{c}_X(1+\rho))^\rho] \\
        &\leq \frac{1}{\overline{c}_X(1+\rho)(1+\rho)} \left[ \sum_{k=1}^{M^\prime} q_k^{\frac{1}{1+\rho}}\right]^{1+\rho} + \left[ \sum_{k=1}^{M^\prime} q_k^{1/\rho}\right]^\rho - \frac{\overline{c}_X^{\rho}(1+\rho)}{1+\rho} \label{eqn120}
\end{align}
Thus combining \eqref{eqn115} and \eqref{eqn120}, we get the final expression 
\begin{align}
    \mathbb{E}[C_\mathcal{G}(X)^\rho] \leq \frac{1}{\min\{\overline{c}_X(\rho),\overline{c}_X(1+\rho)\}(1+\rho)} \left[ \sum_{k=1}^{M^\prime} q_k^{\frac{1}{1+\rho}}\right]^{1+\rho} + \left[ \sum_{k=1}^{M^\prime} q_k^{1/\rho}\right]^\rho - \frac{[\min\{\overline{c}_X(\rho),\overline{c}_X(1+\rho)\}]^\rho}{1+\rho} \label{eqn123}
\end{align}

Let us now consider the case $\rho \in (0,1)$. We introduce the following function this time,
\begin{align}
    r(u,c) = \frac{1}{c(1+\rho)}(u^{1+\rho} - (u-c)^{1+\rho} + \rho c^{1+\rho}) - u^{\rho}, \ \ \ \ u \geq c \geq 1. 
\end{align}
When we take the derivative with respect to $u$, we get
\begin{align}
    \frac{\partial }{\partial u} r(u,c) &= \frac{1}{c} (u^\rho - (u-c)^\rho) - \rho u^{\rho-1} \\
    & = \rho x^{\rho-1} -  \rho u^{\rho-1}, \ \ x\in(u-c,u)  \label{MVT}\\
    & > 0,
\end{align}
where  the equation \eqref{MVT} holds due to mean value theorem for all $\rho \in (0,1)$. Since $r(c,c) = 0$, we always have $r(u,c) \geq 0$ for all $u \geq c \geq 1$. Applying this function, we will have $\mathbb{E}[r(C_\mathcal{G}(X)), \overline{c}_X(1+\rho) ] \geq 0$ meaning that
\begin{align}
    \mathbb{E}[C_\mathcal{G}(X)^\rho] &\leq \frac{1}{\overline{c}_X(1+\rho)(1+\rho)} \left(\mathbb{E}[C_\mathcal{G}(X)^{1+\rho}] -  \mathbb{E}[(C_\mathcal{G}(X)-\overline{c}_X(1+\rho))^{1+\rho}] + \rho \overline{c}_X^{1+\rho}(1+\rho) \right) \\
    &\leq \frac{1}{\overline{c}_X(1+\rho)(1+\rho)}  \left[ \sum_{k=1}^{M^\prime} q_k^{\frac{1}{1+\rho}}\right]^{1+\rho} + \frac{\rho}{1+\rho} \overline{c}_X^{\rho}(1+\rho) \label{eqn131}
\end{align}
where the inequality \eqref{eqn131} follows since $1 + \rho \geq 1$. Similarly for $\mathbb{E}[r(C_\mathcal{G}(X)), \overline{c}_X(\rho) ] \geq 0$, we get
\begin{align}
    \mathbb{E}[C_\mathcal{G}(X)^\rho] &\leq \frac{1}{\overline{c}_X(\rho)(1+\rho)} \left(\mathbb{E}[C_\mathcal{G}(X)^{1+\rho}] -  \mathbb{E}[(C_\mathcal{G}(X)-\overline{c}_X(\rho))^{1+\rho}] + \rho \overline{c}_X^{1+\rho}(\rho) \right) \\
    &\leq \frac{1}{\overline{c}_X(\rho)(1+\rho)}  \left[ \sum_{k=1}^{M^\prime} q_k^{\frac{1}{1+\rho}}\right]^{1+\rho} + \frac{\rho}{1+\rho} \overline{c}_X^{\rho}(\rho)
\end{align}
if $ \overline{c}_X(\rho) \leq  \overline{c}_X(1+\rho)$. Thus,
\begin{align}
\mathbb{E}[C_\mathcal{G}(X)^\rho] \leq \frac{1}{\min\{\overline{c}_X(\rho),\overline{c}_X(1+\rho)\} (1+\rho)} \left[ \sum_{k=1}^{M^\prime} q_k^{\frac{1}{1+\rho}}\right]^{1+\rho} + \frac{\rho}{1+\rho} [\min\{\overline{c}_X(\rho),\overline{c}_X(1+\rho)\}]^\rho \label{eqn132}
\end{align}
Thus, using an indicator function $\mathbbm{1}_{\rho \geq 1}$ to be able to combine \eqref{eqn123} and \eqref{eqn132}, the result follows.





\section{Proof of Theorem \ref{Thm36}}

Let us consider the case $\rho \in (0,1)$ case first and state the following Lemma. 

\begin{lemma} \label{lemma178}
For $\rho \in (0,1)$ and any $c \in  \mathbb{R}, c \geq 1$ and $u \geq c$,
\begin{align}
    u^\rho \leq \frac{u^{1+\rho}-(u-c)^{1+\rho}}{c(1+\rho)} + \frac{\rho c^\rho}{1+\rho} \mathbbm{1}_{\{c\leq u < c+1\}} + \left((c+1)^\rho - \frac{(c+1)^{1+\rho}-1}{c(1+\rho)}\right)\mathbbm{1}_{\{u \geq c+1\}}
\end{align}
\end{lemma}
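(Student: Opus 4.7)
The plan is to reduce the piecewise inequality to a single monotonicity statement about an auxiliary function. Define
\[
g(u) \triangleq \frac{u^{1+\rho}-(u-c)^{1+\rho}}{c(1+\rho)} - u^{\rho}, \qquad u \geq c.
\]
Then the claim to prove splits into two sub-claims: on $[c,c+1)$ we need $g(u) \geq -\frac{\rho c^{\rho}}{1+\rho}$, and on $[c+1,\infty)$ we need $g(u) \geq g(c+1)$, since $g(c+1)=\frac{(c+1)^{1+\rho}-1}{c(1+\rho)}-(c+1)^{\rho}$ is exactly the negative of the constant multiplying the second indicator.

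First, I would evaluate $g$ at the left endpoint: a direct computation gives $g(c) = \frac{c^{1+\rho}}{c(1+\rho)} - c^{\rho} = -\frac{\rho c^{\rho}}{1+\rho}$. Thus both sub-claims reduce to showing that $g$ is nondecreasing on $[c,\infty)$, from which the first sub-claim becomes $g(u) \geq g(c)$ and the second becomes $g(u) \geq g(c+1)$.

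For monotonicity, I would compute
\[
g'(u) = \frac{u^{\rho}-(u-c)^{\rho}}{c} - \rho\, u^{\rho-1}
\]
and apply the mean value theorem to $x \mapsto x^{\rho}$ on $[u-c,u]$: there exists $\xi \in (u-c,u)$ with $\frac{u^{\rho}-(u-c)^{\rho}}{c} = \rho\, \xi^{\rho-1}$. Because $\rho \in (0,1)$ the map $x \mapsto x^{\rho-1}$ is strictly decreasing on $(0,\infty)$, so $\xi < u$ yields $\xi^{\rho-1} \geq u^{\rho-1}$, and therefore $g'(u) = \rho(\xi^{\rho-1}-u^{\rho-1}) \geq 0$. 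This is essentially the same argument already used in Appendix E for the function $r(u;c)$, so no new analytic ingredient is required.

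The only mild subtlety, which I would flag but not belabor, is verifying that the two endpoint quantities $-\tfrac{\rho c^{\rho}}{1+\rho}$ and $g(c+1)$ match the constants appearing in the two indicator terms of the statement; this is a purely algebraic check. The main ``obstacle'' is thus trivial once $g'\geq 0$ is in hand, and the overall proof is a short monotonicity-plus-endpoint-evaluation argument.
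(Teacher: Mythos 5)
Your proof is correct and takes essentially the same approach as the paper: the paper introduces two auxiliary functions $r_1(u,c)$ and $r_2(u,c)$, which are exactly your $g(u)-g(c)$ and $g(u)-g(c+1)$, shows each has nonnegative derivative via the mean value theorem applied to $x\mapsto x^\rho$, and evaluates at the endpoints $u=c$ and $u=c+1$ respectively. Your single-function formulation is slightly cleaner and also reveals that the paper's final step (comparing $\tfrac{\rho c^\rho}{1+\rho}$ with $-g(c+1)$ to establish $\min\{r_1,r_2\}=r_2$) is superfluous for the lemma as stated, since the two indicators act on disjoint intervals and each piece is handled independently.
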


\begin{proof}
For $\rho \in (0,1)$ and a given real constant $c \geq 1$ we define to parametric functions given by
\begin{align}
    r_1(u,c) &= \frac{u^{1+\rho}-(u-c)^{1+\rho}}{c(1+\rho)} + \frac{\rho c^\rho}{1+\rho} - u^\rho, \\
    r_2(u,c) &= \frac{u^{1+\rho}-(u-c)^{1+\rho}}{c(1+\rho)} + (c+1)^\rho - \frac{(c+1)^\rho-1}{c(1+\rho)} - u^\rho 
\end{align}
For $u \in (c,\infty)$, we have $\frac{\partial }{\partial u}  r_1(u,c) = \frac{\partial }{\partial u} r_2(u,c) = \frac{1}{c}(u^\rho - (u-c)^\rho) -\rho u^{\rho-1} > 0$ again similarly due to mean value theorem. Moreover, $r_1(c,c) = r_2(c+1,c) = 0$. As a result, $r_1(u,c) \geq 0$ for $u \geq c$ and $r_2(u,c) \geq 0$ for $u \geq c+1$. Next we observe that for $\rho \in (0,1)$ and $c \geq 0$, we have $(c\rho-1)(c+1)^\rho < \rho c^{1+\rho}-1$ which implies that
\begin{align}
(c+1)^\rho - \frac{(c+1)^\rho-1}{c(1+\rho)} < \frac{\rho c^\rho}{1+\rho}
\end{align}
which completes the proof by recognizing $\min\{r_1(u,c),r_2(u,c)\} = r_2(u,c)$. 
\end{proof}

Let $\overline{c}_{{min}_X}(\rho) = \min\{\overline{c}_X(\rho),\overline{c}_X(1+\rho)\}$. Now using the result of Lemma \ref{lemma178} for $\rho \in (0,1)$ and replacing $u$ with $C_\mathcal{G}(X)$, and considering both cases $\overline{c}_X(1+\rho) \leq \overline{c}_X(\rho)$, $\overline{c}_X(\rho) \leq \overline{c}_X(1+\rho)$ separately, similar to Appendix E, we obtain
\begin{align}
    \mathbb{E}[C_\mathcal{G}(X)^\rho] &\leq \frac{1}{ \overline{c}_{{min}_X}(\rho) (1+\rho)} \left(\mathbb{E}[C_\mathcal{G}(X)^{1+\rho}] -  \mathbb{E}[(C_\mathcal{G}(X)-\overline{c}_X(1+\rho))^{1+\rho}]  \right) \nonumber \\ & \ \ \  +  \frac{\rho \overline{c}^\rho_{{min}_X}(\rho)}{1+\rho} P(\overline{c}_{{min}_X}(\rho) \leq C_\mathcal{G}(X) < \overline{c}_{{min}_X}(\rho)+1) \nonumber \\ 
    & \ \ \ + \left((\overline{c}_{{min}_X}(\rho)+1)^\rho - \frac{(\overline{c}_{{min}_X}(\rho)+1)^{1+\rho}-1}{\overline{c}_{{min}_X}(\rho)(1+\rho)}\right) P(C_\mathcal{G}(X) \geq \overline{c}_{{min}_X}(\rho)+1)
    \\
    & \leq \frac{1}{\overline{c}_{{min}_X}(\rho) (1+\rho)}  \left[ \sum_{k=1}^{M^\prime} q_k^{\frac{1}{1+\rho}}\right]^{1+\rho} +  \frac{\rho \overline{c}^\rho_{{min}_X}(\rho)}{1+\rho} P(\overline{c}_{{min}_X}(\rho) \leq C_\mathcal{G}(X) < \overline{c}_{{min}_X}(\rho)+1) \nonumber \\ 
    & \ \ \ + \left((\overline{c}_{{min}_X}(\rho)+1)^\rho - \frac{(\overline{c}_{{min}_X}(\rho)+1)^{1+\rho}-1}{\overline{c}_{{min}_X}(\rho)(1+\rho)}\right) P(C_\mathcal{G}(X) \geq \overline{c}_{{min}_X}(\rho)+1) \label{eqn138}
\end{align}
where equation \eqref{eqn138} follows from \eqref{eqn111}. Next, we state the following Lemma. 
\begin{lemma}
For $\rho \in [1,2]$ and any $c\in \mathbb{R}, c\geq 1$ and $u \geq c$,
\begin{align}
    u^\rho \leq \frac{u^{1+\rho}-(u-c)^{1+\rho}}{1+\rho} + \frac{u^{\rho}-(u-c)^{\rho}}{\rho} + \frac{c^\rho(\rho^2-c\rho-1)}{\rho(1+\rho)} \label{eqn139}
\end{align}
\end{lemma}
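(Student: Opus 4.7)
The plan is to prove the lemma by converting the inequality into a monotonicity statement for an auxiliary single-variable function and then reducing to a cost-free inequality provable via Taylor's theorem with integral remainder.

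First I would introduce
\[
R(u) \triangleq \frac{u^{1+\rho}-(u-c)^{1+\rho}}{1+\rho} + \frac{u^\rho - (u-c)^\rho}{\rho} - u^\rho + \frac{c^\rho(\rho^2 - c\rho - 1)}{\rho(1+\rho)},
\]
so the claim is equivalent to $R(u) \geq 0$ for $u \geq c$. A short algebraic simplification, entirely analogous to the verification that $r(c,c)=0$ in Appendix F, shows $R(c)=0$; hence it suffices to prove $R'(u) \geq 0$ on $[c,\infty)$. Differentiating gives
\[
R'(u) = u^\rho - (u-c)^\rho - (\rho-1)u^{\rho-1} - (u-c)^{\rho-1}.
\]

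Next I would normalize by substituting $u = c(1+y)$ with $y \geq 0$ and factoring out $c^{\rho-1}$, yielding
\[
R'(u) = c^{\rho-1}\left[c\left((1+y)^\rho - y^\rho\right) - (\rho-1)(1+y)^{\rho-1} - y^{\rho-1}\right].
\]
Since $(1+y)^\rho - y^\rho \geq 0$ and $c \geq 1$, it suffices to prove the reduced inequality
\[
(1+y)^\rho - y^\rho \geq (\rho-1)(1+y)^{\rho-1} + y^{\rho-1}, \qquad y \geq 0,\ \rho \in [1,2].
\]
A direct algebraic rearrangement factors the difference of both sides as
\[
(1+y)\left[(1+y)^{\rho-1} - (\rho-1)(1+y)^{\rho-2} - y^{\rho-1}\right],
\]
so it remains to show the bracket is non-negative. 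For this, I would apply Taylor's theorem with integral remainder to $\phi(t) = (1+y-t)^{\rho-1}$ on $[0,1]$, noting $\phi(0) = (1+y)^{\rho-1}$ and $\phi(1) = y^{\rho-1}$, to obtain
\[
y^{\rho-1} = (1+y)^{\rho-1} - (\rho-1)(1+y)^{\rho-2} + (\rho-1)(\rho-2)\int_0^1 (1-t)(1+y-t)^{\rho-3}\,dt.
\]
The bracket therefore equals $-(\rho-1)(\rho-2)\int_0^1(1-t)(1+y-t)^{\rho-3}\,dt$, and since $\rho \in (1,2]$ makes $-(\rho-1)(\rho-2) \geq 0$ while the integrand is non-negative, the bracket is $\geq 0$.

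The main obstacle will be integrability of the Taylor remainder at the corner cases. For $\rho \in (1,2]$ and $y > 0$ the integrand is bounded, while for $y = 0$ it reduces to $(1-t)^{\rho-2}$, which is integrable precisely when $\rho > 1$. The endpoint $\rho = 1$ falls outside the scope of the Taylor argument, but it must be handled directly, which is trivial: the reduced inequality collapses to $1 \geq 1$, and multiplying by $c \geq 1$ still yields the required $R'(u) = c - 1 \geq 0$.
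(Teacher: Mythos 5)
Your proof is correct but departs from the paper's at the key step. Both reduce the claim to showing the auxiliary function $r(u,c)$ vanishes at $u=c$ and has nonnegative $u$-derivative, but the mechanisms for $\partial_u r\geq 0$ differ. The paper's Appendix F bounds the derivative below through a convexity estimate for $x\mapsto x^\rho$ combined with the Lipschitz-type bound $-c\leq(u-c)^{\rho-1}-u^{\rho-1}\leq 0$, arriving at $(2-\rho)c\geq 0$. You instead normalize $u=c(1+y)$, extract the factors $c^{\rho-1}$ and $(1+y)$, and exhibit the remaining bracket \emph{exactly} as $-(\rho-1)(\rho-2)\int_0^1(1-t)(1+y-t)^{\rho-3}\,dt$, whose sign is read off from $(\rho-1)(\rho-2)\le 0$ on $[1,2]$. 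This buys an exact sign-determining identity in place of a chain of one-sided estimates, makes the reduction to $c=1$ transparent, and is visibly tight at both $\rho=1$ and $\rho=2$. It is also the more robust justification: the paper's intermediate inequality $u^\rho-(u-c)^\rho\geq c+\rho(u-c)^{\rho-1}$, asserted via convexity, is delicate near $u=c$ (for instance it fails numerically at $c=1$, $\rho=3/2$, $u=1.01$), whereas your Taylor identity certifies $\partial_u r\geq 0$ without relying on that step. Your attention to integrability of the remainder at $y=0$ (reducing to $\int_0^1(1-t)^{\rho-2}\,dt<\infty$ for $\rho>1$) and the direct check at $\rho=1$ (where $r'(u)=c-1\geq 0$) are both necessary and correctly handled.
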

\begin{proof}
For $\rho \in [1,2]$, let $r(u,c)$ be a parametric function given by
\begin{align}
    r(u,c) = \frac{u^{1+\rho}-(u-c)^{1+\rho}}{1+\rho} + \frac{u^{\rho}-(u-c)^{\rho}}{\rho} - u^\rho + \frac{c^\rho(\rho^2-c\rho-1)}{\rho(1+\rho)}, \ \ \ u \geq c
\end{align}
If we take the partial derivative of this function with respect to $u$, we get
\begin{align}
\frac{\partial }{\partial u} r(u,c) &= u^\rho-(u-c)^\rho + u^{\rho-1}-(u-c)^{\rho-1} - \rho u^{\rho-1} \\
&\geq c + \rho (u-c)^{\rho-1} + u^{\rho-1}-(u-c)^{\rho-1} - \rho u^{\rho-1} \label{convexity1} \\
&= c + (\rho-1)((u-c)^{\rho-1}-u^{\rho-1}) \geq 2c - \rho c \geq 0 \label{eqn143}
\end{align}
where equation $\eqref{convexity1}$ follows from the convexity of $f(x) = x^\rho$ function in $(c,\infty)$ for $\rho \geq 1$ and equation $\eqref{eqn143}$ holds due to 
\begin{align}
    -c \leq (u-c)^{\rho-1} - u^{\rho-1} \leq 0
\end{align}
for $\rho \in [1,2]$ and $u \geq c$. Finally, note that $r(c,c) = 0$ implying that $r(u,c) \geq 0$ for $\rho \in [1,2]$ and $u \geq c$. 
\end{proof}

Replacing $u$ in \eqref{eqn139} with $C_\mathcal{G}(x)$, $c$ with the balancing cost, and taking the expectation of both sides, and considering both cases $\overline{c}_X(1+\rho) \leq \overline{c}_X(\rho)$, $\overline{c}_X(\rho) \leq \overline{c}_X(1+\rho)$ separately, similar to Appendix E, we finally obtain
\begin{align}
    \mathbb{E}[C_\mathcal{G}(X)^\rho] & \leq \frac{1}{1+\rho}   \left(\mathbb{E}[C_\mathcal{G}(X)^{1+\rho}] -  \mathbb{E}[(C_\mathcal{G}(X)-\overline{c}_X(1+\rho))^{1+\rho}]  \right) \\
    & \ \ \ + \frac{1}{\rho} \left(\mathbb{E}[C_\mathcal{G}(X)^{\rho}] -  \mathbb{E}[(C_\mathcal{G}(X)-\overline{c}_X(\rho))^{\rho}]  \right) + \frac{\overline{c}_{{min}_X}^\rho(\rho)(\rho^2-\overline{c}_{{min}_X}(\rho)\rho-1)}{\rho(1+\rho)}
\end{align}
from which the result follows using the relationship given in \eqref{eqn63}. 

\section{Proof of Theorem \ref{Thm37}}

Considering $\rho \geq 2, u \geq c > 1$ for some $c \in \mathbb{R}$, we define the parametric function, 
\begin{align}
    r(u,c) =  \frac{u^{1+\rho}-(u-c)^{1+\rho}}{c(1+\rho)} - u^{\rho} + \frac{\rho c u^{\rho-1}}{2} - \frac{\rho(\rho-1)}{2(1+\rho)}
\end{align}
One of the things we realize about this function is that its derivative is non-negative for $\rho \geq 2$, i.e.,
\begin{align}
\frac{\partial }{\partial u} r(u,c) = \frac{1}{c} (u^\rho - (u-c)^\rho) - \rho u^{\rho-1} + \frac{1}{2}\rho (\rho-1) c u^{\rho-2} \label{eqn147}
\end{align}
To be able to shorten the notation let $v(x) = x^\rho$. Now consider the Taylor series expansion of $v(x)$ around $u$ i.e.,
\begin{align}
    v(x) = v(u) + v'(u)(x-u) + \frac{1}{2}v''(u)(x-u)^2 + \frac{1}{6}v'''(u)(x-u)^3 + \dots 
\end{align}
Now, evaluate $v(x)$ at $x=u-c$ and observe to truncate the expansion, 
\begin{align}
    (u-c)^\rho = v(u-c) = v(u) - c v'(u) + \frac{c^2}{2} v''(u) - \frac{c^3}{6} v'''(u') \label{eqn149}
\end{align}
for some $u' \in (u-c,u)$ for $u \geq c$. By plugging \eqref{eqn149} into \eqref{eqn147}, we obtain
\begin{align}
    \frac{\partial }{\partial u} r(u,c) &= v'(u) - \frac{c}{2}v''(u) +\frac{c^2}{6}v'''(u') - \rho u^{\rho-1} + \frac{1}{2}\rho (\rho-1) c u^{\rho-2} \\
    & = \frac{c^2}{6}v'''(u') = \frac{c^2}{6} \rho (\rho-1) (\rho-2) ({u'})^{\rho-3} \geq 0
\end{align}
since $u' > 0$ due to $u\geq c$ and $\rho \geq 2$. It is also not hard to verify that $r(c,c) = 0$, which eventually implies that $r(u,c) \geq 0$ for $u \geq c$. If we substitute $u=C_\mathcal{G}(X)$ and $c=\overline{c}_X(1+\rho)$ and take the expectation i.e. $\mathbb{E}[r(u,c)]\geq 0$, we finally obtain
\begin{align}
\mathbb{E}[C_\mathcal{G}(X)^\rho] & \leq \frac{1}{\overline{c}_X(1+\rho)(1+\rho)} \left(\mathbb{E}[C_\mathcal{G}(X)^{1+\rho}] -  \mathbb{E}[(C_\mathcal{G}(X)-\overline{c}_X(1+\rho))^{1+\rho}]  \right)   + \frac{\rho \overline{c}_X(1+\rho)}{2} \mathbb{E}[C_\mathcal{G}(X)^{\rho-1}] - \frac{\rho(\rho-1)}{2(1+\rho)}  \label{eqn153}
\end{align}
from which the result follows using the relationship given in \eqref{eqn63}.

%









\begin{thebibliography}{1}


\bibitem{pliam99} J. O. Pliam. (1999) The Disparity Between Work and Entropy in Cryptology. Available Online: http://philby.ucsd.edu/cryptolib/1998/98-24.html


\bibitem{massey1}  J. L. Massey, “Guessing and entropy,” \textit{in Proc. IEEE Int. Symp. on Information Theory} Trondheim, Norway, 1994, pp. 204.

\bibitem{mceliceyu} R. J. McEliece and Z. Yu, “An inequality on entropy,” \textit{in Proc. IEEE International Symposium on Information Theory}, p. 329,
Whistler, Canada, September 1995.

\bibitem{arikan1996} E. Arikan, “An inequality on guessing and its application to sequential
decoding,” \textit{IEEE Trans. Inform. Theory,} vol. 42, pp. 99–105, Jan. 1996.


\bibitem{Bozdas1997} S. Boztas, “Comments on "An inequality on guessing and its application to sequential decoding"," in \textit{IEEE Transactions on Information Theory,} vol. 43, no. 6, pp. 2062-2063, Nov. 1997.


\bibitem{sason2018} I. Sason and S. Verdú, “Improved Bounds on Guessing Moments via Rényi Measures," \textit{in Proc. IEEE International Symposium on Information Theory} (ISIT), Vail, CO, 2018, pp. 566-570.

\bibitem{cambell56} L. L. Campbell, “A coding theorem and Rényi's entropy." \textit{Information and control} 8.4: 423-429, 1965.

\bibitem{kuzuoka2019} S. Kuzuoka, “On the Conditional Smooth Rényi Entropy and Its Application in Guessing," \textit{in Proc. IEEE International Symposium on Information Theory (ISIT),} Paris, France, 2019, pp. 647-651.

\bibitem{duffy2019} K. R. Duffy, J. Li and M. Médard, “Capacity-Achieving Guessing Random Additive Noise Decoding," in \textit{IEEE Transactions on Information Theory,} vol. 65, no. 7, pp. 4023-4040, July 2019.

\bibitem{bracher2019} A. Bracher, E. Hof and A. Lapidoth, ``Guessing Attacks on Distributed-Storage Systems," in \textit{IEEE Transactions on Information Theory}, vol. 65, no. 11, pp. 6975-6998, Nov. 2019.

\bibitem{Graczyk2022} R. Graczyk, A. Lapidoth, N. Merhav and C. Pfister, "Guessing Based on Compressed Side Information," in \textit{IEEE Transactions on Information Theory}, vol. 68, no. 7, pp. 4244-4256, July 2022.

\bibitem{malone2004} D. Malone and W. G. Sullivan, “Guesswork and entropy," in \textit{IEEE Transactions on Information Theory,} vol. 50, no. 3, pp. 525-526, March 2004.

\bibitem{pfister2004} C. E. Pfister and W. G. Sullivan, "Renyi entropy, guesswork moments, and large deviations," in IEEE Transactions on Information Theory, vol. 50, no. 11, pp. 2794-2800, Nov. 2004,

\bibitem{arslanisit2020} S. S. Arslan and E. Haytaoglu, "Cost of Guessing: Applications to Data Repair," \textit{in Proc. IEEE International Symposium on Information Theory} (ISIT), 2020, pp. 2194-2198. 

\bibitem{arslanisit2022} S. S. Arslan and E. Haytaoglu, "Improved Bounds on the Moments of Guessing Cost," \textit{in Proc. IEEE International Symposium on Information Theory} (ISIT), Espoo, Finland, 2022, pp. 3351-3356.

\bibitem{noroozi2019} M. Noroozi and Z. Eslami, ``Public-key encryption with keyword search:
A generic construction secure against online and offline keyword guessing attacks,’’ J. Ambient Intell. Humanized Comput., 2019.

\bibitem{Song2000} D. X. Song, D. Wagner and A. Perrig, "Practical techniques for searches on encrypted data," \textit{in Proc. IEEE Symposium on Security and Privacy,} S\&P 2000, 2000, pp. 44-55.


\bibitem{Ulam1988} Ivan Niven, ``Coding Theory Applied to a Problem of Ulam.'' Math. Mag. 61 (1988) 275-281. 


\bibitem{ArikanGwLies} E. Arikan and S. Boztas, ``Guessing with lies,'' in IEEE International Symposium on Information Theory, Lausanne, Switzerland, 2002, pp. 208-,. 

\bibitem{duffy2022} K. R. Duffy, M. Médard and W. An, ``Guessing Random Additive Noise Decoding With Symbol Reliability Information (SRGRAND)," in `\textit{IEEE Transactions on Communications,} vol. 70, no. 1, pp. 3-18, Jan. 2022.





\bibitem{solomon2020} A. Solomon, K. R. Duffy and M. Médard, "Soft Maximum Likelihood Decoding using GRAND," \textit{in Proc. IEEE International Conference on Communications} (ICC), 2020, pp. 1-6. 


\bibitem{haytaoglu2022} E. Haytaoglu, E. Kaya and S. S. Arslan, ``Data Repair-Efficient Fault Tolerance for Cellular Networks Using LDPC Codes," in \textit{IEEE Transactions on Communications,} vol. 70, no. 1, pp. 19-31, Jan. 2022.

\bibitem{renyi1960} A. Renyi, “On Measures of Information and Entropy,” Proc. 4th Berkeley Symp. Math.,
Stat. Prob., vol. 1, pp. 547-561, 1960.

\bibitem{LDPC1963} R. Gallager, “Low-density parity-check codes," in \textit{IRE Transactions on Information Theory,} vol. 8, no. 1, pp. 21-28, January 1962.


\bibitem{Bunte2016} C. Bunte and A. Lapidoth, "Maximum Rényi Entropy Rate," in \textit{IEEE Transactions on Information Theory}, vol. 62, no. 3, pp. 1193-1205, March 2016.


\bibitem{lubyandortree1998} M. Luby, M. Mitzenmacher, and A. Shokrollahi, ``Analysis of random
processes via and-or tree evaluation,” \textit{in Proc. 9th Annu. ACM-SIAM
Symp. Discrete Algorithms,} 1998, pp. 364–373.

\bibitem{Le2017} M. Le, Z. Song, Y. Kwon and E. Tilevich, ``Reliable and efficient mobile edge computing in highly dynamic and volatile environments," \textit{Second International Conference on Fog and Mobile Edge Computing} (FMEC), 2017, pp. 113-120.

\bibitem{Amraoui2007} A. Amraoui, A. Montanari, and R. Urbanke, ``How to find good finitelength codes: From art towards science,” in \textit{Europ. Trans. Telecomm.,} vol. 18, pp. 491–508, 2007.

\bibitem{Richardson2001}  T. J. Richardson, M. A. Shokrollahi and R. L. Urbanke, ``Design of capacity-approaching irregular low-density parity-check codes," in \textit{IEEE Transactions on Information Theory}, vol. 47, no. 2, pp. 619-637, Feb 2001. 

\bibitem{lai2016} W. K. Lai and   E. Kim, ``Some inequalities involving geometric and harmonic means," \textit{In International Mathematical Forum,} (2016), Vol. 11, No. 4, pp. 163-169.

\bibitem{park2017ldpc} H. Park, D. Lee and J.  Moon, ``LDPC code design for distributed storage: Balancing repair bandwidth, reliability, and storage overhead," in \textit{IEEE Transactions on Communications,} 66(2), 507-520, 2017.

\bibitem{Dragomir98} S. S. Dragomir and J. v. d. Hoek. ``New inequalities for the moments of guessing mapping," \textit{East Asian Math. Journal}, 14 (1)(1998), pp. 1-14. 

\bibitem{Williams1964algorithm}
J. Williams, ``Algorithm 232: Heapsort," \textit{Communications of the ACM}, 7 (6) (1964), pp. 347–348.

\bibitem{mergeSort}
D. Knuth . ``Section 5.2.4: Sorting by Merging". Sorting and Searching. The Art of Computer Programming." 3 (2nd ed.) (1998), Addison-Wesley. pp. 158–168.
\bibitem{Brocher2015} A. Bracher, E. Hof and A. Lapidoth, ``Guessing Attacks on Distributed-Storage Systems," 2015 IEEE International Symposium on Information Theory (ISIT), Hong Kong, China, 2015, pp. 1585-1589, doi: 10.1109/ISIT.2015.7282723.

\bibitem{Christiansen2015} M. M. Christiansen, K. R. Duffy, F. du Pin Calmon and M. Médard, "Multi-User Guesswork and Brute Force Security,`` in IEEE Transactions on Information Theory, vol. 61, no. 12, pp. 6876-6886, Dec. 2015, doi:
\bibitem{Huleihel2017} W. Huleihel, S. Salamatian and M. Médard, ``Guessing with limited memory," 2017 IEEE International Symposium on Information Theory (ISIT), Aachen, Germany, 2017, pp. 2253-2257, doi: 10.1109/ISIT.2017.8006930.



\bibitem{Kumar2022} M. A. Kumar, A. Sunny, A. Thakre, A. Kumar and G. D. Manohar, ``A Unified Framework for Problems on Guessing, Source Coding, and Tasks Partitioning," 2022 IEEE International Symposium on Information Theory (ISIT), Espoo, Finland, 2022, pp. 3339-3344, doi: 10.1109/ISIT50566.2022.9834851.






\end{thebibliography}
\end{document}